\newcommand{\PA}{\mathrm{pa}}
\newcommand{\CH}{\mathrm{ch}}
\newcommand{\AN}{\mathrm{an}}
\newcommand{\DE}{\mathrm{de}}
\newcommand{\NE}{\mathrm{ne}}
\newcommand{\ADJ}{\mathrm{adj}}
\newcommand{\MB}{\mathrm{mb}}
\newcommand{\edgecirchead}{\mathrlap{\multimapinv}\to}
\newcommand{\edgecirccirc}{\multimapboth}
\newcommand{\edgeheadhead}{\leftrightarrow}
\newcommand{\edgecirctail}{\multimapinv}
\newcommand\moveX[1]{\kern#1ex}
\newcommand\mo[2][c]{%
	\bgroup%
	\setstackEOL{ }%
	\setstackgap{L}{0pt}%
	\Longstack[#1]{#2}%
	\egroup%
}
\newcommand{\edgestarhead}{\mo{\moveX{-1.5}$\star$ \moveX{1}$\to\ $}}
\newcommand{\edgeheadstar}{\mathrel{\reflectbox{$\edgestarhead$}}}
\newcommand{\edgecircstar}{\mo{\moveX{-1}$\ \multimapinv\ $ \moveX{1.5}$\star$}}
\newcommand\independent{\protect\mathpalette{\protect\independenT}{\perp}}
\def\independenT#1#2{\mathrel{\rlap{$#1#2$}\mkern3mu{#1#2}}}
\DeclareMathOperator{\var}{Var}
\newcommand{\norm}[1]{\lVert{#1}\rVert}
\newcommand{\PP}[1]{\mathbb{P}\left\{{#1}\right\}} 
\newcommand{\VV}[1]{\var\left({#1}\right)} 
\def\R{\mathbb{R}}
\def\N{\mathbb{N}}
\setlist[enumerate]{leftmargin=.5in}
\setlist[itemize]{leftmargin=.5in}
\tikzset{every node/.style={circle}, 
	strike through/.append style={
		decoration={markings, mark=at position 0.45 with {
				\draw[-] ++ (-3pt,-3pt) -- (3pt,3pt);}
		},postaction={decorate}}
}
\tikzstyle{c1}=[circle,thick,auto,draw,inner
\tikzstyle{c11}=[circle,thick,auto,draw,minimum size=2mm,fill=lightgray!10]
\tikzstyle{c2}=[circle,thick,auto,draw,inner
\tikzstyle{c3}=[diamond,thick,auto,draw,inner
\tikzstyle{c4}=[rectangle,thick,auto,draw,inner
\tikzstyle{c5}=[diamond,thick,auto,draw,inner
\tikzstyle{k2}=[circle,thick,auto,draw,inner
\theoremstyle{plain}
\newtheorem{theorem}{Theorem}
\newtheorem{lemma}{Lemma}
\newtheorem{assumption}{Assumption}
\newtheorem{definition}{Definition}
\newtheorem{corollary}{Corollary}
\theoremstyle{definition}
\newtheorem{remark}{Remark}
\newcommand{\as}[1]{#1}
\newcommand{\md}[1]{#1}
\newcommand{\changemarker}[1]{#1}
\newcommand{\smallchangemarker}[1]{#1}
\title{Causal Structural Learning Via Local Graphs\thanks{This work was supported by funding from the U.S. National Science Foundation (NSF) under grant DMS-1561814, U.S. National Institutes of Health (NIH) under grant R01GM114029, and the European Research Council (ERC) under the European Union’s Horizon 2020 research and innovation programme (grant agreement No 883818).}}
\author[1]{\href{mailto:Wenyu Chen <wenyuc@uw.edu>?Subject=Your UAI 2021 paper}{Wenyu Chen}{}} 
\author[2]{Mathias Drton}
\author[3]{Ali Shojaie}
\affil[1]{%
    Department of Statistics\\
    University of Washington\\
    Seattle, WA, USA
}
\affil[2]{%
    Department of Mathematics\\
    Technical University of Munich\\
    M\"unchen, Germany
}
\affil[3]{
Department of Biostatistics\\
    University of Washington\\
    Seattle, WA, USA}
\begin{document}
\maketitle
\begin{abstract}
We consider the problem of learning causal structures in sparse high-dimensional settings that may be subject to the presence of (potentially many) unmeasured confounders, as well as selection bias.  
Based on the structure found in common families of large random networks and examining the  representation of local structures in linear structural equation models (SEM), we propose a new local notion of sparsity for consistent structure learning in the presence of latent and selection variables, and develop a new version of the Fast Causal Inference
(FCI) algorithm with reduced computational and sample complexity, which we refer to as local FCI (lFCI). The new notion of sparsity allows the presence of highly connected hub nodes, which are common in real-world networks, but problematic for existing methods. Our numerical experiments indicate that the lFCI algorithm achieves state-of-the-art performance across many classes of large random networks, and its performance is superior to that of existing methods for networks containing hub nodes. 
\end{abstract}

\section{Introduction}\label{sec:intro}
Directed graphical models are commonly used to \md{model} causal relations between random variables in complex systems \citep{gm_handbook}.  In this framework, each random variable is a function of other variables (its causes) and stochastic noise.
The causal relations are represented by a
directed acyclic graph (DAG), with vertices representing random variables, and directed edges representing direct causal effects.  
While different DAGs may yield the same model for observational data, \md{their} Markov equivalence class can be represented by a unique completed partially directed acyclic graph (CPDAG).  
When all relevant variables are observed, a variety of 
techniques exist for learning the CPDAG from observational data;  
one of the most commonly used is the PC algorithm \citep{sprites2000},
which is also the basis
for many other constraint-based and hybrid algorithms
\md{\citep[see, e.g.,][]{tsamardinos2006, ogarrio2016}}.  
The PC algorithm is sound and complete \citep{sprites2000}, and consistent in sparse high-dimensional settings \citep{kalisch2007}.
 
Observational studies often involve latent variables \md{(i.e., variables that remain unmeasured)} as well as selection variables conditional on which the observations are made. Ignoring \md{latent and selection} variables may invalidate causal conclusions.
\md{To account for their presence},
the ancestral relationships and conditional independences among the observed variables can be represented by a maximal ancestral graph (MAG).  \md{As multiple MAGs may} represent the same conditional independences, the target of estimation is the Markov equivalence class of these MAGs, which can be represented by a  partial ancestral graph (PAG) \citep{ali2009}.  

PAGs can be learned from data on the observed  variables using the FCI algorithm \citep{sprites2000}.  The FCI algorithm uses the fact that two nodes $i$ and $j$ are non-adjacent in the  PAG if and only if the corresponding variables are conditionally independent given their D-SEP set ($d$-separation set).  
This D-SEP set is comprised of ancestors that are adjacent or connected via certain collider paths \citep{sprites2000}.  Since the D-SEP sets cannot be inferred directly, the FCI algorithm does not directly estimate the skeleton (i.e., adjacencies) of the PAG. 
Instead, FCI first uses an initial phase of the PC algorithm to obtain a preliminary skeleton, which is a superset of the PAG skeleton. It then uses the PC output to  compute supersets of the D-SEP sets, referred to as p-D-SEP sets (possible-D-SEP sets), and estimates the final skeleton using the p-D-SEP sets.
To infer the skeleton, PC and the second step of FCI both adopt a hierarchical search strategy, wherein edges are removed recursively via tests of conditional independence given subsets of increasingly larger sizes in some search pool (neighbors in  PC, p-D-SEP sets in  FCI); see Section~\ref{sec:lFCI}
for more details.

The FCI algorithm is consistent and complete in high-dimensional settings \citep{zhang2008, colombo2012}, but it is computationally expensive. This is partly because the p-D-SEP sets can be very large, leaving the final skeleton estimation step too many subsets to search amongst. \cite{colombo2012} introduced multiple approaches for narrowing down the p-D-SEP sets, for example, by intersecting the sets with a bi-connected component ($\text{FCI}_{\text{path}}$), or applying conservative ordering rules (CFCI).  They also proposed a fast approximation to the FCI algorithm, called RFCI, which directly estimates the final skeleton along with modified orientations, hence avoiding the computation of the p-D-SEP sets.  The RFCI output,
    in which the presence of an edge between two nodes
    only implies conditional dependence given
   subsets of their neighborhood, 
   is generally less informative than a PAG. 
   To reduce the cost
   of estimating the initial skeleton, an anytime version of FCI was
   proposed in \citet{spirtes2001}, and can be combined with the above
   modifications, but the skeleton it learns is only guaranteed to be 
   a superset of the skeleton learned using FCI, and is therefore less informative. 
    \cite{claassen2013} proposed FCI+, based on an alternative construction of 
   p-D-SEP sets.  
   For networks with bounded maximal node degree, FCI+ has
   polynomial complexity in the number of nodes.  

   The outlined  existing  versions  of  the  FCI  algorithm  all  follow  a \textit{neighborhood-based  search  strategy}, in the sense that they
   search for separating sets  among neighbors (in the PC step) and  extended neighbors (in the p-D-SEP step).
The computational and sample complexity of such
neighborhood-based methods (including also the PC algorithm) \as{scales
with the size of the largest separator, which often scales with the} 
maximum node degree of the graph; this is problematic in the presence of highly connected \textit{hub nodes}, i.e., nodes with large degrees. 
Hub nodes abound in many real-world systems, such as biological networks and the Web \citep{chen2004, Kleinberg1999}.  These networks are well-approximated by the family of power-law graphs, which have unbounded maximum degree \citep{Kleinberg1999}.

Instead of relying on the common sparsity assumption via bounded maximum node degree, in this paper we exploit a \textit{local-separation property} that holds for many large random networks.  
\as{While this property --- which also holds for power-law graphs containing
hub nodes \citep{malioutov2006} --- does not restrict the total number of paths between every pair of nodes, it 
implies a small number of \emph{short paths} between them.}
The notion is illustrated in Figure~\ref{fig:ex1} and is formalized in Section~\ref{sec:localseparation}.  
The property motivates us to switch the focus from D-SEP sets to {\em local} D-SEP sets --- that is, separators with respect to short paths.
Accordingly, we shift from the neighborhood-based search strategy to a \textit{local-graph-based search strategy}.  
\as{Under the local-separation property, together with an alternative set of assumptions discussed in Section~\ref{sec:analysis}, this strategy enjoys reduced computational and sample complexity.
}

\md{Concretely, in this work,} we propose a new \emph{local FCI} (lFCI) algorithm for structure learning in the presence of  latent and selection variables. 
The lFCI algorithm 
learns the skeleton of a PAG by testing conditional independences between pairs of nodes $(i,j)$ given sets of small cardinality. However, in contrast to other algorithms, the conditioning sets are selected only from the nodes that are within short distance (therefore, local) to $\{i,j\}$. By doing so, 
\changemarker{under a different set of assumptions than those considered for FCI},
lFCI can learn  
networks with $p$ nodes and 
$O(p^a)$  maximal node degree ($a>1$) in polynomial computational and sample complexity --- in such cases, \md{the complexity of FCI is} exponential in $p$. 

A similar idea has recently been employed in the \emph{reduced PC} (rPC) algorithm \citep{arjun2018}. In the setting without latent or selection variables, rPC may offer reduced computational and sample complexity compared to \md{PC}. 
\md{However, latent and selection variables as considered here pose new challenges that cannot be addressed by simply replacing the PC steps in FCI with rPC steps. This is because rPC uses a notion of local separation in undirected graphs that does not naturally extend to  mixed graphs.  In addition, while rPC justifies focus on small conditioning sets through a local perspective, it does not follow the local-graph-based strategy adopted in our new lFCI.  As a result, for any pair of nodes, rPC searches for separating sets among all other $p-2$ nodes, which can be computationally prohibitive.
Beyond a reduction in computational cost, our local-graph-based strategy leads to high-dimensional consistency under less restrictive assumptions than those in \cite{arjun2018}. 
}

\md{The paper begins with some preliminaries in Section~\ref{sec:prelims}, after which} we introduce graph-theoretic results on local separation in Section~\ref{sec:localseparation}. 
The lFCI algorithm is presented in Section~\ref{sec:method}   and its consistency for linear SEM is  established in Section~\ref{sec:analysis}. We illustrate the performance of lFCI through a simulation study in Section~\ref{sec:experiments}
and a real data application in Section~\ref{sec:application}. 

\begin{figure}[t]
		\centering
		\begin{tikzpicture}[
		> = stealth, 
		shorten > = 1pt, 
		auto,
		node distance = 1.5cm, 
		semithick, 
		scale=.9
		]
		
		\tikzstyle{every state}=[
		draw = black,
		thick,
		fill = white,
		minimum size = 4mm
		]
		
		\node[c1] (i) at(-0.5,0) {$i$};
		\node[c1] (j) at(4.5,0) {$j$};
		
		\node[c11] (p11) at(1,0) {};
		\node[c11] (p12) at(2,0) {};
		\node[c11] (p13) at(3,0) {};

		\node[c11] (p21) at(1,-0.7) {};
		\node[c11] (p22) at(3,-0.7) {};
		
		\node[c11] (p41) at(0.5,0.7) {};
		\node[c11] (p42) at(1.25,0.7) {};
		\node[c11] (p43) at(2,0.7) {};
		\node[c11] (p44) at(2.75,0.7) {};
		\node[c11] (p45) at(3.5,0.7) {};
		
		\draw[->,line width= 1] (i) -- (p11);
		\draw[->,line width= 1] (i) -- (p21);
		\draw[->,line width= 1] (i) -- (p41);
		
		\draw[->,line width= 1] (p11) -- (p12);
		\draw[->,line width= 1] (p12) -- (p13);
		\draw[->,line width= 1] (p13) -- (j);
		
		\draw[->,line width= 1] (p21) -- (p12);
		\draw[->,line width= 1] (p12) -- (p22);
		\draw[->,line width= 1] (p22) -- (j);
		
		\draw[->,line width= 1] (p22) -- (j);
		
		\draw[->,line width= 1] (p41) -- (p42);
		\draw[->,line width= 1] (p42) -- (p43);
		\draw[->,line width= 1] (p43) -- (p44);
		\draw[->,line width= 1] (p44) -- (p45);
		\draw[->,line width= 1] (p45) -- (j);

		\draw (1.7,0.3) -- (2.3,0.3) -- (2.3,-0.3) -- (1.7,-0.3) -- cycle;
		
		\end{tikzpicture}
		\caption{Illustration of a $\gamma$-local separator between $i$ and $j$ (shown in box) with $\gamma=4$. There are 4 short paths between $i$ and $j$, and 
		the \changemarker{local separator is not a d-separator.}}
		\label{fig:ex1}
	\end{figure}
  

\section{Preliminaries}\label{sec:prelims}
Let $G=(V,E)$ be a graph with vertex set $V$ and
edge set $E$. We only consider graphs that are \textit{simple} \md{(i.e.,} there is at most one edge between any pair of nodes) and free of self-loops \md{(i.e.,} each edge joins two distinct nodes). 
We allow three types of edge marks (head, tail and circle) and six types of edges: directed ($\to$), bi-directed ($\edgeheadhead$), undirected ($-$), nondirected ($\edgecirccirc$), partially undirected ($\edgecirctail$), and partially directed ($\edgecirchead$).  A star $\star$ denotes an \smallchangemarker{arbitrary} mark on an edge; e.g., $\edgestarhead$ represents an edge of type $\to$, $\edgeheadhead$, or $\edgecirchead$ in the graph.  
 
Our terminology follows standard conventions in graphical modeling \citep[cf.][]{gm_handbook}.
In particular, \md{a graph $G$} is directed (or undirected) if it contains only directed (or only undirected) edges. \md{A mixed graph may contain}  directed, bi-directed or undirected edges. 
The \textit{skeleton} $\text{skel}(G)$ is \md{the} undirected graph with the same adjacencies \md{as $G$}.
A \textit{path} is a sequence of distinct vertices, where each pair of consecutive vertices is \textit{adjacent}, i.e., linked by an edge. A path of \textit{length} $n$ has $n+1$ vertices (i.e., $n$ edges).
A \textit{directed path} is a path along directed edges following the
arrowheads.  Adding a directed edge back to the first node
gives a \textit{directed cycle}.  A \textit{directed acyclic graph (DAG)} is a directed graph without directed cycles.
If a graph $G$ contains the edge $k\to j$, then $k$ is a \textit{parent} of its \textit{child} $j$.  
If it contains a directed path $k \rightarrow \cdots \rightarrow j$, then $k$ is an \textit{ancestor} of its \textit{descendant} $j$.  
The sets of parents, children, ancestors and descendants of $j$ are denoted $\PA(G,j)$, $\CH(G,j)$, $\AN(G,j)$ and $\DE(G,j)$, respectively.  
\md{We allow trivial paths, so that $j\in\AN(G,j)$ and $j\in\DE(G,j)$, but $j\notin\PA(G,j)$ and $j\notin\CH(G,j)$ as we exclude self-loops}. 
A triple of vertices $(i,j,k)$ is \textit{unshielded} if $j$ is
adjacent to both $i$ and $k$, but $i$ and $k$ are not adjacent.  A
non-endpoint vertex $j$ on a path $\pi$ is a \textit{collider} on the
path if the edges preceding and succeeding it both have arrowheads
at $j$.  \md{Otherwise, $j$ is a \textit{non-collider} on $\pi$.}
\md{A \textit{v-structure} is an unshielded triple $(i,j,k)$ with $j$ as collider.}
The \md{\textit{neighborhood}} $\ADJ(G,i)$ is comprised of all nodes $j$ adjacent to  $i$ in $G$. \md{Its size $|\ADJ(G,i)|$ is the \textit{degree} of  $i$.}  The maximal degree of any vertex  \smallchangemarker{is denoted}  by $d_{\max}(G)$.  When clear from the context, we will drop the indication of the graph $G$, writing, e.g., $d_{\max}$ or $\AN(i)$ only.

Consider a DAG $G$ whose vertex set 
is partitioned as $V = X\cup L\cup Z$, 
where $X$ indexes 
observed random variables,  
$L$ indexes latent variables and $Z$ indexes selection variables.  As shown by \cite{richardson2002},  $G$ can be transformed into a unique maximal ancestral graph (MAG) $G^*$ with vertex set $X$ such that $G^*$ retains the $m$-separation properties in $G$.
\changemarker{The notion of  $m$-separation, defined below, generalizes $d$-separation to MAGs.}
An ancestral graph is a mixed graph 
that contains no directed cycles or
almost directed cycles (i.e., 
cycles formed by a directed path and 
a bidirected edge), and no subgraph of the type $i-j\edgeheadstar k$.
\md{Let $\text{un}(G)$ be the set of vertices in $G$ that have no parents and are also not incident to a bidirected edge. Then in an ancestral graph, $\text{un}(G)$ induces an undirected subgraph that contains all undirected edges of $G$.}  
An ancestral graph is a MAG if two vertices are non-adjacent only if 
they can be $m$-separated, i.e., all paths between them can be blocked in the following sense.

\begin{definition}[$m$-separation]
	A set $Y$ blocks a path $\pi$ in an ancestral
	graph if and only if:
	\begin{enumerate}
		\item $\pi$ contains a triplet
		$(i, j, k)$
		such that 
                $j$ is a
		non-collider on this path and $j \in Y$; or

		\item $\pi$ contains a v-structure $i\edgestarhead j \edgeheadstar k$ such that $j \notin  Y$ and no descendant of $j$ is in $Y$.
	\end{enumerate}
	If a path $\pi$ from vertex $i$ to vertex $j$ is not blocked by $Y$, then $\pi$ is also said to  $m$-connect $i$ and $j$ given $Y$.  If $Y$ blocks every path
between $i$ and $j$, then $i$ and $j$ are  $m$-separated given $Y$. 
\end{definition}
\changemarker{If $i$ and $j$ are $m$-separated by as set $S$, we say $S$ is a $m$-separator of $i,j$. Moreover, we say $S$ is \textit{minimal} if it has the smallest cardinality among all $m$-separators of $i,j$.}
The MAGs that have the same set of $m$-separation relations form a Markov equivalence class. We denote the Markov equivalence class of
$G^*$ as $[G^*]$. 
We say an edge mark is \textit{invariant} in $[G^*]$ if it is the same in all members of $[G^*]$.
The Markov equivalence class  
can be
represented by a \emph{partial ancestral graph} (PAG), $H$,  
with three types of edge marks (head, tail and circle) that has the same adjacencies as $G^*$, and   
each non-circle edge mark in $H$ is an invariant mark in $[G^*]$. 
For a given MAG, there may be more than one PAG that represents its Markov equivalence class. 
However, there is a unique PAG that is  \textit{maximally informative} 
in the sense that every non-circle edge mark is invariant, and every circle edge mark is variant. Alternatively,  PAGs can be characterized as in Definition 1 in \md{the Supplementary Material.}

\md{Given a vertex set $V$, a formal conditional independence statement is a triple denoted $A\independent B|C$, where $A,B,C\subset V$ are non-empty and pairwise disjoint.
The independence model defined by a MAG $G$, denoted  $\mathcal{I}(G)$, is the set of \md{formal} conditional independence statements $A\independent B|C$ for which 
$A$ and $B$ are 
$m$-separated by $C$ in $G$.}
A probability distribution $P$ \md{obeys the model $\mathcal{I}(G)$ if all formal statements in $\mathcal{I}(G)$ are also probabilistic conditional independences in $P$.  Such a distribution $P$}
is {\em faithful} to $G$ if the conditional independence relations in $P$ are exactly the same as  $\mathcal{I}(G)$.
If the distribution $P$ over $X\cup L\cup Z$ is faithful to a DAG, and $G$ is the MAG obtained by conditioning on $Z$ and marginalizing $L$, then the absence of an edge between $i$ and $j$ in $G$ implies that there exists some set $Y\subseteq X\setminus\{i,j\}$ such that $i\independent j |Y\cup Z$, and the presence of an edge between $i$ and $j$ implies $i\not\independent j|Y\cup Z$ for all $Y\subseteq X\setminus\{i,j\}$.

\section{Local Separation in Large Random Graphs}\label{sec:localseparation}


\subsection{Local separation and local paths}
Our algorithm, presented in Section~\ref{sec:method}, is \md{based on a \textit{local separation property} that holds for many common networks.  The property yields} that short $m$-connecting paths between non-adjacent nodes can be blocked by small sets. 
A sufficient condition for the local separation property is the \emph{local path property}, 
\changemarker{which involves a length  and a path count parameter.  
These will be specified later for specific graphs.}
\begin{definition}[$(\eta,\gamma)$-local path property]\label{def:localpathproperty}
	Let $\eta,\gamma\ge 1$ be integers.  An undirected graph $G$ satisfies the $(\eta,\gamma)$-local path
	property if  for any two non-adjacent nodes, there are at most $\eta$
	paths between them with length no longer than $\gamma$. 
\end{definition}

\as{While it does not restrict the total number of paths, }
the $(\eta,\gamma)$-local path property implies that the number of short paths between non-adjacent nodes is  bounded. 
\changemarker{ 
Many random graph processes 
generate sequences of undirected graphs with increasing vertex set size $p$ that (for process-specific $\eta$ and $\gamma$) satisfy the $(\eta,\gamma)$-local-path property with probability tending to 1 as $p\to\infty$.
Examples include Erd{\H{o}}s-Renyi graphs \citep{bollobas2001, anandkumar12a}, power-law random graphs
with strongly finite mean \citep{chung2006, dembo2008, dommers2010}, and $\Delta$-regular random graphs \citep{mckay2004}.  
Surprisingly, in all these cases, the constants can be chosen as $\eta=2$ and $\gamma=O(\log p)$}.  


Local separation \md{in} undirected graphs does not naturally extend to directed  and mixed graphs, since $m$-separation is not implied by undirected graph separation. To overcome this issue, \citet{arjun2018} consider a directed ``local separator" between two nodes as the smallest set that blocks all short $d$-connecting paths between them. 
However, this ``local separator" may unblock a large number of long $d$-connecting paths (see Figure~\ref{fig:localpathvsgraph}).
This is particularly problematic when edges amongst ``local separator" nodes form \md{dense} structures. 
As a consequence, \md{consistency of}  rPC requires strong assumptions on the data-generating distributions (under which the ``local separators" act like true separators).  
In this work, we mitigate this limitation by focusing instead on subgraphs induced by nodes on short paths.
\begin{definition}[Local graph]\label{def:localgraph}
	For a  graph $G=(V,E)$ and two nodes $i,j\in V$, 
	\md{let $P(G,i,j)$  be the set of all  paths between $i$ and $j$, 
	and let $P_\gamma(G,i,j)$ be the set of those that are not longer than $\gamma$.
	The \textit{$\gamma$-local graph of $\{i,j\}$}, denoted $G_\gamma(i,j)$, is the subgraph of $G$ 
	induced by the set $V_\gamma(i,j)=\{v\in V: v\in \pi \text{ for some }\pi\in P_\gamma(G,i,j)\}$.
	}
\end{definition}
\md{The motivation for our definition is that subgraphs better capture causal relations than subsets of paths.  Moreover,} \changemarker{
 $m$-separators in local graphs are interpretable.}

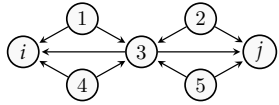
\begin{figure}[t]
	\centering
	\begin{tikzpicture}[
	> = stealth, 
	shorten > = 1pt, 
	auto,
	node distance = 1cm, 
	semithick, 
			scale=.8
	]
	\node[c1] (i) at(-.5,0) {$i$};
	\node[c1] (j) at(4.5,0) {$j$};
	\node[c1] (1) at(.75,0.7) {1};
	\node[c1] (2) at(3.25,0.7) {2};
	\node[c1] (3) at(2,0) {3};
	\node[c1] (4) at(.75,-0.7) {4};
	\node[c1] (5) at(3.25,-0.7) {5};
	\draw[->] (1) -- (i);
	\draw[->] (1) -- (3);
	\draw[->] (2) -- (3);
	\draw[->] (2) -- (j);
	\draw[->] (3) -- (i);
	\draw[->] (3) -- (j);
	\draw[->] (4) -- (i);
	\draw[->] (4) -- (3);
	\draw[->] (5) -- (3);
	\draw[->] (5) -- (j);
	\end{tikzpicture}
	\caption{\md{A graph $G$ with $G=G_\gamma(i,j)$ for $\gamma=3$; every node lies on a path of length at most three between $i$ and $j$. The set  $\{2,3,5\}$ is a $\gamma$-local-graph separator.}
\md{In contrast, the definition of 
\citet{arjun2018} makes the set $\{3\}$ a $\gamma$-``local separator" of $(i,j)$, although there are 4 ``local but not short" paths (e.g., $i-1-3-2-j$) that are d-connected given $\{3\}$.}
}\label{fig:localpathvsgraph}
\end{figure}

\begin{definition}[$\gamma$-local-graph  separator]\label{def:localseparator}
	\md{Let $i,j$ be two nodes in a MAG $G=(V,E)$. A $\gamma$-local-graph separator of $(i,j)$ is a subset 
	$S\subset V_\gamma(i,j)$ 
	that
	$m$-separates $i$ and $j$ in $G_\gamma(i,j)$. }
\end{definition}
\changemarker{A local-graph separator is a genuine $m$-separator in the subgraph: It blocks not only short, but also long paths in the local graph; see  Figure~\ref{fig:localpathvsgraph}.
However, a set that $m$-separates $i,j$ in $G_\gamma(i,j)$ does not necessarily $m$-separate them in $G$
(see Figure~\ref{fig:ex1}). Hence,  local-graph separations are not necessarily reflected in the  independence model $\mathcal{I}(G)$.
Nevertheless, the next result shows the existence of local-graph separators is equivalent to absence of edges.
\begin{lemma}\label{lem:localsep}
Let $G=(V,E)$ be a MAG, and let $\gamma\ge 1$ be any integer.
Two nodes $i,j$ are non-adjacent in $G$ if and only if they are $\gamma$-local-graph separated
by some set $S_\gamma$.
\end{lemma}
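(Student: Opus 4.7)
The proof is a double implication, and I would dispatch the easier direction first. For the implication that adjacency of $i,j$ in $G$ precludes a $\gamma$-local-graph separator, I argue by contrapositive: if $i$ and $j$ are joined by an edge in $G$, that edge is a path of length $1 \le \gamma$, so $i,j \in V_\gamma(i,j)$ and the edge appears in the induced subgraph $G_\gamma(i,j)$. Since this single-edge path has no intermediate vertex, neither clause of the $m$-separation definition can apply to block it, so no subset of $V_\gamma(i,j)$ can $m$-separate $i$ from $j$ in $G_\gamma(i,j)$.

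For the other direction, I would use that non-adjacency of $i,j$ in the MAG $G$ implies the existence of some $S \subseteq V \setminus \{i,j\}$ that $m$-separates $i$ from $j$ in all of $G$. My candidate local-graph separator is then $S_\gamma := S \cap V_\gamma(i,j)$, and the goal is to show that $S_\gamma$ blocks every path $\pi$ between $i$ and $j$ in $G_\gamma(i,j)$. Since $G_\gamma(i,j)$ is an induced subgraph of $G$, any such $\pi$ is also a path in $G$ and is therefore blocked by $S$ under one of the two clauses of $m$-separation. Moreover, the collider versus non-collider status of each interior vertex $v$ of $\pi$ depends only on the two edges of $\pi$ incident to $v$, which are present in both $G$ and the induced subgraph, so that status is preserved under passage to $G_\gamma(i,j)$.

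I would then transfer each blocking clause to the subgraph. The non-collider clause is immediate: if $v \in S$ is a non-collider on $\pi$, then $v$ is a vertex of $\pi$, hence $v \in V_\gamma(i,j)$, so $v \in S_\gamma$. The v-structure clause is the step that needs care: here I would invoke the fact that every directed path out of $v$ in $G_\gamma(i,j)$ is a directed path in $G$, which yields the monotonicity $\DE(G_\gamma(i,j), v) \subseteq \DE(G, v)$. Hence if $v \notin S$ and no $G$-descendant of $v$ lies in $S$, then $v \notin S_\gamma$ and no $G_\gamma(i,j)$-descendant of $v$ lies in $S_\gamma$, so the v-structure still blocks $\pi$ in the subgraph.

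The main obstacle I anticipate is exactly this v-structure case: at first glance, shrinking $G$ down to the induced subgraph $G_\gamma(i,j)$ could be feared to alter which colliders are open and which are closed, potentially turning a previously blocked path into an $m$-connecting one. The descendant monotonicity above shows that the direction is favorable --- fewer descendants in the subgraph can only preserve the blocking of a v-structure, never break it --- so no path of $G_\gamma(i,j)$ escapes, and the argument closes.
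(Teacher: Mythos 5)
Your proposal is correct and follows essentially the same route as the paper: use MAG maximality to obtain a full-graph $m$-separator $S$, take $S_\gamma = S \cap V_\gamma(i,j)$, and check that blocking transfers to the induced local graph (the paper states this transfer tersely, whereas you spell out the non-collider and descendant-monotonicity details, and you also make the ``trivial'' adjacency direction explicit).
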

\begin{proof}
By definition of MAGs, nodes $i,j$ are non-adjacent in $G$ if and only if there exists a set $S$ that blocks all paths between $i,j$ in $G$. If such a set exists, we write  $S_\gamma=S\cap V_{\gamma}(i,j)$. Since $S\setminus S_\gamma$ is disjoint from $V_\gamma(i,j)$, all paths between $i$ and $j$ in the subgraph $G_\gamma(i,j)$ must be blocked by $S_\gamma$, and hence  $S_\gamma$ is a local separator. The other direction is trivial. 
\end{proof}
Though local-graph separators might be less informative than $m$-separators in the full graph, they  usually have bounded size even in graphs with unbounded node degrees. 
Local separators can be useful in such cases, as the minimal $m$-separators in the full graphs might be large}.

\subsection{Graphs with small local-graph separators} 
The computational and sample complexities of  PC and  FCI algorithms depend on the maximum size of 
$m$-separators in the underlying graph. 
\md{The applicability of the algorithms
is thus limited to settings where the neighborhood-based D-SEP sets are all small.
However, this does usually not hold in the presence of nodes with large degree (cf.~Appendix~\ref{sec:simoracle}), which are common to  many real-world networks.}
Nevertheless, such networks are often still sparse, in the sense of having small $\gamma$-local-graph separators. The algorithm proposed in this paper exploits this weaker notion of sparsity to offer sample and computational complexities that depend only on the maximum size of $\gamma$-local-graph separators, i.e., on 
\md{
\[
L(G,\gamma) = \max_{(i,j)\notin E}\min_{S\in \mathcal{S}_\gamma(i,j)} |S|,
\]
where $\mathcal{S}_\gamma(i,j)$ is the set of all $\gamma$-local separators of nodes $i,j$ in a mixed graph $G=(V,E)$.}


\md{If a DAG $G$ has 
maximal in-degree 
at most $\Delta>0$, }
then for  arbitrary $\gamma\in\N$, it holds that 
$
    L(G,\gamma)\leq \Delta. 
$
We next show that a similar upper bound holds  more generally, as long as the DAG satisfies the local-path property, which allows for potentially unbounded node degrees. 
\begin{lemma}\label{lem:sepsetsize}
	If  the skeleton of a DAG $G$ satisfies the $(\eta,\gamma)$-local path property,
	then it holds that $L(G,\gamma)\leq\eta.$
\end{lemma}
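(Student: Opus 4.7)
The plan is to exhibit, for each pair of non-adjacent nodes $(i,j)$ in $G$, a $\gamma$-local-graph separator of cardinality at most $\eta$. Since $G_\gamma(i,j)$ is an induced subgraph of the DAG $G$, it is itself a DAG, so $m$-separation in $G_\gamma(i,j)$ reduces to $d$-separation, and we can appeal to the local Markov property.

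The first step is to bound the number of neighbors of $i$ inside $G_\gamma(i,j)$. If $v \in \ADJ(G_\gamma(i,j), i)$, then $v \in V_\gamma(i,j)$, so $v$ lies on some short skeleton path $\pi$ between $i$ and $j$. Concatenating the edge $i$--$v$ with the portion of $\pi$ from $v$ to $j$ yields a walk from $i$ to $j$ of length at most $\gamma$, whose vertex-repetition reduction produces a distinct short skeleton path beginning with the edge $i$--$v$. Different neighbors of $i$ therefore give rise to distinct short paths, and the $(\eta,\gamma)$-local path property yields $|\ADJ(G_\gamma(i,j), i)| \leq \eta$, and by symmetry the same bound for $j$.

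The second step uses the acyclicity of $G_\gamma(i,j)$: at least one of $j \notin \DE(G_\gamma(i,j), i)$ or $i \notin \DE(G_\gamma(i,j), j)$ holds. Assume without loss of generality the former. By the local Markov property applied within the DAG $G_\gamma(i,j)$, the set $\PA(G_\gamma(i,j), i)$ $d$-separates $i$ from every non-descendant, and in particular from $j$. Since $\PA(G_\gamma(i,j), i) \subseteq \ADJ(G_\gamma(i,j), i)$, this separator has size at most $\eta$, and lies in $V_\gamma(i,j)$ by construction, giving a valid $\gamma$-local-graph separator. Taking the maximum over all non-adjacent $(i,j)$ yields $L(G,\gamma) \leq \eta$.

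The main subtlety, and really the only point that requires care, is step one: the bound on the degree in the local graph. Naively one might fear that a neighbor of $i$ inside $G_\gamma(i,j)$ need not sit on a short path starting with the edge $i$--$v$; the short-cut/walk-to-path argument above is what rules this out and ties the local degree back to the path-count parameter $\eta$. Everything else is a direct appeal to standard DAG separation theory applied to the induced subgraph.
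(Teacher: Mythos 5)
Your proposal is correct and follows essentially the same route as the paper's proof: bound $|\ADJ(G_\gamma(i,j),i)|$ by $\eta$ because each incident edge extends to a distinct short path to $j$, then use acyclicity to take $\PA(G_\gamma(i,j),i)$ (or, by symmetry, the parents of $j$) as a $d$-separator in the local graph via the local Markov property. Your write-up merely spells out two steps the paper leaves implicit (the edge-to-short-path concatenation and the explicit appeal to the local Markov property), so there is nothing further to add.
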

\begin{proof} \md{
Fix any non-adjacent nodes $i$ and $j$. 
Since $G$ is acyclic, two nodes cannot be ancestors of each other in $G$ or any subgraph. 
Without loss of generality, 
suppose $i\notin \AN(G,j)$, and let 
$S = 
\PA(G_\gamma(i,j),i)
$. 
Then paths between $i$ and $j$ in graph $G_\gamma(i,j)$ either have a non-collider included in $\PA(G_\gamma(i,j),i)$, or have a collider that is not in $\PA(G_\gamma(i,j),i)$. Thus, 
$\PA(G_\gamma(i,j),i)$  
is a $d$-separator.
Since $G_\gamma(i,j)$ is induced by \smallchangemarker{$P_\gamma(G,i,j)$}, for each $v\in \NE(G_{\gamma}(i,j),i)$, the edge $(i,v)$ 
must lie on at least one short path between $i$ and $j$.
Therefore, $\eta\geq 
|\NE(G_{\gamma}(i,j),i)|\geq|S|$.}
\end{proof}
Lemma~\ref{lem:sepsetsize} suggests that in many large random DAGs,
the maximum node-degree of the local-path graph can be small 
while the maximum node-degree may be large.
As a corollary to the construction we employed in the proof, the ``approximation" of rPC in \cite{arjun2018} is in fact an exact algorithm if the local path property holds (see Corollary~\ref{cor:consistencyrpc}). 


The problem is more complicated for \md{MAGs, which may have large minimal separators even with bounded degree (with bidirected edges $m$-separation of non-adjacent nodes generally requires consideration of non-neighboring nodes).}
Thus, compared with PC, the FCI theory requires an additional assumption on the size of 
possible-$d$-separation sets. The theory for RFCI imposes a similar limit on the size of separators in the initial step \citep{colombo2012}.
\md{The FCI+ theory exploits an assumption of bounded node degree to avoid additional assumptions on the size of  bidirected components} \citep{claassen2013}. 
\changemarker{However, these results either prohibit the existence of generic hub nodes with large degrees or have sub-par sample and computational complexities}.
\as{In contrast, the size of minimal local-separation sets is determined by short paths in $G_\gamma(i,j)$. Thus, by utilizing local-graph separators, and under the additional assumptions discussed in Section~\ref{sec:analysis}, our framework achieves improved computational and sample complexity.} 
Next we show that as long as the skeleton of a MAG has small number of short paths, the size of the \changemarker{local-}separators is controlled: 
\begin{lemma}\label{lem:sepsetsizeMAG}
	If  the skeleton of a MAG $G$ satisfies the $(\eta,\gamma)$-local path property
	with $\eta\leq 3$,
	then 
		$L(G,\gamma)\leq\eta. $
\end{lemma}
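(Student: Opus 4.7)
Plan: Fix non-adjacent nodes $i$ and $j$ in $G$ and work in the induced local graph $H := G_\gamma(i,j)$. By Lemma~\ref{lem:localsep}, some $\gamma$-local-graph separator of $(i,j)$ exists; the goal is to exhibit one of size at most $\eta$. The starting point, borrowed directly from the proof of Lemma~\ref{lem:sepsetsize}, is that the $(\eta,\gamma)$-local path property forces $|\NE(H,i)| \le \eta$ and $|\NE(H,j)| \le \eta$, since every neighbor of $i$ (resp.\ $j$) in $H$ must be the first vertex on some short path between $i$ and $j$, and there are at most $\eta$ such paths.

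The plan is to construct $S$ path by path. Enumerate the short paths $\pi_1, \ldots, \pi_k$ in $H$, where $k \le \eta \le 3$, and for each $\pi_\ell$ add at most one blocking vertex to $S$. If $\pi_\ell$ contains an internal non-collider (a vertex with at least one incident tail on $\pi_\ell$), add one such non-collider to $S$; otherwise every internal vertex $v$ of $\pi_\ell$ is a collider and $\pi_\ell$ is automatically blocked at $v$, provided $\DE(H,v) \cap S = \emptyset$. Summed over the at most $\eta$ paths, this yields $|S| \le \eta$.

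The main obstacle, and the reason the hypothesis $\eta \le 3$ appears in the statement, is the consistency of these path-by-path choices in a MAG. A non-collider $v$ selected for one path can lie in $\DE(H,c)$ for a collider $c$ on another (``all-collider'') path, re-opening that path at $c$ and threatening to force additional vertices into $S$. With at most three short paths, one can enumerate the handful of structural configurations---classified by how many of the paths admit an internal non-collider---and argue in each case that the required non-colliders can be chosen outside the descendant sets of colliders on the all-collider paths; the ancestral property of $H$ (so that one of $i,j$ fails to be an ancestor of the other, WLOG $j \notin \AN(H,i)$) aids this case analysis by preventing cyclic descendant relations. A final step is to verify that $S$ also blocks paths in $H$ longer than $\gamma$; since every path from $i$ in $H$ must begin at a vertex of $\NE(H,i)$ that is already associated with a short path, the block at the corresponding chosen vertex carries over to such longer paths.
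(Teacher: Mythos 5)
Your opening observation ($|\NE(G_\gamma(i,j),i)|\le\eta$, giving a one-vertex-per-short-path budget) is fine, but the core of your argument has a genuine gap: a $\gamma$-local-graph separator must $m$-separate $i$ and $j$ in the \emph{induced} subgraph $G_\gamma(i,j)$, not merely block the at most $\eta$ short paths. The induced local graph generally contains additional edges (chords between vertices of different short paths) and hence many paths longer than $\gamma$, all of which must be blocked; this is exactly the distinction drawn in Figure~\ref{fig:localpathvsgraph} between blocking short paths and genuine local-graph separation. Your closing argument for these long paths --- that any path out of $i$ begins at a neighbor ``associated with'' some short path $\pi_\ell$, so the block ``carries over'' --- does not work: a long path that shares only its first edge with $\pi_\ell$ need not pass through the vertex you chose to block $\pi_\ell$, and even if it does, that vertex may be a collider rather than a non-collider on the long path, so conditioning on it can open rather than block it. Relatedly, the collider/descendant consistency problem you flag for the short paths themselves is not resolved but deferred to an unspecified enumeration; classifying cases only by how many of the paths admit an internal non-collider ignores precisely the descendant relations and chord edges inside $G_\gamma(i,j)$ that make the statement delicate, and you do not show that a conflict-free choice of one blocking vertex per path always exists.

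The paper's proof is structured differently: taking WLOG $i\notin\AN(G,j)$, it analyzes the whole local graph via its distance layers $N_k$ from $i$, bounds the number of between- and within-layer edges at each level by $\eta$ (a consequence of the local path property), uses the MAG property that non-adjacent vertices admit no inducing path, and then enumerates the possible configurations of $G_\gamma(i,j)$ in the extended neighborhood of $i$, exhibiting an $m$-separator of size at most $\eta$ for each configuration --- i.e., the separator is constructed from the structure around $i$, not path by path. To repair your approach you would need, at minimum, an argument that covers every path of the induced local graph (not only the $\le\eta$ short ones) and an explicit, exhaustive case analysis establishing that a consistent choice of blocking vertices exists; as written, the proposal is a plan whose two critical steps are, respectively, argued incorrectly and left open.
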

\begin{proof}
This lemma is proved by enumeration of all possible graph configurations. Details are given in  Appendix~\ref{sec:appendix}. 
\end{proof}

Lemma~\ref{lem:sepsetsizeMAG} covers  most common random graphs. For instance, for Erd{\H{o}}s-Renyi graphs, power-law graphs with strongly finite mean, and $\Delta$-regular graphs, it holds with $\eta\leq2$. 
As we discuss in Section~\ref{sec:method}, for these graphs, we only need to search for separators of size up to 2. 

Our next result shows that
\md{we can further reduce the size of separator
by restricting focus on pairs of nodes
in \emph{local Markov blankets}.} The 
\emph{Markov blanket} of node $i$ in a MAG $G$, denoted $\MB(G,i)$, \changemarker{is the minimal set of vertices that separates $i$ from all other vertices}.
Concretely, it  is the union of vertices connecting to $i$  through either an edge, or a \md{collider path} (i.e., a path on which all non-endpoints are colliders). The $\gamma$-local 
Markov blanket 
$\MB_\gamma(G,i)$ is the union of vertices connecting to $i$  through either an edge or 
a \md{collider path of length at most} $\gamma$.

\begin{lemma}\label{lem:mb}
	Let $G=(V,E)$ be a MAG, and define 
    \[
        L^{\text{mb}}(G,\gamma) = \max_{(i,j)\notin E, i\in \MB_\gamma(G,j)} \;\min_{S\in \mathcal{S}_\gamma(i,j)}|S|.
    \]
	If the skeleton of $G$ satisfies the $(\eta,\gamma)$-local path property with $\eta\leq 4$, then
	\[
	L^{\text{mb}}(G,\gamma)\leq \max(0,\eta-1).
	\]
\end{lemma}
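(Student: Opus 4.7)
My plan is to follow the enumerative template of Lemma~\ref{lem:sepsetsizeMAG} while exploiting the additional structure granted by the local Markov-blanket assumption. Since $(i,j)\notin E$ and $i\in\MB_\gamma(G,j)$, the definition of $\MB_\gamma$ supplies a path $\pi^*\in P_\gamma(G,i,j)$ whose intermediate vertices are all colliders. This $\pi^*$ already counts against the budget of $\eta$ short paths permitted by the local-path property, leaving at most $k\leq \eta-1$ remaining short $i$--$j$ paths $\pi_1,\dots,\pi_k$ in $G_\gamma(i,j)$. The degenerate cases are immediate: when $\eta=0$ the pair is trivially locally separated, and when $\eta=1$ the only short path is $\pi^*$, which is blocked by $S=\emptyset$ since colliders are blocked whenever neither they nor their descendants lie in the separator; this accounts for the $\max(0,\cdot)$ in the bound.

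For the main case, I would construct $S\subset V_\gamma(i,j)$ of cardinality at most $k$ by contributing at most one vertex per $\pi_\ell$. Let $D^*\subseteq V_\gamma(i,j)$ denote the colliders of $\pi^*$ together with their descendants in $G_\gamma(i,j)$. Any $S\subseteq V_\gamma(i,j)\setminus D^*$ leaves $\pi^*$ automatically blocked, so the remaining task is to block each $\pi_\ell$ using a vertex outside $D^*$ (and, symmetrically, outside the analogous forbidden descendant sets of any other collider paths among the $\pi_\ell$). If $\pi_\ell$ is itself a collider path, no vertex is added; otherwise $\pi_\ell$ contains at least one non-collider, and I would select one lying outside the combined forbidden region.

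The principal obstacle is verifying that such a ``safe'' non-collider always exists. I expect to reduce this to an exhaustive case analysis, parallel to the one used for Lemma~\ref{lem:sepsetsizeMAG}, over the at most three non-$\pi^*$ short paths and their possible interactions with $\pi^*$ (shared vertices, shared edges, shared colliders, descendant relations). In each configuration I would either exhibit directly a non-collider of $\pi_\ell$ outside the forbidden region, or show that the obstruction forces an additional short $i$--$j$ path, contradicting $\eta\le 4$. The restriction $\eta\leq 4$ is exactly what keeps the number of configurations finite and tractable; without it, interactions among many paths could only be resolved by enlarging $S$. The mechanical details mirror those in the proof of Lemma~\ref{lem:sepsetsizeMAG} and would be deferred to the appendix.
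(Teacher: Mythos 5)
Your starting observation is the same as the paper's (the membership $i\in\MB_\gamma(G,j)$ yields a collider path $\pi^*$ of length at most $\gamma$, which uses up one of the $\eta$ short paths), but the mechanism you build on it has a genuine gap, and it sits exactly where you flag "the principal obstacle." A $\gamma$-local-graph separator must $m$-separate $i$ and $j$ in the induced subgraph $G_\gamma(i,j)$, i.e., it must block \emph{every} path of that subgraph, including long paths obtained by recombining pieces of the short paths and any chords present in the induced graph (this is precisely the distinction emphasized around Figure~\ref{fig:localpathvsgraph}). Your construction---one blocking non-collider per remaining short path $\pi_\ell$, chosen outside the descendant region $D^*$ of the colliders of $\pi^*$---and your proposed case analysis, organized "over the at most three non-$\pi^*$ short paths," only address the short paths, so the recombined paths are never handled. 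Moreover, the key existence claim is asserted rather than argued: an obstruction to finding a "safe" non-collider can arise because the only non-colliders of $\pi_\ell$ are descendants of colliders of $\pi^*$ via \emph{long} directed paths inside $G_\gamma(i,j)$, and such descendant relations need not create any new short $i$--$j$ path, so no contradiction with $\eta\le 4$ is forced. Falling back on leaving $\pi_\ell$ blocked by one of its own colliders creates interlocking constraints (that collider's descendants must in turn avoid $S$) which the sketch does not resolve. In effect you would have to redo the entire enumeration behind Lemma~\ref{lem:sepsetsizeMAG} from scratch with additional forbidden-region bookkeeping, which is harder than the lemma itself.

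The paper's proof avoids all of this with a reduction your proposal misses. Since $i$ and $j$ are non-adjacent in the MAG $G$, the collider path $\pi^*$ must contain a vertex $u$ that is not an ancestor of $i$ or of $j$---otherwise $\pi^*$ would be an inducing path and would prevent any $m$-separation of $i$ and $j$, contradicting maximality. Deleting $u$ from $G_\gamma(i,j)$ destroys $\pi^*$, so the reduced graph has at most $\eta-1\le 3$ short $i$--$j$ paths and Lemma~\ref{lem:sepsetsizeMAG} applies as a black box, giving a separator of size at most $\eta-1$; and because removing a vertex that is not ancestral to $\{i,j\}$ does not affect minimal separators \citep{Zander2019}, this set also separates $i$ and $j$ in $G_\gamma(i,j)$. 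If you want to salvage your route, this non-ancestral vertex $u$ is the missing ingredient: rather than steering the separator around the colliders of $\pi^*$ and their descendants, discard $\pi^*$ altogether by deleting $u$ and reduce to the previous lemma.
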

\begin{proof}
    Suppose $i\notin \ADJ(G,j)$ but $i\in \MB_\gamma(G,j)$. Then $i$ and $j$ must be connected via a \md{collider path} $\pi$.
   There must be a node  on $\pi$, call it $u$, that is not ancestral to $i$ and $j$, because 
   $\pi$ would otherwise \md{prevent $m$-separation of $i$ and $j$},
   which contradicts $G$ being a MAG. 
   \md{Let $G_\gamma^{-u}(i,j)$ be} the subgraph of $G_\gamma(i,j)$ \md{induced by the complement of $u$}.
   Every minimal separator of $(i,j)$ in $G_\gamma^{-u}(i,j)$  also minimally 
   separates $i$ and $j$ in $G_\gamma(i,j)$ \citep[see, e.g.,][]{Zander2019}. 
   It is easy to see that $G_\gamma^{-u}(i,j)$ has at most  $\eta-1$  many short paths between $i$ and $j$. Hence, the result follows from Lemma~\ref{lem:sepsetsizeMAG}. 
\end{proof}

More generally, our framework accommodates 
hybrid graphs, consisting of a ``global" graph with small maximal degree, and a ``local" graph with bounded local-paths, paralleling  the class of undirected hybrid graphs defined in \citet{chung2006}.
As a concrete example, the Watts-Strogatz (or small-world) graph consists of the union of a $d$-dimensional regular graph and an Erd{\H{o}}s-Renyi random graph
\citep{watts1998}. 

\begin{theorem}\label{prop:hybridMAG}
	Let $G=(V,E)$ be a MAG. For any two non-adjacent nodes $i$ and $j$, let $M_{ij}$ be the set of nodes that do not lie on any path in $P(G,i,j)$ that uses a bidirected edge.  
	 \smallchangemarker{Suppose} for each pair of non-adjacent nodes $i,j$, there exists a set $M\subseteq M_{ij}$ 
	 such that 
	  the subgraph of $G$ induced by $M\cup\{i,j\}$ has node-degree no
	 larger than $\Delta$, and the subgraph of 
	 $G$ induced by $V\setminus M$ satisfies the 
	 local path property with some $\eta_0\leq 3$ and
	 some $\gamma$. Let $\eta=\eta_0+\Delta$. The following statements hold,
	 \[ L(G,\gamma)\leq\eta \quad\text{and}\quad
	 L^{\text{mb}}(G,\gamma)\leq \max(0,\eta-1).
	 \]
\end{theorem}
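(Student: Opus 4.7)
The plan is to combine a bounded-degree neighbor argument for the ``global'' subgraph on $M$ (in the style of Lemma~\ref{lem:sepsetsize}) with the local-path-based separator of Lemma~\ref{lem:sepsetsizeMAG} applied to the ``local'' subgraph $G[V\setminus M]$. The condition $M\subseteq M_{ij}$ is what lets these two pieces be treated essentially independently: it forces every $(i,j)$-path through a node of $M$ to be bidirected-free, and therefore DAG-like. Since $G$ is a MAG (hence ancestral), I may assume WLOG $i\notin\AN(G,j)$.

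I would first construct a candidate $\gamma$-local-graph separator $S=S_A\cup S_B$ as follows. Let $S_A=\ADJ(G_\gamma(i,j),i)\cap M$; the bounded-degree hypothesis on the subgraph induced by $M\cup\{i,j\}$ gives $|S_A|\le\Delta$ directly. Let $S_B$ be the $\gamma$-local-graph separator of $(i,j)$ in the subgraph $G[V\setminus M]$ produced by Lemma~\ref{lem:sepsetsizeMAG} (applicable because $\eta_0\le 3$), so that $|S_B|\le\eta_0$. Summing gives $|S|\le\eta_0+\Delta=\eta$. To verify that $S$ is indeed a $\gamma$-local-graph separator in $G$, I would split an arbitrary path $\pi$ in $G_\gamma(i,j)$ from $i$ to $j$ into two cases: if $\pi$ avoids $M$, it lies entirely within $G[V\setminus M]$ and is blocked by $S_B$ (the translation from blocking in the subgraph to blocking in $G_\gamma(i,j)$ uses that descendants in the subgraph are contained in those in $G$); if $\pi$ visits some $v\in M$, then $M\subseteq M_{ij}$ forces $\pi$ to use no bidirected edges, and the parent/collider dichotomy from the proof of Lemma~\ref{lem:sepsetsize}, combined with $i\notin\AN(G,j)$, shows that $\pi$ is blocked by $S_A$.

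For the second bound $L^{\mathrm{mb}}(G,\gamma)\le\max(0,\eta-1)$, I would mirror the reduction in Lemma~\ref{lem:mb}: the hypothesis $i\in\MB_\gamma(G,j)$ together with $(i,j)\notin E$ gives a collider path of length at most $\gamma$ between $i$ and $j$, on which some internal node $u$ must be non-ancestral to $\{i,j\}$, or else $G$ would fail to be a MAG. Deleting $u$ removes at least one short $(i,j)$-path from the $V\setminus M$ side, so the local-path parameter there drops to $\max(0,\eta_0-1)$, and the result cited in Lemma~\ref{lem:mb} ensures that a minimal separator in $G_\gamma^{-u}(i,j)$ remains minimal in $G_\gamma(i,j)$. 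Applying the first claim to $G_\gamma^{-u}(i,j)$ then yields a separator of size at most $\Delta+\max(0,\eta_0-1)=\max(0,\eta-1)$.

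The main obstacle will be the path-tracing step when $\pi$ crosses the $M$/$V\setminus M$ boundary several times: the parent-classification at $i$ from Lemma~\ref{lem:sepsetsize} must be combined with the blocking behavior of $S_B$ in such a way that a collider on $\pi$ whose descendants happen to fall into $S_A\cup S_B$ cannot inadvertently unblock $\pi$. A secondary subtlety is the degenerate case $i\in M$ or $j\in M$, in which $M\subseteq M_{ij}$ forces all $(i,j)$-paths to be bidirected-free and the argument collapses to the DAG setting where Lemma~\ref{lem:sepsetsize} applies to $G_\gamma(i,j)$ directly.
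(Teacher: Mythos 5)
Your overall route is the same as the paper's: split the candidate separator into a bounded-degree piece coming from $M$ and a piece supplied by Lemma~\ref{lem:sepsetsizeMAG} applied to $G[V\setminus M]$, bound their sizes by $\Delta$ and $\eta_0$, and for the Markov-blanket bound repeat the reduction of Lemma~\ref{lem:mb}. The genuine gap is that the verification that your set actually separates in $G_\gamma(i,j)$ is missing exactly where the theorem has content. The paper takes $S^1=\PA(G^1,i)$ with $G^1=G[M\cup\{i,j\}]$ and $S^2$ a local-graph separator in $G^2=G[V\setminus M]$, and then uses the decomposition $P(G,i,j)=P(G^1,i,j)\sqcup P(G^2,i,j)$: every $(i,j)$-path lies entirely in $G^1$ or entirely in $G^2$, so boundary-crossing paths never arise and each piece only has to block ``its own'' paths. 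You explicitly leave the boundary-crossing case open (``the main obstacle''), but without that decomposition your case analysis fails: a path whose first internal node lies in $V\setminus M$ but which later enters $M$ is handled by neither of your two cases --- $S_A$ contains only $M$-vertices, so it supplies no blocking non-collider near $i$, while $S_B$ only separates inside $G[V\setminus M]$ and the path is not a path of that subgraph. A second substantive deviation is your choice $S_A=\ADJ(G_\gamma(i,j),i)\cap M$ instead of the paper's $\PA(G^1,i)$: the ``parent/collider dichotomy'' of Lemma~\ref{lem:sepsetsize} that you invoke is tied to conditioning on parents only, and putting a child $c$ of $i$ into the conditioning set can open a path $i\to c\leftarrow x\cdots j$. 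Restricting to parents costs nothing (they are among the at most $\Delta$ neighbors of $i$ in $G^1$) and is needed for the blocking argument.

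The Markov-blanket half of your sketch also contains an unjustified step. You claim that deleting the non-ancestral node $u$ on the collider path ``removes at least one short $(i,j)$-path from the $V\setminus M$ side.'' If the collider path has two or more internal nodes, consecutive internal colliders must be joined by a bidirected edge, so its internal nodes indeed avoid $M_{ij}\supseteq M$ and your claim holds; but the length-two collider path $i\to u\leftarrow j$ uses no bidirected edge, so $u$ may well lie in $M$, in which case deleting $u$ reduces neither the short-path count in $G[V\setminus M]$ nor $|\PA(G^1,i)|$ (since $u$ is a child of $i$), and your accounting gives only $\eta$, not $\eta-1$. That case needs its own argument --- for instance, that $u$ is then a neighbor of $i$ in $G^1$ which is not a parent, so $|\PA(G^1,i)|\le\Delta-1$ --- or the side-specific bookkeeping behind the paper's appeal to Lemma~\ref{lem:mb}. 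As written, neither $L(G,\gamma)\le\eta$ nor $L^{\text{mb}}(G,\gamma)\le\max(0,\eta-1)$ is fully established by your sketch.
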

\begin{proof}
Let $i,j$ be non-adjacent nodes in $G$.
Without loss of generality, 
let $i\notin \AN(G,j)$.
Let $G^1$ and $G^2$ be the subgraphs induced by $M\cup\{i,j\}$ and $V\setminus M$, respectively.
Let $S^1 = \PA(G^1,i)$, and let $S^2$ be a $\gamma$-local-graph separator of $(i,j)$ in
$G^2$. 
We have  $P(G,i,j)=P(G^1,i,j)\sqcup P(G^2,i,j)$  where $\sqcup$ stands for disjoint union. 
Thus, $S^1\sqcup S^2$ is a $\gamma$-local-graph separator of $(i,j)$. 
By Lemma~\ref{lem:sepsetsizeMAG}, 
 $|S^1|+|S^2|\leq \Delta+\eta_0$. Lemma~\ref{lem:mb} gives the  Markov blanket result.
\end{proof}


%

\section{A Local FCI Algorithm (lFCI)}\label{sec:method}
%
%


In this section, we propose a novel algorithm that discovers absent
edges by searching for local-graph separators, as defined in Section~\ref{sec:localseparation}.

\subsection{lFCI}\label{sec:lFCI}

To learn \md{a MAG $G=(V,E)$}, 
PC/FCI adopt the following \md{strategy}.
Starting with a complete 
undirected graph $C$, they first search for separating sets of size $\ell=0$: If two nodes are independent given a set of size 0 (i.e., marginally independent), the corresponding edge in $C$ is removed. 
\md{Iteratively increasing the value of $\ell$ by one, the algorithm visits all pairs $(i,j)$ adjacent in $C$} and searches amongst all sets $S\subseteq J(i,j,C)$ with $|S|=\ell$, where $J(i,j,C)\subseteq V\setminus\{i,j\}$ is a current \emph{search pool}.  \md{If a conditional independence $i\independent j|S$ is found, the edge $i-j$ is removed from $C$.}
The algorithm \md{stops when $\ell$ exceeds the maximum size of the sets in the search pool.}
\md{(In rPC, the iterations are stopped early at a specified level for $\ell$.)}
The value of $\ell$ at 
termination 
is 
the \emph{reach level}, denoted $m_{\textnormal{reach}}$.
With a conditional independence oracle, PC terminates at 
\md{$m_{\textnormal{reach}}(\textnormal{PC}) \le d_{\max}-1$,}
where $d_{\max}$ is the maximum node degree; the reach level of the second step of FCI is the maximum size of  p-D-SEP sets.

Our lFCI algorithm follows a similar strategy but with two key differences. 
The first difference is the construction of the search pool, $J(i,j,C)$. 
Given a working skeleton $C$ that is a supergraph of $\text{skel}(G)$, 
a construction of $J(i,j,C)$ is valid if each pair of non-adjacent nodes $(i,j)$ is separated by some subset of $J(i,j,C)$. 
\smallchangemarker{PC/FCI} adopt a neighborhood-based strategy: PC uses 
$J_{\textnormal{PC}}(i,j,C)=\left(\ADJ(i,C)\cup \ADJ(j,C)\right)\setminus\{i,j\}$, and  
FCI uses $J_{\textnormal{PC}}$ in its first step and $J_{\textnormal{FCI}}(i,j,C)= \textnormal{p-D-SEP}(i,j)$ in its second step. 
%
In contrast, 
inspired by the local separation property,
our lFCI algorithm \changemarker{adopts a local-graph-based strategy, in which we form an alternative search pool that is guaranteed  to contain a local separator by including the nodes that are \emph{close} to both $i$ and $j$. Figure~\ref{fig:searchpool} exemplifies the difference between neighborhood-based and local-graph-based searches}.
More concretely, 
let $D_G(i,j) =
\min_{\pi\in P(G,i,j)}|\pi|$ be the shortest-undirected-path distance between nodes $i$ and $j$ in  $G$, with  $D_G(i,j) = \infty$ if $P(G,u,v)=\emptyset$. 
Writing $C_{-ij}$ for the working skeleton $C$ with edge $i-j$ removed, we define
    \begin{equation}\label{eq:pool}
    J_{\gamma}(i,j,C)=\left\{ k\in V\setminus \{i,j\}: D_{C_{-ij}}(i,k)+ D_{C_{-ij}}(j,k)\leq\gamma \right\}.
\end{equation}
\as{While otherwise distinct, the idea of searching among nodes that lies on connecting paths is related to the path modification of FCI in FCI$_{\text{path}}$ \citep{colombo14a}, which uses a different search pool,  $J_{\text{FCI}_\text{path}}(i,j,C)=J_{\text{FCI}}(i,j,C)\cap J_p(i,j,C)$}.
The following lemma shows that $J_{\gamma}(i,j,C)$ is a superset of  $V_\gamma(i,j)$ from Definition~\ref{def:localgraph} and is hence a valid search pool. 
\changemarker{
\begin{lemma}\label{lem:searchingpool}
    Let $G$ be a MAG, and let $C$ be a super-graph of $\text{skel}(G)$. 
    Two nodes $i,j$ are non-adjacent  in $G$ if and only if they are $\gamma$-local-graph separated by a subset of
    $J_\gamma(i,j,C)$. 
\end{lemma}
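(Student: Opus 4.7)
The plan is to reduce the lemma to Lemma~\ref{lem:localsep} by establishing the key inclusion $V_\gamma(i,j)\setminus\{i,j\}\subseteq J_\gamma(i,j,C)$. Once this inclusion is in hand, both directions follow cleanly: for the ``only if'' direction, Lemma~\ref{lem:localsep} supplies a local-graph separator $S_\gamma\subseteq V_\gamma(i,j)$, and the inclusion places it inside $J_\gamma(i,j,C)$; for the ``if'' direction, any subset of $J_\gamma(i,j,C)$ that serves as a local-graph separator is, by Definition~\ref{def:localseparator}, already a set that $m$-separates $i$ and $j$ in $G_\gamma(i,j)$, and Lemma~\ref{lem:localsep} then yields non-adjacency in $G$.

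The core technical step is the inclusion itself. I would fix an arbitrary $v\in V_\gamma(i,j)\setminus\{i,j\}$ and use the definition of $V_\gamma$ to pick a path $\pi\in P_\gamma(G,i,j)$ through $v$, of length at most $\gamma$. Since $v\ne i,j$ is an interior vertex of the (simple) path $\pi$, I can split $\pi$ at $v$ into two subpaths $\pi_1$ from $i$ to $v$ and $\pi_2$ from $v$ to $j$ whose lengths sum to $|\pi|\leq\gamma$. Because distinct vertices along a path cannot repeat, $\pi_1$ avoids $j$ and $\pi_2$ avoids $i$, so neither subpath uses the edge $i$--$j$, even if such an edge happens to be present in $C$. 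Using that $C$ is a supergraph of $\text{skel}(G)$, both $\pi_1$ and $\pi_2$ are still walks in the undirected graph $C_{-ij}$, giving
\[
D_{C_{-ij}}(i,v)+D_{C_{-ij}}(j,v)\leq |\pi_1|+|\pi_2|=|\pi|\leq \gamma,
\]
so $v\in J_\gamma(i,j,C)$ by \eqref{eq:pool}.

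The only subtlety worth checking carefully is that the edge $i$--$j$ is indeed not used on either subpath, which is where the simplicity of $\pi$ and the condition $v\neq i,j$ are both needed; this is the step most likely to trip up a reader, but it is essentially immediate. With the inclusion established, the conclusion of the lemma follows from Lemma~\ref{lem:localsep} in two lines, so no further machinery is required.
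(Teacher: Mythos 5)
Your proposal is correct and follows essentially the same route as the paper: reduce to Lemma~\ref{lem:localsep} by establishing the inclusion $V_\gamma(i,j)\setminus\{i,j\}\subseteq J_\gamma(i,j,C)$, using that any $v\in V_\gamma(i,j)$ lies on a path of length at most $\gamma$ whose two segments survive in $C_{-ij}$ because $C$ is a supergraph of $\text{skel}(G)$ and the edge $i$--$j$ is absent from $G$. Your explicit check that neither subpath can use the edge $i$--$j$ is a slightly more careful rendering of the same distance comparison the paper makes via $D_{C_{-ij}}\le D_G$.
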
}
The second innovation in our approach lies in its termination criterion. 
\md{The complexities of \md{PC and FCI} are determined by their reach levels $m_{\textnormal{reach}}$, which scale with the maximum degree $d_{\max}$ of the graph.}
\md{As a result,} these values can be very large if the graph includes hub nodes.  In particular, when considering sequences of structure learning problems where a few nodes are allowed to have $O(p)$ many neighbors, PC and FCI (and also FCI+) cannot terminate in polynomial time. 
Our approach \md{offers a strategy to circumvent this problem by early termination} when searching on local graphs.  
\md{Indeed, sparse graphs may still satisfy the conditions in Theorem~\ref{prop:hybridMAG} for bounded $\eta$. 
The reach level of our local-graph-based approach} 
is then at most $\eta$, so $O(1)$.
\md{Therefore, through its focus on local graphs},
our lFCI algorithm may enjoy polynomial-time complexity even in graphs with hub nodes---settings that become problematic for PC and FCI. 
\changemarker{
We emphasize that an ad hoc early stopping (or ``anytime") version of PC and FCI avoids the computational issue \citep{spirtes2001} but will generally result in false discoveries.  Indeed, even if the conditions in Theorem~\ref{prop:hybridMAG} hold, 
the smallest neighborhood-based D-SEP set is not necessarily small; compare Figure~\ref{fig:searchpool}}.

\begin{figure}[t]
\centering
(a)
\begin{tikzpicture}[
				> = stealth, 
				shorten > = 1pt, 
				auto,
				node distance = 1cm, 
				semithick,scale=0.7
				]
				\tikzstyle{every state}=[
				draw = black,
				thick,
				fill = white,
				minimum size = 4mm
				]
				
				\node[c1] (1) at(0,0) {1};
				\node[k2] (2) at(1,1) {2};
				\node[k2] (3) at(2.3,1.5) {3};
				\node[c1] (4) at(2.3,0.3) {4};
				\node[k2] (5) at(1.7,-0.3) {5};
				\node[c1] (6) at(1.7,-1.5) {6};
				\node[c1] (7) at(3,-1) {7};
				\node[c1] (8) at(4,0) {8};
				\node[k2] (9) at(0,2.2) {$a$};
				\node[c1] (12) at(4,2.2) {$b$};
				\node[c11] (10) at(1,2.2) {};
				\node[c11] (11) at(2,2.2) {};
				\node[c11] (13) at(3,2.2) {};

				\draw[<->,line width= 1] (1)--(2);
				\draw[<->,line width= 1] (3)--(2);
				\draw[<->,line width= 1] (7)--(8);
				\draw[<->,line width= 1] (6)--(7);
				\draw[->,line width= 1] (2)--(4);
				\draw[->,line width= 1] (3)--(4);
				\draw[->,line width= 1] (7)--(5);
				\draw[->,line width= 1] (6)--(5);
				\draw[->,line width= 1] (5)--(1);
				\draw[->,line width= 1] (4)--(8);
				\draw[<-,line width= 1] (1)--(9);
				\draw[<-,line width= 1] (8)--(12);
				\draw[<-,line width= 1] (9)--(10);
				\draw[<->,line width= 1] (10)--(11);
				\draw[->,line width= 1] (11)--(13);
				\draw[->,line width= 1] (13)--(12);
				\end{tikzpicture}\qquad
				(b)
				\begin{tikzpicture}[
				> = stealth, 
				shorten > = 1pt, 
				auto,
				node distance = 1cm, 
				semithick,scale=0.7
				]
				\tikzstyle{every state}=[
				draw = black,
				thick,
				fill = white,
				minimum size = 4mm
				]
				
				\node[c1] (1) at(0,0) {1};
				\node[k2] (2) at(1,1) {2};
				\node[k2] (3) at(2.3,1.5) {3};
				\node[k2] (4) at(2.3,0.3) {4};
				\node[k2] (5) at(1.7,-0.3) {5};
				\node[k2] (6) at(1.7,-1.5) {6};
				\node[k2] (7) at(3,-1) {7};
				\node[c1] (8) at(4,0) {8};

				\draw[<->,line width= 1] (1)--(2);
				\draw[<->,line width= 1] (3)--(2);
				\draw[<->,line width= 1] (7)--(8);
				\draw[<->,line width= 1] (6)--(7);
				\draw[->,line width= 1] (2)--(4);
				\draw[->,line width= 1] (3)--(4);
				\draw[->,line width= 1] (7)--(5);
				\draw[->,line width= 1] (6)--(5);
				\draw[->,line width= 1] (5)--(1);
				\draw[->,line width= 1] (4)--(8);
				\end{tikzpicture}\qquad
				(c)
				\begin{tikzpicture}[
				> = stealth, 
				shorten > = 1pt, 
				auto,
				node distance = 1cm, 
				semithick,scale=0.7
				]
				\tikzstyle{every state}=[
				draw = black,
				thick,
				fill = white,
				minimum size = 4mm
				]
				
				\node[c1] (1) at(0,0) {1};
				\node[k2] (2) at(1,1) {2};
				\node[k2] (4) at(2.3,0.3) {4};
				\node[k2] (5) at(1.7,-0.3) {5};
				\node[k2] (7) at(3,-1) {7};
				\node[c1] (8) at(4,0) {8};

				\draw[<->,line width= 1] (1)--(2);
				\draw[<->,line width= 1] (7)--(8);
				\draw[->,line width= 1] (2)--(4);
				\draw[->,line width= 1] (7)--(5);
				\draw[->,line width= 1] (5)--(1);
				\draw[->,line width= 1] (4)--(8);
				\end{tikzpicture}
				\caption{
		\md{Search strategies of FCI and lFCI: (a) True 
		$G$, (b) local-graph $G_{4}(1,8)$, (c) local-graph 
		$G_{3}(1,8)$.
		Search pools $J_{\text{FCI}}(1,8)$, 
		$J_4(1,8)$ and $J_3(1,8)$ are shaded. 
		FCI discovers the separator $\{a,2,3,5\}$ in $G$ (differs from minimal separator $\{a,4,5\}$).
		 For both $\gamma=3,4$, lFCI discovers the local-graph separator $\{4,5\}$ (small but only correct in the local graph).
		Note that
		FCI cannot be early-stopped at reach level 2 even if we 
		ignore the path $(1,a,\ldots,b,8)$.
		}}
				\label{fig:searchpool}
	\end{figure}
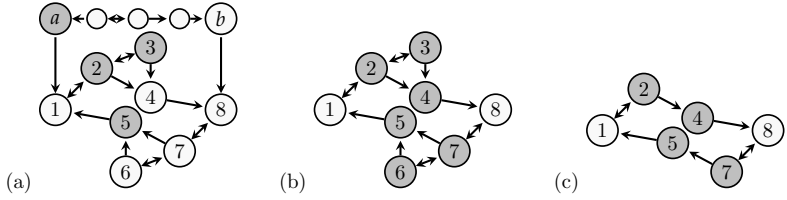

Our lFCI proposal is summarized in  Algorithm~\ref{alg:lFCI}. Starting with a complete graph $C$ and level $\ell=0$, 
lFCI traverses every edge $(i,j)$ and \md{searches for a conditional independence $i\independent j|S$ given subsets $S\in J_\gamma(i,j,C)$} of size $|S|=\ell$.
If a conditional independence is found the edge is removed from $C$.
The level $\ell$ is increased after checking all edges, and the algorithm terminates when 
$\ell$ hits the reach level $m_\textnormal{reach} = \eta$, which is picked \md{in advance with a view towards  potential underlying graph structure (similar to how node degrees are bounded in other algorithms)}. 
Throughout the search, we keep track of the shortest-path-distances between nodes, but only update them after completing the $\ell$-th level. This makes the algorithm \textit{order-independent}, i.e., the output does not depend on the order in which edges are tested
\citep[see, e.g.,][]{colombo14a}.

\subsection{Tuning parameters}
Given Lemma~\ref{lem:searchingpool}, $\gamma$ acts as a tuning parameter that controls the breadth of the search in our algorithm. 
Theoretically, $\gamma$ should be  
small enough 
such that the underlying graph $G$ has small $\gamma$-local-graph separators, 
yet large enough such that  paths \smallchangemarker{not contained in $G_\gamma$} contribute little to total effects in the graphical model. 
\changemarker{
Theorem~\ref{prop:hybridMAG} notes that
many random graphs  
satisfy $L(G,\gamma)\leq\eta$ 
with $\gamma=O(\log p)$ with high probability as the number of nodes $p\to\infty$.
Due to this fact, \md{our later analysis allows (but does not require)} $\gamma$ to grow with $p$, in which case 
distributional conditions may be weakened for larger graphs.}
\md{We note that in our later simulations lFCI terminates early, and its performance is rather insensitive to our choice of $\gamma$, see Section 4 of the Supplementary Material.}

The maximum size of separating sets, $\eta$, is
a tuning parameter that controls the depth of the search and allows lFCI to terminate 
\smallchangemarker{
at smaller levels}
than  PC/FCI.
\md{Choosing $\eta$ is akin to an a priori choice of the maximum node degree in other algorithms, recall Theorem~\ref{prop:hybridMAG}.}

\begin{algorithm}[t]
	\caption{lFCI}
	\SetKwInOut{Input}{Input}
	\SetKwInOut{Output}{Output}
	\Input{Tests of conditional independences $i\independent j|S$,\\   
		maximum separating set size $\eta$, locality parameter $\gamma$.}
	\Output{A partial ancestral graph $C$}
	$C,C_{\textnormal{old}}\gets$ complete undirected graph over $[p]$\;
	Initialize $\textnormal{SEP}\gets \varnothing$  and
	$\ell\gets -1$\;
	\Repeat{\textsc{\upshape$\ell> \eta$}}{
		$\ell\gets\ell+1$\;
		\Repeat{\textsc{\upshape all ordered pairs of $(i,j)$ has been checked}}{
		Select a (new) ordered pair of nodes $(i,j)$ that are adjacent in $C$\;
		\Repeat{\textsc{\upshape  $(i,j)$ is deleted or \smallchangemarker{all subsets of $J_\gamma(i,j,C_{\text{old}})$ with size $\ell$ are  checked}}}{
		
		Choose a (new) $S\subseteq J_\gamma(i,j,C_{\textnormal{old}})$ with $|S|=\ell$\;
		\lIf{
            $i\independent j|S$
            }{
			Delete edge $(i,j)$ from $C$, and 
			record $\textnormal{SEP}(i,j)\gets S$
	}
	}
	}
	$C_{\textnormal{old}}\gets C$\; 
	}
	Orient edges in $C$ using the SEP sets, by the modified rules in Section~\ref{sec:orient}\;
	\Return A PAG $C$.
	\label{alg:lFCI}
\end{algorithm}

\subsection{Orientation rules}\label{sec:orient}
After inferring the skeleton using conditional independence tests, 
we orient as many edges as possible 
to obtain a PAG representation 
of the Markov equivalence class of MAGs. 
\changemarker{
Given the undirected skeleton of the true MAG
and a collection of minimal separators, 
the orientation procedure proposed in \citet{zhang2008}
applies 
eleven deterministic rules to obtain the maximally informative PAG}.
In other words, the population version of FCI is sound (i.e., never returns a wrong result) and complete (i.e., the output is maximally informative in the sense of discovering all causal relations common to the graphs in the equivalence class).
\changemarker{
\as{However, these properties are \emph{not}}
guaranteed if we apply the rules directly with local-separation, 
because 
the local separators are usually not $m$-separators.}

\begin{figure}[btp]
	\centering
	\begin{tikzpicture}[
	> = stealth, 
	shorten > = 1pt, 
	auto,
	node distance = 1cm, 
	semithick, scale=0.8 
	]
	\node[c1] (i) at(0,0) {$i$};
	\node[c1] (j) at(6,-1) {$j$};
	\node[c1] (1) at(1.5,0.2) {$w$};
	\node[c1] (2) at(3,0.4) {$u$};
	\node[c1] (3) at(4.5,0.4) {$v$};
	\node[c1] (4) at(6,0.2) {$x$};
	\node[c1] (5) at(7.5,0) {$y$};

	\draw[<->] (i) -- (1);
	\draw[<->] (2) -- (1);
	\draw[<->] (2) -- (3);
	\draw[<->] (4) -- (3);
	\draw[<->] (4) -- (5);
	\draw[->] (1) -- (j);
	\draw[->] (2) -- (j);
	\draw[->] (3) -- (j);
	\draw[->] (4) -- (j);
	\draw[->] (5) -- (j);
	\end{tikzpicture}
	\caption{
	\changemarker{\md{Nodes} 
	$i$ and $j$ are $m$-separated given 
	$\{w,u,v,x,y\}$ and 
	all other non-adjacent pairs are marginally $m$-separated. 
	There is a discriminating path $(i,w,u,v,x,y,j)$ for $y$. 
	In FCI, the edge $y\to j$ is oriented correctly. 
	For $\gamma=5$, the $\gamma$-local separator of $(i,j)$ is $\{w,u,v,x\}$, with which the discriminating path rule outputs 
	$y\leftrightarrow j$, which is inconsistent with the truth.}}
	\label{fig:rule4} 
\end{figure}
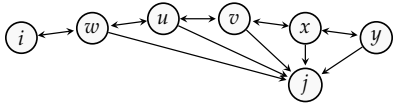

\changemarker{Surprisingly, however, soundness under local-separation can be achieved with only a single change to Zhang's Rule $\mathcal{R}_4$, which pertains to \textit{discriminating paths}. Furthermore, completeness can be achieved under an additional condition on these discriminating paths.
A path between $i$ and $j$, $\pi=(i,\ldots, x,y,j)$, is a discriminating path for $y$ if $\pi$ includes at least three edges;
$y$ is adjacent to $j$ on $\pi$; $i$ is not adjacent to $j$; and every vertex between $i$ and $y$ is a collider on $\pi$ as well as a parent of $j$. 
An example of the failure of the unmodified discrimination path rule (Rule $\mathcal{R}_4$) is given in Figure~\ref{fig:rule4}. 
%
The original \citep{zhang2008} and the modified versions may be contrasted as follows}:
\begin{enumerate}
	\item[$\mathcal{R}_4:$] If $\pi$ is a discriminating path between $i$ and $j$ for $y$, and $y\edgecircstar j$, then if $y\in S(i,j)$, orient
	$y\edgecircstar  j$ as $y\to j$; otherwise orient the triple $(x,y,j)$ as $x\leftrightarrow y\leftrightarrow j$.
	\item[$\mathcal{R}_4':$] If $\pi$ is a discriminating path between $i$ and $j$ for $y$, and $y\edgecircstar j$, then if $v\in S(i,j)$, orient
	$y\edgecircstar j$ as $y\to j$; 
	\changemarker{if $y\notin S(i,j)$ and 
	all vertices in $\pi$ are contained in the $\gamma$-local-graph of $(i,j)$},
	then orient $(x,y,j)$ as $x\leftrightarrow y\leftrightarrow j$; otherwise orient $y\edgecircstar  j$ as $y\edgecirchead j$.
\end{enumerate}
\changemarker{
Rule $\mathcal{R}_4'$ avoids wrong decisions when local-graph separators do not provide enough information. 
The original and modified rules give the same output 
under 
the following condition.}

\begin{assumption}[Local discriminating paths]\label{ass:localdiscpaths}
Let $G$ be a MAG and $\gamma$ be an integer. 
Denote $\Pi^D(G,i,j,y)$ as  the  set  of  discriminating  paths  between $i$ and $j$ for $y$ in $G$. 
If $\Pi^D(G,i,j,y)\neq \emptyset$ for the triple $(i,j,y)$, 
then there exists $\pi\in \Pi^D(G,i,j,y)$ such that $\pi\subset G_\gamma(i,j)$.
\end{assumption}

\changemarker{
In the next lemma we show soundness and completeness of population lFCI.}
\changemarker{
\begin{lemma}\label{lem:orient}
	Let $G$ be a MAG. Let the lFCI parameters $\eta$ 
	and $\gamma$ be integers such that  $\gamma>2$
    and $L(G,\gamma)\leq \eta$. 
    Then with a local-graph separation oracle, 
    lFCI outputs a PAG for $[G]$.
    If in addition Assumption~\ref{ass:localdiscpaths} holds, 
    then the  lFCI output is the
	maximally informative PAG.
\end{lemma}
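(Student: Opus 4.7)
My plan splits into a skeleton phase and an orientation phase. For the skeleton, I would show by induction on the reach level $\ell$ that the working graph $C$ always contains $\text{skel}(G)$. By Lemma~\ref{lem:localsep}, the local-separation oracle returns ``separated'' only for non-adjacent pairs, so no true edge is ever removed. Because $L(G,\gamma)\le\eta$, every non-adjacent pair $(i,j)$ admits a $\gamma$-local-graph separator $S$ with $|S|\le\eta$, and by Lemma~\ref{lem:searchingpool} this $S$ lies inside the search pool $J_\gamma(i,j,C)$ as long as the invariant holds. Hence by the time $\ell$ reaches $\eta$ every non-edge has been deleted, and the skeleton phase terminates with $C=\text{skel}(G)$ together with a collection $\{\text{SEP}(i,j)\}$ of valid $\gamma$-local-graph separators.

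For the orientation phase, I would inspect each of Zhang's rules in turn. The v-structure rule uses $\gamma>2$ crucially: for any unshielded triple $(i,k,j)$, the two-edge path $i-k-j$ places $k$ in $V_\gamma(i,j)$, so the triple lives inside $G_\gamma(i,j)$, and the standard argument ``$k\notin\text{SEP}(i,j)\Longrightarrow k$ is a collider on $i-k-j$'' runs inside the local graph and transfers to $G$ because the collider status depends only on the two edge marks at $k$. Rules $\mathcal{R}_1$--$\mathcal{R}_3$ and the remaining rules of \citet{zhang2008} refer only to the current partially oriented graph and not to separators, so their soundness carries over verbatim. The substantive change is $\mathcal{R}_4'$: its ``otherwise'' branch only adds an arrowhead at $j$, a weak conclusion that is always valid given the collider-plus-parent-of-$j$ structure of $\pi$ between $i$ and $y$; the middle branch reduces to Zhang's original $\mathcal{R}_4$ argument applied inside $G_\gamma(i,j)$ with $\text{SEP}(i,j)$ acting as a genuine $m$-separator; and the first branch forces $y$ to be a non-collider on the portion of $\pi$ lying in $G_\gamma(i,j)$, which, together with the fixed structure of the rest of $\pi$, pins down $y\to j$ in every MAG of $[G]$. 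Together these establish soundness.

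For completeness, Assumption~\ref{ass:localdiscpaths} ensures that whenever a discriminating path for $(i,j,y)$ exists in $G$, at least one such path is contained in $G_\gamma(i,j)$, so the ``otherwise'' branch of $\mathcal{R}_4'$ is never invoked in place of a stronger orientation; lFCI therefore produces every orientation that FCI equipped with an $m$-separation oracle would, and Zhang's completeness theorem then yields that the output is the maximally informative PAG for $[G]$. The main obstacle I foresee is the soundness of the first branch of $\mathcal{R}_4'$: I must verify that the non-collider status forced on $y$ by $y\in\text{SEP}(i,j)$, combined with the collider-plus-parent-of-$j$ structure of the vertices strictly between $i$ and $y$ on $\pi$, globally forces $y\to j$ in every MAG of $[G]$, even when $\pi$ is not wholly contained in the local graph. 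Mixing the local-separation reasoning with the global structural constraints imposed by the skeleton and the discriminating path is the delicate step.
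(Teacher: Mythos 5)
Your plan follows the same route as the paper's proof: skeleton correctness from Lemma~\ref{lem:localsep}, Lemma~\ref{lem:searchingpool} and $L(G,\gamma)\le\eta$; reduction of the orientation phase to $\mathcal{R}_0$ and $\mathcal{R}_4'$, since Zhang's remaining rules never consult separating sets; the observation that for $\gamma>2$ an unshielded triple lies inside the local graph, so $\mathcal{R}_0$ fires on exactly the same triples with local as with full separators; and completeness under Assumption~\ref{ass:localdiscpaths} by showing $\mathcal{R}_4'$ reproduces every $\mathcal{R}_4$ orientation and invoking Zhang's completeness theorem. Your justification of the ``otherwise'' branch (the arrowhead at $j$ is common to both possible resolutions of a discriminating path, hence always sound) is also in line with the paper.

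The gap is the one you yourself flag: you do not prove soundness of the first branch of $\mathcal{R}_4'$, i.e.\ that $y\in\mathrm{SEP}(i,j)$ forces $y\to j$ in every MAG of $[G]$, and this is precisely the step of the lemma that requires an argument rather than a citation, because the recorded set is only an $m$-separator of the local graph. The paper closes it by reasoning relative to that recorded separator: the short subpaths $(i,v_1,\ldots,v_l,j)$ through the parents of $j$ force the intermediate colliders of the discriminating path into any $\gamma$-local-graph separator, and since $\mathrm{SEP}(i,j)$ contains $y$ and blocks every path of $G_\gamma(i,j)$, $y$ cannot be a collider at the end of the path --- otherwise the path would be $m$-connecting given $\mathrm{SEP}(i,j)$ (the paper phrases this as orienting $y\to j$ ``to avoid unblocking $\pi$''); the non-collider status at $y$, combined with $x\,{\ast}\!\!\to y$ and $x\to j$ and the ancestral-graph constraints, then pins down $y\to j$. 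Note that the paper reserves the ``$\pi$ contained in $G_\gamma(i,j)$'' case distinction for the second (bidirected) branch only; it is not invoked for the first branch. Until you supply this argument (or an equivalent one handling the scenario you worry about, where $\pi$ is not wholly inside the local graph), your proposal has not shown that $\mathcal{R}_4'$ never places a wrong tail at $y$, and hence has not established that the output is a PAG for $[G]$; the remainder of your argument matches the paper and needs no change once this step is filled in.
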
}
\begin{proof}
\changemarker{
By Lemma~\ref{lem:localsep}, the output of the skeleton step is correct. 
In  FCI, Zhang's orientation rules $\mathcal{R}_0$ (unshielded triple rule) and $\mathcal{R}_4$ (discriminating path rule) introduce arrowheads using the separation sets, whereas the rest of the rules, $\mathcal{R}_1-\mathcal{R}_3$ and $\mathcal{R}_5-\mathcal{R}_{10}$, only depend on the results of $\mathcal{R}_0$ and $\mathcal{R}_4$.
Therefore, the orientation phase is sound if all arrowheads introduced by $\mathcal{R}_0$ and $\mathcal{R}_4$ are correct, and complete if $\mathcal{R}_0$ and $\mathcal{R}_4$ introduce as many arrowheads as possible. For details of the rules, see  \citet{zhang2008}.}

\changemarker{
First we consider $\mathcal{R}_0$, which orients
an unshielded triple $(i,j,k)$ into a v-structure if $j$ is not in the separator for $(i,k)$. In lFCI,
if $j$ is not in the $\gamma$-local separator, 
then $(i,j,k)$ must be a marginally blocked path in both $G$ and $G_{\gamma}(i,k)$.  Thus, $\mathcal{R}_0$ produces the same output using local- and full-graph-separators.
Next we show $\mathcal{R}_4'$ is sound.  Indeed, 
if $\pi$ is a discriminating path between $i$, $j$ for $y$ and $y\in S_\gamma(i,j)$, then the last edge on $\pi$ must be oriented as $y\to j$ to avoid unblocking $\pi$. If $y\notin S_\gamma(i,j)$ and $\pi$ is contained in the local-graph, then  $\mathcal{R}_4'$ is the same as $\mathcal{R}_4$ in $G_\gamma(i,j)$;
otherwise, $\mathcal{R}_4'$ simply avoids making wrong decisions. We conclude that by Theorem 1 of \citet{zhang2008}, the orientation phase of lFCI using local-separators is sound.}

\changemarker{Under the additional condition, if $(i,j,y)$ can be oriented using a full-graph separator by $\mathcal{R}_4$, then there is a discriminating path in $G_\gamma(i,j)$ such that the same orientation is made by $\mathcal{R}_4'$. Therefore, the lFCI output is identical to that of FCI, which is  maximally informative.
}
\end{proof}

\subsection{Computational complexity}\label{sec:algandcomputation}

As discussed in Section~\ref{sec:lFCI}, the computational advantages of lFCI over FCI,
RFCI 
and FCI+ 
stem from two key differences: 
(i) the use of a local-graph-based strategy ($J_\gamma$) instead of
the neighborhood-based strategy ($J_{\textnormal{FCI}}$); and
(ii) searching up to sets of size $\eta$.
As a result, lFCI achieves computational complexity $O(p^{\eta+2})$, which 
is the same as that of rPC \citep{arjun2018}. 
In contrast, the worst case computational complexity of FCI is 
exponential and FCI+ has computational complexity $O(p^{d_{\max}+2})$. Though FCI+ offers polynomial complexity when $d_{\max}$ is bounded  \citep{claassen2013}, it becomes inefficient in the setting of power-law graphs with highly connected hub nodes, when $d_{\max}=O(p^a)$ for some $a>0$ \citep{molloy1995}. For these graphs,  FCI+ offers exponential complexity  $O(p^{p^a+2})$, compared with $O(p^4)$ for lFCI.

\subsection{Initialization with Moral Graph}\label{sec:lFCImb}

\begin{algorithm}[t]
	\caption{lFCI$_{\text{mb}}$: lFCI with moral graph step}
	\SetKwInOut{Input}{Input}
	\SetKwInOut{Output}{Output}
	\Input{Tests of conditional independences $i\independent j|S$,\\   
		maximum separating set size $\eta$, locality parameter $\gamma$.}
	\Output{A partial ancestral graph $C$}
	$C\gets$ estimated local moral graph\;
	Run the edge removal loop in Algorithm~\ref{alg:lFCI} for search set size $l=1,\ldots,\eta-1$\;
	Orient edges in $C$ using the SEP sets, by the modified rules in Section~\ref{sec:orient}\;
	\Return A PAG $C$.
	\label{alg:lFCI_mb}
\end{algorithm}
Following the observation from Lemma~\ref{lem:mb}, in Algorithm~\ref{alg:lFCI_mb} we propose a modified version of lFCI that utilizes local Markov blankets for improved computational and sample complexities.
Concretely, instead of starting with a complete graph $C$,  Algorithm~\ref{alg:lFCI_mb} starts with an estimated \textit{local moral graph}.

The moral graph of a MAG $G$ is the undirected graph in which two nodes \md{$i,j$ are adjacent whenever one node is in the $G$-Markov blanket of the other, say $j\in\MB(G,i)$.  Accordingly, the local moral graph is the undirected graph obtained by taking instead the $\gamma$-local Markov blankets, $\MB_\gamma(G,i)$. For large enough $\gamma$, these two notions coincide as the local moral graph differs from the moral graph only if the shortest path between two nodes is longer than $\gamma$ and only includes bidirected edges.} \changemarker{
This is unlikely in large common random graphs if we allow $\gamma$ to increase with $p$;  see Section 6 of the Supplementary Material.}
Consequently, we simply employ the moral graph, which in the models we treat later can be estimated by the support of the inverse covariance matrix. 
In our simulation study, we used 
score matching to estimate the precision matrix in high dimensions \citep{lin2016, Yu2019}, which here coincides with the SCIO algorithm \citep{Liu2015}. 

Initializing $C$ as an estimated moral graph may significantly recude the size of the search pools $J_\gamma(i,j,C)$.  Moreover, this additional step allows Algorithm~\ref{alg:lFCI_mb} to terminate at level $\eta-1$ instead of $\eta$ (Lemma \ref{lem:mb}). 
For Erd{\H{o}}s-Renyi graphs, power-law graphs with strong finite mean, and $\Delta$-regular graphs, the algorithm can then stop after checking  separating sets of size 0 and 1 only.
However, these improvements require slightly more restrictive conditions in theoretical analysis (see Appendix~\ref{sec:appendix}).
We will explore the performance of Algorithm~\ref{alg:lFCI_mb} in Section~\ref{sec:experiments}.


\section{Consistency of the lFCI Algorithm}\label{sec:analysis}
In this section, we establish the consistency of our lFCI algorithm 
in
high-dimensional settings, i.e., when the number of nodes in the graph is potentially larger than the available sample size. To this end, we first discuss in detail a set of assumptions under which lFCI is consistent.
\changemarker{We focus on linear structural equation models with sub-Gaussian noise}.
Proofs are provided in Appendix~\ref{sec:appendix}. 

\subsection{Linear Structural Equation Models}
Let $G=(V,E)$ be a MAG with vertex set $V=\{1,\dots,p\}$. Let $W=(W_1,\ldots, W_p)$ be an associated random vector. 
The linear structural equation model given by $G$ assumes that 
%
$W$ solves an equation system of the form 
\begin{equation}\label{eq:SEM}
W = B W + \epsilon,
\end{equation}
where $B=(\beta_{i,j})\in\R^{p\times p}$ is a matrix 
of unknown parameters with $\beta_{ij}\not=0$ only if $j\to i$ is an edge in $G$. The random vector $\epsilon=(\epsilon_1,\dots,\epsilon_p)$ is comprised of stochastic noise with positive definite covariance matrix $\VV{\epsilon}=\Omega=(\omega_{i,j})$.  It can be partitioned into two independent subvectors $\epsilon_{\text{un}(G)}$ and $\epsilon_{V\setminus \text{un}(G)}$. Here, $\epsilon_{\text{un}(G)}$ is assumed to satisfy the global Markov property for the undirected subgraph induced by the undirected part $\text{un}(G)\subseteq V$, and $\epsilon_{V\setminus \text{un}(G)}$ satisfies the global Markov property for the subgraph formed by the bidirected edges among nodes in $V\setminus \text{un}(G)$; compare, e.g., \cite{drton:richardson:2008}. In particular, $\epsilon_i$ and $\epsilon_j$ are marginally independent when $i,j\notin\text{un}(G)$, and conditionally independent given all other errors when $i,j\in\text{un}(G)$.  Consequently, the error covariance matrix 
$\Omega$ can be permuted into   block-diagonal form with two blocks.  One block has an inverse whose support is given by the undirected edges of $G$, and the other block has its support given by the bidirected edges of $G$ \citep[Section 8]{richardson2002}.

Let $I$ be the identity matrix. 
Since $G$ is a MAG, it does not contain any directed cycles. 
Thus $I-B$ is invertible, and \eqref{eq:SEM} has a unique solution $W$ with covariance matrix
\begin{equation}\label{eq:vardecomp}
\Sigma:=\VV{W} = (I-B)^{-1} \Omega (I-B)^{-\top}.
\end{equation}
\changemarker{Conditional independences in the linear SEM correspond exactly to zero \emph{partial correlations}.  For nodes $i$ and $j$, and $S\subseteq V\setminus\{i,j\}$, the partial correlation of $W_i$ and $W_j$ given  $W_S$ is  
$$
    \rho(i,j|S) = \Sigma(i,j|S)/\sqrt{ \Sigma(i,i|S)\Sigma(j,j|S)  },$$
    where 
    $\Sigma(i,j|S)=\Sigma(i,j)-\Sigma(i,S)\Sigma(S,S)^{-1}\Sigma(S,j)$. 
    Given a sample of $n$ independent observations generated from the distribution of $W$, the corresponding \emph{sample partial correlations} $\widehat \rho(i,j|S)$ are obtained by replacing $\Sigma$ by the sample covariance matrix $\widehat\Sigma_n$.
    In order to test $i\independent j|S$ in a practical run of the lFCI algorithm, we test the vanishing of $\rho(i,j|S)$ and reject the conditional independence if $\sqrt{n-|S|-3}\left|g\big(\widehat \rho(i,j|S)\big)\right|>\Phi^{-1}(1-\alpha_n/2)$, where $g(\rho)=\frac{1}{2}\log\left(\frac{1+\rho}{1-\rho}\right)$ is Fisher's z-transform, $\Phi$ is the normal cdf, and $\alpha_n\in(0,1)$ is a significance level.}

\subsection{Consistency}
As discussed in Section~\ref{sec:localseparation}, our algorithm requires an assumption on the size of $\gamma$-local-graph separators. 
\changemarker{
\begin{assumption}[Local-separation Property]\label{ass:gxlocalsep}
	The MAG $G$
        satisfies 
        $
        L(G,\gamma)\leq \eta
        $ for the lFCI parameters $\eta$ and $\gamma$.
\end{assumption}}
\changemarker{As shown in Section~\ref{sec:localseparation}, many common  random graphs satisfy Assumption~\ref{ass:gxlocalsep} with $\eta=O(1)$ and $\gamma = O(\log p)$ with high probability as the number of nodes $p\to\infty$}.
If the graph satisfies the requirements of  Theorem~\ref{prop:hybridMAG}, 
then Assumption~\ref{ass:gxlocalsep} holds with $\eta=\eta_0+\Delta$.

We also require an assumption on the covariance matrix to establish large sample consistency of estimated conditional correlations in high dimensions.
\begin{assumption}[Covariance/precision matrix]\label{ass:cov}
The random vector
  $W=(W_1,\ldots, W_p)$
  follows
  a linear SEM of the form
  \eqref{eq:SEM}, with sub-Gaussian errors  $\epsilon$. Moreoever, the spectral norms of all $(\eta+2)\times(\eta+2)$ submatrices of its covariance matrix $\Sigma$ 
	are bounded as
	\[
	\max_{A\subseteq [p], |A|\leq \eta+2} \left( \norm{\Sigma_{A,A}}, \norm{(\Sigma_{A,A})^{-1}}  \right) \leq M < \infty.
	\] 
\end{assumption}

As in \citet{arjun2018}, we assume a faithfulness condition that is less restrictive than the $\lambda$-strong faithfulness assumption that appears, e.g., in \cite{colombo2012}. 
\begin{definition}[$(\eta,\lambda)$-strong-path-faithfulness]
	Given $\eta>0$ and $\lambda \in (0, 1)$, a distribution $P$ is 
	$(\eta,\lambda)$-strong-path-faithful to a MAG 
	$G = (V, E)$ if both of the following conditions hold:
	\begin{enumerate}
		\item[(i)] $\min \{|\rho(i, j | S)|: (i, j) \in E, S \subset V \setminus
		\{i, j\} , |S| \leq \eta\} > \lambda$,  and
		\item[(ii)] $\min \{|\rho(i, j | S)| : (i, j, S) \in N_G\} > \lambda$,
		where $N_G$ is the set of triples $(i, j, S)$ such that $i$ and $j$ are not adjacent, but for some $k \in V$, $(i, j, k)$ is an unshielded triple, and
		$i$ and $j$ are not $m$-separated given $S$. 
	\end{enumerate}
\end{definition}

\begin{assumption}[Path faithfulness and Markov property]\label{ass:faith}
  The joint distribution $P$ of the random vector
  $W$ is
	$(\eta,\lambda)$-strong-path-faithful to the MAG $G$ with $\lambda = \Omega(n
	^{-c} )$ for $c \in (0, 1/2)$.
\end{assumption}

\changemarker{The next assumption captures the local point of view underlying our algorithm and posits
small  partial correlations 
given local separators.}

\begin{assumption}[Local partial correlation]\label{ass:vanishtrekweight}
	Let $\mathcal{S}_{\eta,\gamma}(i,j)$ denote the collection of $\gamma$-local-graph separators of 
	$(i,j)$ with size at most $\eta$. It holds that 
	\[
	\changemarker{\max_{(i,j)\notin E}\min_{S\in \mathcal{S}_{\eta,\gamma(i,j)}}\left|\rho(i,j|S)\right| \leq\lambda.}
	\]
\end{assumption}

As we now show this assumption holds under a directed $\beta$-walk-summability condition.
This condition mirrors the 
walk-summable condition for undirected graphs, which holds for a large class 
of networks \citep{malioutov2006}.
\begin{assumption}[Directed $\beta$-summability]\label{ass:summable}
	The joint distribution $P$ of the random vector $W$ belongs
	to a linear SEM in the form \eqref{eq:SEM}, 
	in which the weighted adjacency matrix $B$  satisfies 
	$\norm{B}\leq \beta<1$, where $\norm{\cdot}$ denotes the spectral norm. 
\end{assumption}


If the norm of the error covariance matrix $\Omega$ 
is bounded, then 
directed $\beta$-summability implies Assumption~\ref{ass:vanishtrekweight}. We state this in the following lemma, which is a slightly modified version of Lemma~2 in \cite{arjun2018}.
\changemarker{
\begin{lemma}\label{lem:summableimpliesvanish}
	If Assumptions~\ref{ass:gxlocalsep},  \ref{ass:cov},   \ref{ass:faith}, 
	 \ref{ass:summable} are satisfied,
	 $\norm{\Omega}$ is bounded, 
	 and $\gamma$ is larger than some constant $\gamma^*(\eta,M,\lambda,\norm{\Omega})$,
	then Assumption~\ref{ass:vanishtrekweight} is also satisfied. 
\end{lemma}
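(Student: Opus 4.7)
The plan is to exploit the local-separation property to reduce the problem to bounding contributions from long paths in the SEM. Fix any non-adjacent pair $(i,j)$ in $G$. By Assumption~\ref{ass:gxlocalsep}, there exists a set $S\subseteq V_\gamma(i,j)$ with $|S|\leq \eta$ that $m$-separates $i$ and $j$ in the local subgraph $G_\gamma(i,j)$. The goal is to show that $|\rho(i,j|S)|\leq \lambda$ once $\gamma$ exceeds a threshold depending on $\eta$, $M$, $\lambda$, and $\norm{\Omega}$.

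First I would expand the covariance via the Neumann series: Assumption~\ref{ass:summable} gives $(I-B)^{-1}=\sum_{k\geq 0}B^k$, whence $\Sigma = \sum_{k,l\geq 0} B^k \Omega (B^\top)^l$ from \eqref{eq:vardecomp}. This yields a trek-type expansion of each $\Sigma_{ij}$ in the MAG: a sum indexed by pairs of directed paths into $i$ and $j$ whose top endpoints are joined either by coincidence (along the diagonal of $\Omega$) or by an off-diagonal $\Omega$ entry coming from the bidirected/undirected block structure described in Section~\ref{sec:analysis}. A Schur-complement computation then writes the partial covariance $\Sigma(i,j|S)$ as an analogous sum in which only the treks that are $m$-connecting given $S$ in the full graph $G$ contribute.

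Next I would combine this representation with the local-separation hypothesis. Since $S$ $m$-separates $i$ and $j$ within $G_\gamma(i,j)$, any trek contributing to $\Sigma(i,j|S)$ must visit a vertex outside $V_\gamma(i,j)$, so the two directed paths composing it have combined length strictly greater than $\gamma$. The bound $\norm{B}\leq \beta<1$ together with boundedness of $\norm{\Omega}$ controls the contribution of all such treks via the tail of the matrix series, giving $|\Sigma(i,j|S)|\leq C(\beta,\norm{\Omega})\,\beta^{\gamma}$ for an explicit constant $C$. Assumption~\ref{ass:cov} applied to $A=\{i,j\}\cup S$ (with $|A|\leq \eta+2$) yields $\norm{(\Sigma_{A,A})^{-1}}\leq M$, which forces the conditional variances $\Sigma(i,i|S)$ and $\Sigma(j,j|S)$ to be at least $1/M$. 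Therefore $|\rho(i,j|S)|\leq M\cdot C(\beta,\norm{\Omega})\,\beta^{\gamma}$, and taking $\gamma^*(\eta,M,\lambda,\norm{\Omega})=\lceil \log(MC/\lambda)/\log(1/\beta)\rceil$ makes the right-hand side at most $\lambda$ whenever $\gamma\geq \gamma^*$, yielding Assumption~\ref{ass:vanishtrekweight}.

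The main obstacle I anticipate is establishing the trek-based representation of $\Sigma(i,j|S)$ cleanly in a MAG that can have both bidirected edges and a nontrivial undirected component $\text{un}(G)$. The classical trek rule is transparent for pure DAGs, but here one must lift it through the parametrization $\Sigma=(I-B)^{-1}\Omega(I-B)^{-\top}$ with the two-block structure of $\Omega$, whose inverse---not $\Omega$ itself---is supported on the undirected edges of $G$. One must then verify that Schur-complement conditioning on $S$ interacts correctly with $m$-separation so that only genuinely $m$-connecting treks survive, after which the geometric decay argument from $\norm{B}\leq \beta$ is routine and gives the claimed threshold $\gamma^*$.
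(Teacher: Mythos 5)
Your high-level plan is indeed the intended route for this lemma (the paper obtains it as a slight modification of Lemma~2 of \citet{arjun2018}, i.e.\ a walk-summability argument): pick a $\gamma$-local-graph separator $S$ with $|S|\le\eta$ from Assumption~\ref{ass:gxlocalsep}, argue that $\|B\|\le\beta<1$ and bounded $\|\Omega\|$ make $|\Sigma(i,j|S)|$ of order $\beta^{\gamma}$, lower-bound $\Sigma(i,i|S),\Sigma(j,j|S)$ by $1/M$ via Assumption~\ref{ass:cov} applied to $A=\{i,j\}\cup S$, and solve $MC\beta^{\gamma}\le\lambda$ for $\gamma^*$. However, the pivotal step of your argument is not established and, as literally stated, is false: the partial covariance $\Sigma(i,j|S)$ is \emph{not} the subsum of the trek expansion of $\Sigma_{ij}$ over ``$m$-connecting treks given $S$''. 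Treks contain no colliders, so your claim amounts to $\Sigma(i,j|S)=\sum_{\text{treks avoiding }S}(\text{trek weight})$; already for $i\to k\leftarrow j$ with $S=\{k\}$ there is no trek between $i$ and $j$ at all, yet $\Sigma(i,j|S)\neq0$. In general $\Sigma(i,j|S)=\Sigma_{ij}-\Sigma_{iS}\Sigma_{SS}^{-1}\Sigma_{Sj}$ contains terms generated by $\Sigma_{SS}^{-1}$ (equivalently, the Schur complement is a ratio of sums over trek \emph{systems}, not a sum of single trek weights), and these terms are not of the form ``long trek weight'' to which your geometric tail bound applies. This matters precisely because $S$ is only a \emph{local} separator: a collider $c$ inside $G_\gamma(i,j)$ may have a descendant in $S$ reachable only along a directed path that exits $V_\gamma(i,j)$, so a short collider path is blocked in $G_\gamma(i,j)$ but $m$-connecting in $G$; its contribution to $\Sigma(i,j|S)$ is a product of trek pieces and $\Sigma_{SS}^{-1}$ entries, and exiting the local graph does not by itself force those pieces to be long.

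Two further points you treat as routine also need care. First, even granting that the surviving single-trek contributions all have combined length $>\gamma$ (your extraction of a short collider-free path inside $G_\gamma(i,j)$ does show short treks must hit $S$), bounding a \emph{subsum} of treks by the norm of the Neumann tail $\sum_{k+l>\gamma}B^k\Omega(B^\top)^l$ is not legitimate with signed weights; one needs absolute (walk-sum) control, e.g.\ of $\||B|\|$, or an argument that compares $\Sigma$ with the covariance of a suitably truncated/local model, rather than just $\|B\|\le\beta$. Second, since Assumption~\ref{ass:vanishtrekweight} only requires the minimum over $S\in\mathcal S_{\eta,\gamma}(i,j)$ to be small, the proof should work with a carefully constructed separator (in the spirit of Lemmas~\ref{lem:sepsetsize}--\ref{lem:mb}), not an arbitrary one, to keep the conditioning terms controllable. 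In short, quantifying how the $\Sigma_{SS}^{-1}$-terms and the long-range connections jointly perturb the exact cancellation that would occur for a true $m$-separator is the actual content of the lemma---the paper's discussion explicitly flags controlling the contribution of long treks to \emph{conditional} covariances as the hard part---so your proposal is a plausible outline whose central step is missing rather than a complete proof.
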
}

\changemarker{To further justify Assumption~\ref{ass:vanishtrekweight}, we conducted 
simulation studies (see Appendix~\ref{sec:simoracle} and Section~SM5 of the Supplementary Material) which show that in a number of natural settings}, Assumption~\ref{ass:vanishtrekweight} is likely to hold. 

We now establish our main consistency result. 
\changemarker{
\begin{theorem}\label{thm:consistencylfci}
	Suppose Assumptions~\ref{ass:gxlocalsep}, \ref{ass:cov}, \ref{ass:faith},  \ref{ass:vanishtrekweight} hold,  
	and $n=\Omega((\log p)^{1/(1-2c)})$ for $c\in(0,1/2)$ from Assumption~\ref{ass:faith}.  Then
	there exists a sequence of
	significance levels 
	$\alpha_n\to 0$ such that  Algorithm~\ref{alg:lFCI} consistently learns a PAG for $[G]$  from an i.i.d.~sample of size $n$. 
	Moreover, 
	if Assumption~\ref{ass:localdiscpaths} holds,
	then the consistently learned PAG is maximally informative.
\end{theorem}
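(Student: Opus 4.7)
The plan is to combine a uniform concentration result for sample partial correlations with the soundness/completeness statement in Lemma~\ref{lem:orient}. The argument decouples into a statistical part (every conditional independence test executed by lFCI returns the correct decision with high probability) and a graph-theoretic part (given correct test outcomes, the algorithm's output coincides with what the population oracle version would produce, which is covered by Lemma~\ref{lem:orient}).

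First, I would count the tests. lFCI only queries tests $i \perp j \mid S$ with $|S| \le \eta$ and $S \subseteq J_\gamma(i,j,C)$, and at most $O(p^{\eta+2})$ such triples $(i,j,S)$ can ever be examined across the whole execution. Next, I would establish a uniform concentration bound for the Fisher z-transform of the sample partial correlation. Assumption~\ref{ass:cov} guarantees that every $(|S|+2)\times(|S|+2)$ submatrix of $\Sigma$ has spectral norm and minimum eigenvalue bounded (by $M$ and $1/M$), and the sub-Gaussian noise implies standard deviation of entries of $\widehat\Sigma_n - \Sigma$ of order $\sqrt{\log p / n}$. Standard delta-method / Taylor-expansion arguments (as in \citet{kalisch2007} and reused by \citet{arjun2018}) then yield a bound of the form
\[
    \Pr\!\Big( \sup_{|S|\le \eta}\big|g(\widehat\rho(i,j|S)) - g(\rho(i,j|S))\big| > \tfrac{\lambda}{2}\Big)
    \le c_1 \exp\big(-c_2 n\lambda^2\big)
\]
for constants $c_1,c_2$ depending only on $M$ and $\eta$.

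Now I would pick $\alpha_n$ so that the normal-based threshold $\Phi^{-1}(1-\alpha_n/2)/\sqrt{n-|S|-3}$ sits in the interval $(g(\lambda)/4,\ g(\lambda)/2)$. Using Assumption~\ref{ass:faith}(i)--(ii) for edges and triples in $N_G$, and Assumption~\ref{ass:vanishtrekweight} for non-edges, separate the tests into two classes: those where the truth has $|\rho| > \lambda$ and those where $|\rho| \le \lambda$ (for at least the local-graph separator the algorithm eventually queries). A correctly concentrated z-statistic lies on the right side of the threshold in each class, so each individual test is correct with failure probability at most $c_1\exp(-c_2 n\lambda^2)$. Because $\lambda = \Omega(n^{-c})$ and $n = \Omega((\log p)^{1/(1-2c)})$, we have $n\lambda^2 = \Omega(n^{1-2c}) = \Omega(\log p)$, so by a union bound over the $O(p^{\eta+2})$ tests, the probability that any test errs is $o(1)$, provided we pick $\alpha_n \to 0$ at an appropriate polynomial-in-$p^{-1}$ rate.

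On the event that all tests are correct, lFCI behaves exactly as the population oracle version: the search pool $J_\gamma(i,j,C)$ always contains a $\gamma$-local-graph separator whenever $(i,j)$ is a non-edge (Lemma~\ref{lem:searchingpool} together with Assumption~\ref{ass:gxlocalsep}, which guarantees such a separator has size $\le \eta$), and the modified orientation rule $\mathcal{R}_4'$ never produces an incorrect arrowhead. Lemma~\ref{lem:orient} then delivers the conclusion: the returned object is a PAG for $[G]$, and under Assumption~\ref{ass:localdiscpaths} every discriminating-path orientation that the oracle FCI would make is also made by lFCI, so the output equals the maximally informative PAG. The main obstacle is purely bookkeeping --- ensuring the constants in the concentration bound are uniform over all conditioning sets of size at most $\eta$ and tracking the polylogarithmic dependence between $\lambda$, $n$, and $\alpha_n$ so that the union bound still closes --- but no new graph-theoretic ideas are required beyond the lemmas already proved.
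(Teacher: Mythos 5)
Your proposal follows essentially the same route as the paper's proof: a uniform concentration bound for sample partial correlations over all conditioning sets of size at most $\eta$ (the paper's Lemma~\ref{lem:estimate}, proved via entrywise covariance concentration and a perturbation bound), a threshold calibrated to $\lambda$ from Assumptions~\ref{ass:faith} and \ref{ass:vanishtrekweight} so that all CI decisions are correct on a high-probability event, and then Lemma~\ref{lem:orient} to conclude soundness and, under Assumption~\ref{ass:localdiscpaths}, maximal informativeness, with the z-test significance levels $\alpha_n$ obtained by translating the thresholding rule. The only cosmetic difference is that you union-bound directly over the $O(p^{\eta+2})$ tests on the Fisher z-scale, whereas the paper union-bounds over covariance entries and invokes the equivalence of the z-transform test with partial-correlation thresholding.
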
}
The sample complexity of lFCI established in Theorem~\ref{thm:consistencylfci} 
offers considerable improvement over the worst-case sample complexity of the FCI \changemarker{and RFCI} algorithms for graphs with unbounded \changemarker{size of D-SEP sets and the FCI+ algorithm for graphs with large node degrees}.

As a corollary, we also improve the 
theory of reduced PC \citep{arjun2018} for DAG learning: 
we derive the correctness of reduced PC  and its ``approximate version"  
under the local path condition (Definition~\ref{def:localpathproperty}). 
The sample complexity is also improved by applying an alternative error propagation computation; see Appendix~\ref{sec:appendix} for details. 
\changemarker{
\begin{corollary}\label{cor:consistencyrpc}
    Suppose $G$ is a DAG whose skeleton satisfies the $(\eta,\gamma)$-local path property. 
	Under Assumptions~\ref{ass:cov},
	\ref{ass:faith},
	\ref{ass:vanishtrekweight} and 
	assuming $n=\Omega((\log p)^{1/(1-2c)})$,
	there exists a sequence of
	significance levels  $\alpha_n\to 0$ such that rPC and the approximate rPC both consistently learn the CPDAG of $G$ from an i.i.d.~sample of size $n$. 
\end{corollary}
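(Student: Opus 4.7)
The plan is to deduce the corollary from Theorem~\ref{thm:consistencylfci} after checking its hypotheses for DAGs, and then to transfer the guarantee from lFCI to rPC and its approximate version.

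\emph{Checking hypotheses and invoking Theorem~\ref{thm:consistencylfci}.} The $(\eta,\gamma)$-local path property on $\text{skel}(G)$ implies $L(G,\gamma)\le\eta$ by Lemma~\ref{lem:sepsetsize}, so Assumption~\ref{ass:gxlocalsep} holds. Because $G$ is a DAG it contains no bidirected edges, so discriminating paths cannot occur; Assumption~\ref{ass:localdiscpaths} is therefore vacuous and the maximally informative PAG for $[G]$ coincides with the CPDAG of $G$. Assumptions~\ref{ass:cov}, \ref{ass:faith}, and \ref{ass:vanishtrekweight} are hypotheses of the corollary, so Theorem~\ref{thm:consistencylfci} yields a sequence $\alpha_n\to 0$ for which lFCI consistently recovers the CPDAG from $n=\Omega((\log p)^{1/(1-2c)})$ i.i.d.\ samples.

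\emph{Transferring from lFCI to rPC and its approximate version.} The proof of Lemma~\ref{lem:sepsetsize} exhibits a concrete minimal $\gamma$-local-graph separator for any non-adjacent pair $(i,j)$, namely $S=\PA(G_\gamma(i,j),i)$ when $i\notin\AN(G,j)$, of size at most $\eta$. This separator lies within the short-path neighborhood of $i$ and is therefore contained in both the full search pool consulted by exact rPC (all size-$\le\eta$ subsets of $V\setminus\{i,j\}$) and the reduced pool consulted by the approximate rPC (subsets drawn from the neighborhood up to path length $\gamma$). Hence both variants have the opportunity to identify the very same separator; Assumption~\ref{ass:faith} rules out false-positive removals, while Assumption~\ref{ass:vanishtrekweight} guarantees that at least one qualifying separator is flagged by the test. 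Consequently all three algorithms produce the same skeleton, and the v-structure step --- which only consults whether the middle node of an unshielded triple lies in the recorded separator --- together with Meek's rules reconstructs the identical CPDAG.

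\emph{Sample complexity and the main obstacle.} With Assumption~\ref{ass:cov}, sub-Gaussian concentration of Fisher's $z$-transform yields a uniform bound on $|\widehat\rho(i,j|S)-\rho(i,j|S)|$ over $|S|\le\eta$ with constants depending only on $M$. A union bound over the $O(p^{\eta+2})$ tests, combined with $\lambda=\Omega(n^{-c})$ and $\alpha_n$ chosen so that $\Phi^{-1}(1-\alpha_n/2)$ grows slightly slower than $\sqrt{n}\,\lambda$, drives the overall error probability to zero once $n=\Omega((\log p)^{1/(1-2c)})$. The main obstacle is the sequential nature of the edge-removal phase: because the pool $J_\gamma(i,j,C_{\mathrm{old}})$ depends on a previously estimated skeleton, one must argue that with high probability the intermediate skeletons remain supergraphs of $\text{skel}(G)$, so the total number of tests and the conditioning sets under consideration remain polynomial in $p$ throughout. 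The local-path property caps the size of short-path neighborhoods and thus the relevant search pools, yielding the tighter error propagation that is responsible for the improvement over the sample-complexity bound of \cite{arjun2018}.
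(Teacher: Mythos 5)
Your core argument (the second and third paragraphs) is essentially the paper's own proof: the separator constructed in the proof of Lemma~\ref{lem:sepsetsize}, $S=\PA(G_\gamma(i,j),i)\subseteq \ADJ(G,i)$, is a small \emph{neighborhood-based} set, so both rPC and its approximate version can find it; part (i) of the strong-path-faithfulness condition in Assumption~\ref{ass:faith} rules out false edge removals, Assumption~\ref{ass:vanishtrekweight} guarantees such a local separator is accepted by the test, and the orientation phase is handled by the unshielded-triple rule and Meek's rules exactly as in the proof of Lemma~\ref{lem:orient} for $\mathcal{R}_0$--$\mathcal{R}_3$ (part (ii) of the faithfulness condition makes any accepted separator a genuine d-separator for unshielded triples, so v-structure decisions are correct), with the sample complexity inherited from Lemma~\ref{lem:estimate} as in Theorem~\ref{thm:consistencylfci}.

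One caveat: your opening reduction through Theorem~\ref{thm:consistencylfci} rests on the claim that a DAG admits no discriminating paths, so that Assumption~\ref{ass:localdiscpaths} is vacuous. That claim is false: discriminating paths only require colliders, not bidirected edges --- e.g., in the DAG with edges $i\to x$, $y\to x$, $x\to j$, $y\to j$ and $i$ not adjacent to $j$, the path $(i,x,y,j)$ is a discriminating path for $y$. Relatedly, the maximally informative PAG of the MAG equivalence class $[G]$ is not literally the CPDAG of $G$ (the MAG class also contains graphs with bidirected edges, so tail marks of the CPDAG become circles in the PAG). Fortunately this detour is unnecessary: rPC never applies the discriminating-path rule $\mathcal{R}_4$, and your direct argument --- correct skeleton, correct v-structures, then Meek's rules --- already yields the CPDAG, which is precisely how the paper argues.
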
}


\section{Numerical Experiments}\label{sec:experiments}
We explore the performance of our algorithm on three types of graphs:
Erd{\H{o}}s-Renyi, power-law, and  Watts-Strogatz graphs. \as{Since generating general random MAGs is challenging, for each family, we generate DAGs with  $p = |V| \in \{100,200,500\}$ nodes and average node degree 2 using the \texttt{igraph} library in R.}
Edge weights are drawn uniformly from \changemarker{$(-1, -0.1]\cup [0.1, 1)$}, and $n=\{100,200,500\}$ observations are generated using the \texttt{rmvDAG} function in the \texttt{pcalg} library \citep{pcalg}. 
We randomly choose $q=0.2 p$ nodes as latent variables, and the rest as observed. We include no selection variables. 

We run FCI, RFCI, FCI+, lFCI, and the Markov blanket version of lFCI on the observed data, with thresholds 
$\alpha=\{10^{-20,},10^{-10},10^{-5},10^{-4},0.5\cdot10^{-4},10^{-3},0.5\cdot10^{-3},10^{-2}\}$.
The \texttt{pcalg} library is used for the existing methods. 
\changemarker{We run lFCI with $\eta=2$
and  $\gamma=\lceil \log p\rceil$}. 
We repeat the experiment \changemarker{200} times for each $\alpha$.
The  maximum node degrees of Erd{\H{o}}s-Renyi graphs and Watts-Strogatz graphs ranges from  7 to 9. The maximum node degrees of power-law graphs grow with $p$, \changemarker{with medians} 41, 68 and 130. 
\begin{figure}[t]
	\centering
	\includegraphics[width=0.72\linewidth]{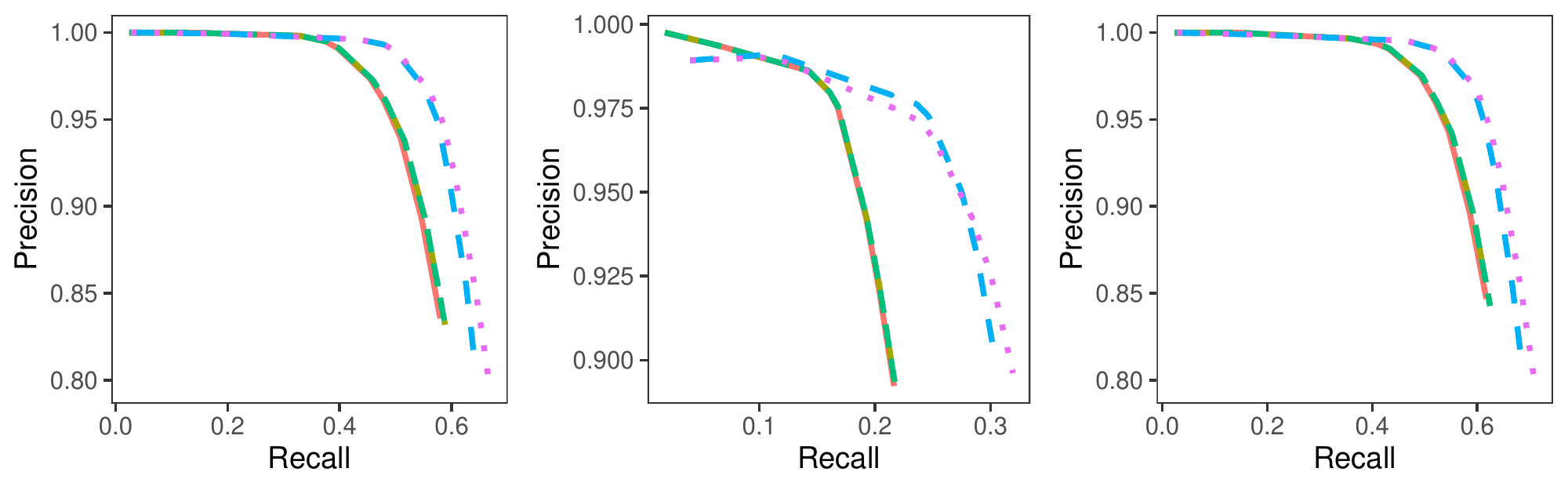}
	\includegraphics[width=0.72\linewidth]{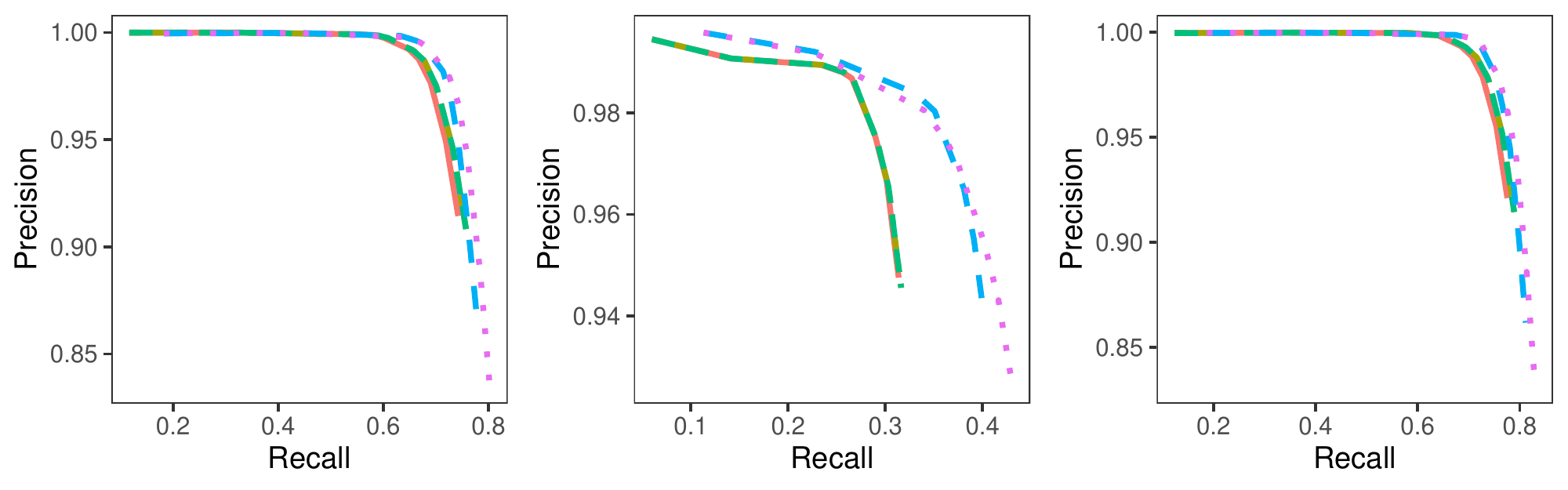}
	\includegraphics[width=0.72\linewidth]{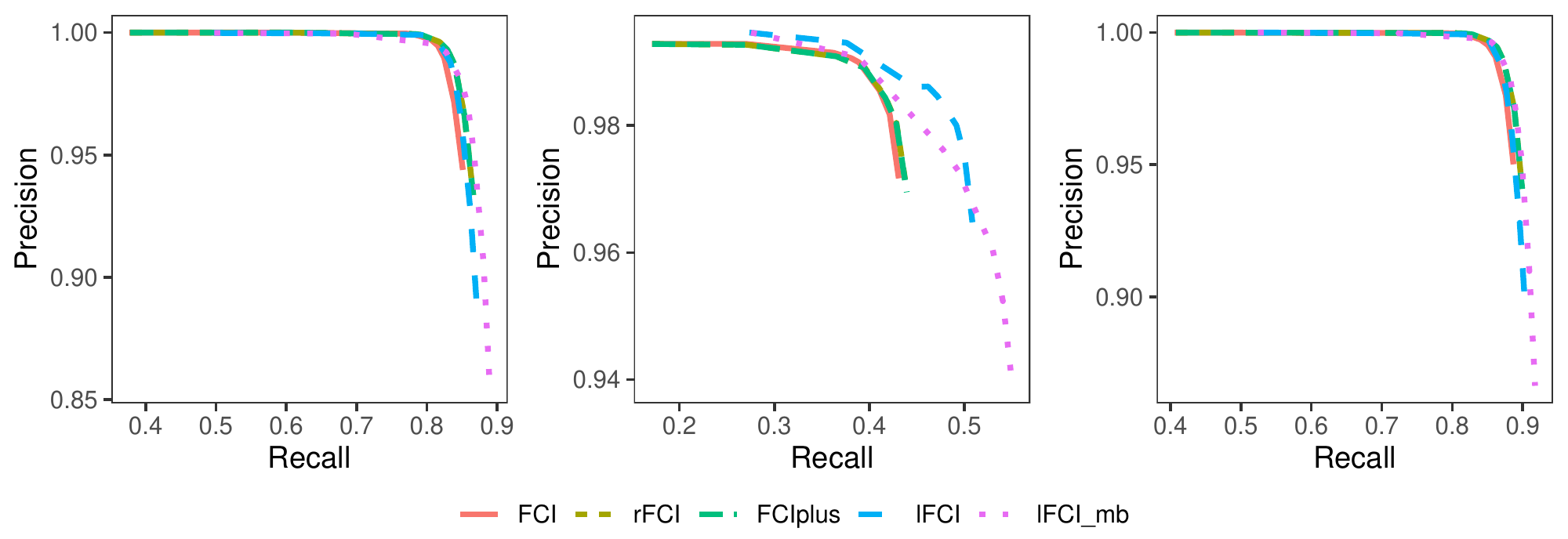}
	\caption{\changemarker{Precision-recall (PR) curves
		for graphs with $p=100$ from  Erd{\H{o}}s-Renyi 
	(left column), power-law (middle column), and Watts-Strogatz (right column)  graphs based on $n=100$ (top row), $n=200$ (middle row), $n=500$ (bottom row) samples.}} 
	\label{fig:p100}
\end{figure}

\changemarker{
The performance of the algorithms
are evaluated using precision-recall curves 
with respect to skeletons in   Figures~\ref{fig:p100}--\ref{fig:p500}. 
In all settings, Algorithm~\ref{alg:lFCI} and \ref{alg:lFCI_mb}  
offer improvement over existing methods
in terms of larger area under curves.
Though they sometimes have lower precision
in the beginning of the curves, 
they offer better trade-offs for higher recall values. 
The improvement is  substantial in  high-dimensional cases.
For power-law graphs, 
Algorithm~\ref{alg:lFCI} and \ref{alg:lFCI_mb} are superior in both low- and high-dimensional settings}.

\changemarker{
We also compare the edge orientation performances by counting the average number of different edge marks between the outputs
and
the true maximally informative PAG. 
We only show comparison in the case of $n=200$ and $\alpha=10^{-4}$.
Differences in edge marks  are shown together with the  Structural Hamming Distances (SHD) between output and truth in Figure~\ref{fig:orient}.
The performance of Algorithm~\ref{alg:lFCI} 
is on par with FCI/RFCI and superior to FCI+ in both low- and high-dimensional cases.  
Though Algorithm~\ref{alg:lFCI_mb} 
performs better in  the 
skeleton steps, the orientation steps are not as reliable in high-dimensional cases. 
A comparison of computational cost is presented in the  Supplementary Material.
}

\begin{figure}[t]
	\centering
	\includegraphics[width=0.72\linewidth]{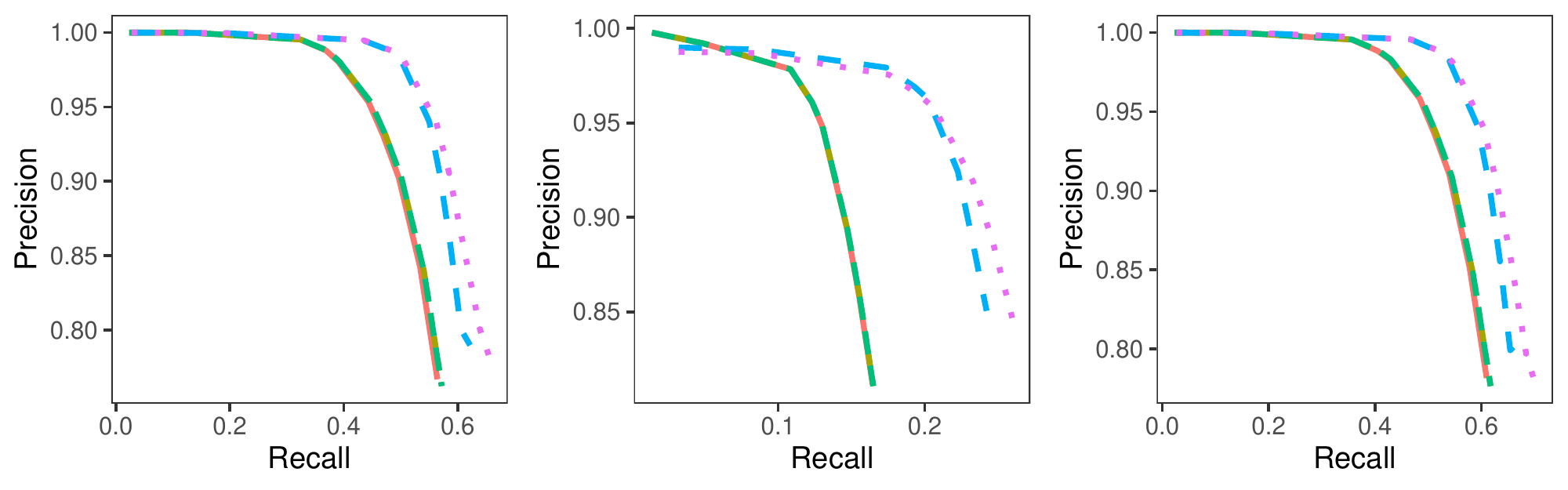}
	\includegraphics[width=0.72\linewidth]{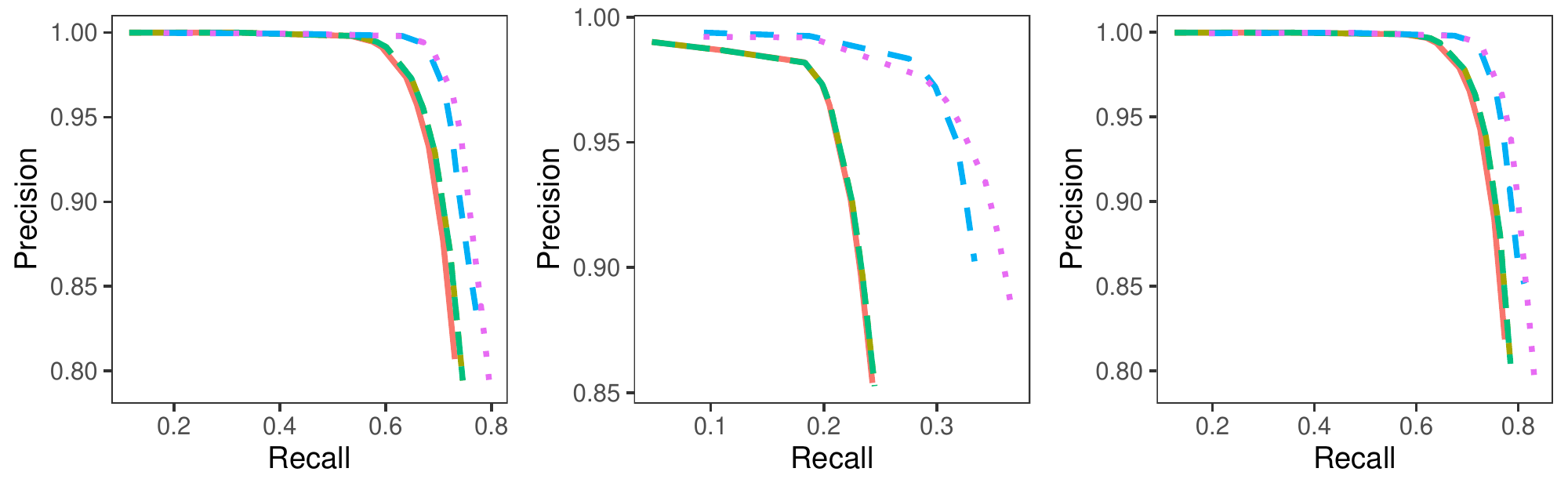}
	\includegraphics[width=0.72\linewidth]{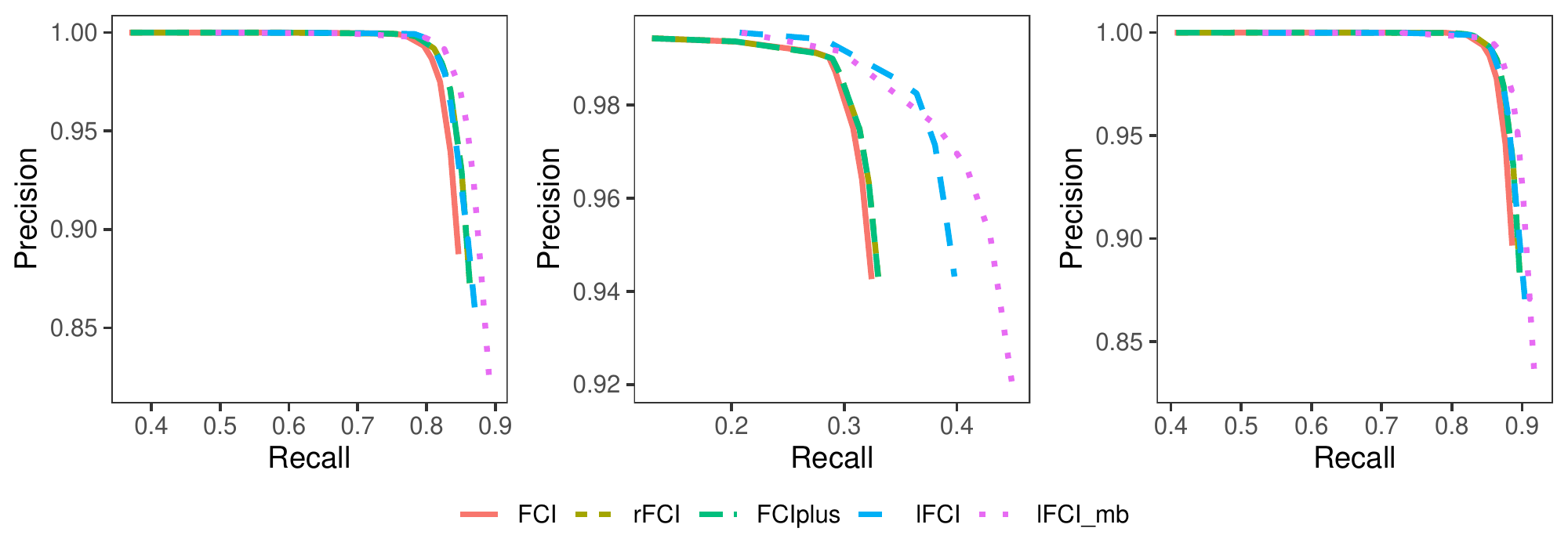}
	\caption{\changemarker{Precision-recall (PR) curves
		for graphs with $p=200$ from  Erd{\H{o}}s-Renyi 
	(left column), power-law (middle column), and Watts-Strogatz (right column)  graphs based on $n=100$ (top row), $n=200$ (middle row), $n=500$ (bottom row) samples.} }
	\label{fig:p200}
\end{figure}

\begin{figure}[t]
	\centering
	\includegraphics[width=0.72\linewidth]{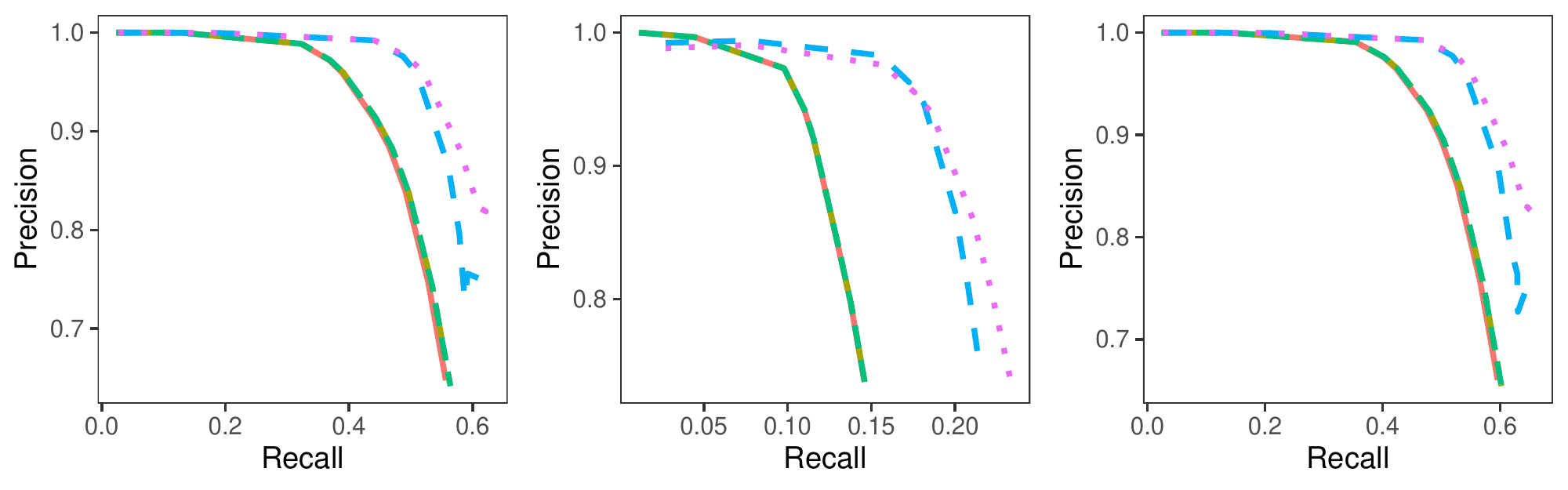}
	\includegraphics[width=0.72\linewidth]{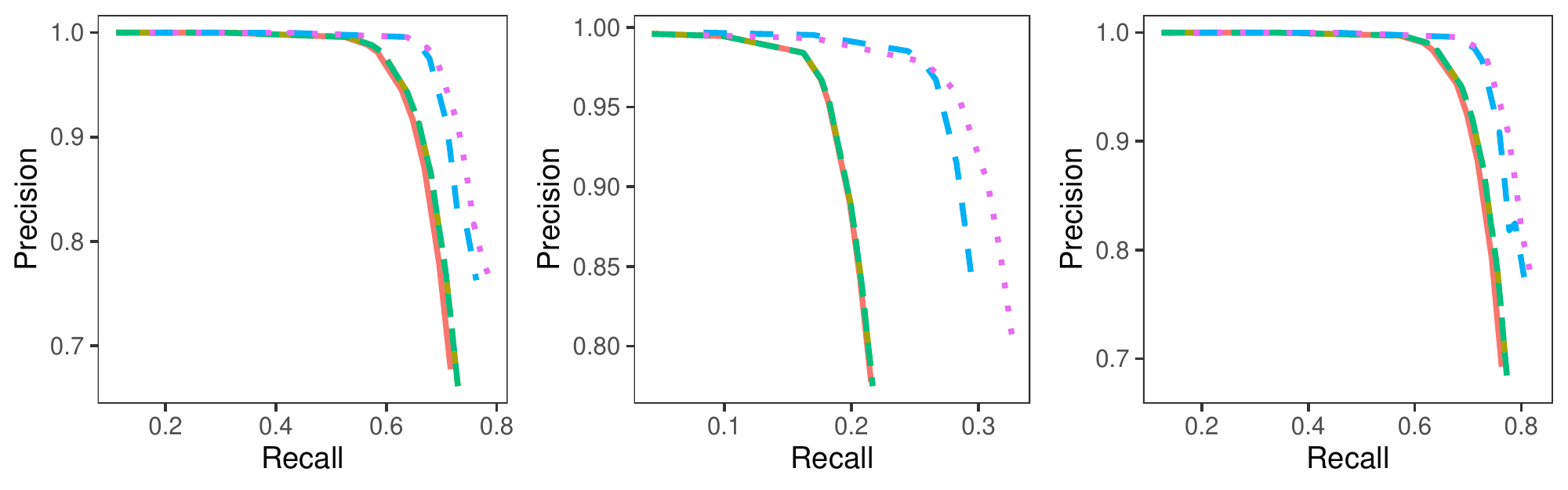}
	\includegraphics[width=0.72\linewidth]{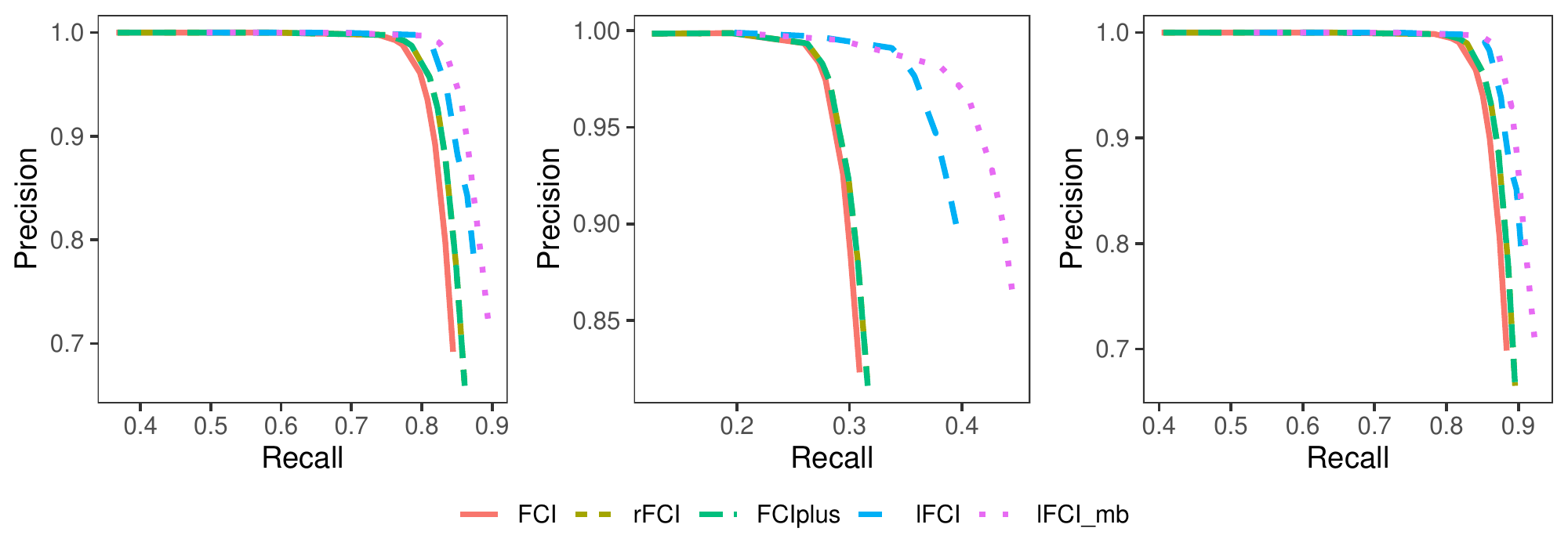}
	\caption{\changemarker{Precision-recall (PR) curves
		for graphs with $p=500$ from  Erd{\H{o}}s-Renyi 
	(left column), power-law (middle column), and Watts-Strogatz (right column)  graphs based on $n=100$ (top row), $n=200$ (middle row), $n=500$ (bottom row) samples.} }
	\label{fig:p500}
\end{figure}

%
%
\begin{figure}[t]
	\centering
	\includegraphics[width=0.82\linewidth]{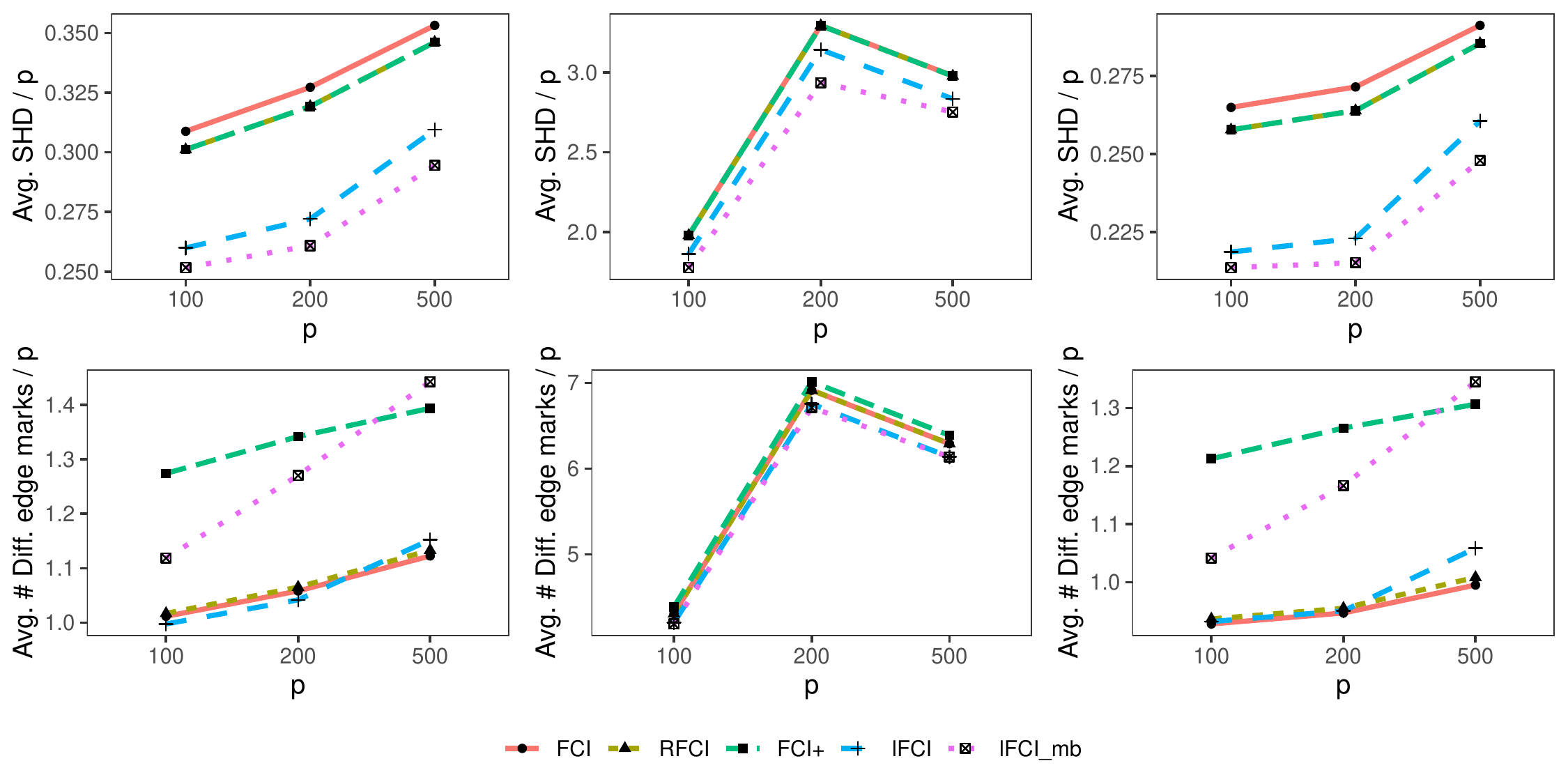}
	\caption{\changemarker{Average Structural Hamming Distances divided by $p$ (top row) and difference in edge marks divided by $p$ (bottom row) with $p=100,200,500$ from Erd{\H{o}}s-Renyi 
	(left column), power-law (middle column), and Watts-Strogatz (right column) graphs based on $n=200$. } }
	\label{fig:orient}
\end{figure}

\section{Application: Gene Regulatory Network Inference}\label{sec:application}
We apply the lFCI algorithm to gene expression data from 
The Cancer Genome Atlas (TCGA) database \citep{tcga2012}.
The dataset contains the gene expression levels measured by RNAseq for $20530$ genes from $n=551$ patients with prostate cancer.
The aim of this application is to infer the regulatory network among the genes.
 
We construct a network of known gene regulatory relations among genes 
measured in TCGA with at least one known interaction in BIOGRID \citep{Stark2006}; this leaves us with a graph $G^{\text{B}}$ with $p=2478$ nodes. 
The graph $G^{\text{B}}$ includes many disjoint subgraphs, of which only
10 subgraphs (denoted $G^{\text{B}}_i=(V^{\text{B}}_i,E^{\text{B}}_i)$, $i=1,\ldots,10$) have more than 2 nodes; see Figure~\ref{fig:biogrid}. 
Since BIOGRID represents gene regulatory
relations in normal cells, $G^{\text{B}}$ might not accurately capture the interactions in cancerous cells \citep{Ideker2012,shojaie2021}. 
We denote the true unobserved network
in cancerous cells as $G^{\text{T}}$ to underscore this difference. 

In practice, the true graphs,  $G_i^{\text{T}}=(V_i,E_i^{\text{T}})$, are unknown.
\as{However, the clustering of nodes in $G^T$ are expected to be similar to those in $G^B$. Thus, assuming that induced subgraphs of $G^{\text{T}}$ over $V_1^\text{B}, \ldots, V_{10}^\text{B}$ are also disjoint, we use the PC algorithm to estimate the edge sets of 
$G_i^{\text{T}}$, $i=1, \ldots, 10$.}
Specifically, for each $V_i$, we run PC 
with different significance
levels
and choose the one that maximizes the eBIC score
\citep{foygel2010}.
We denote the resulting CPDAGs as $\widetilde G_i=(V_i,\widetilde E_i)$. 
The skeleton of these graphs are shown
in Figure~\ref{fig:tcga}.
Figures~\ref{fig:biogrid}
and \ref{fig:tcga}
demonstrates the similarities and discrepancies between $G^{\text{B}}$ and $G^{\text{T}}$. 

To capture the situation in which researchers only have access to data from a subset of genes, we run the following experiment on each subgraph: We randomly sample a subset of genes and infer their causal relations using PC, FCI, and lFCI.
In the $\ell$-th experiment, we randomly sample half of the genes from $V_i$ as observed nodes, denoted $V_i^\ell$, and treat the rest as unobserved. 
To make the problem scientifically interesting, we assign higher probability of being observed to genes with degree more than 8 in $\widetilde{G}$.
The ground truth is the MAG deduced from $G_i^{\text{T}}$ over $V_i^\ell$.
In practice, we use the 
MAG deduced from $\widetilde G_i$, and  call this MAG $\widetilde G_i^\ell$. 
We run PC, FCI and lFCI (setting $\eta=2$ and $\gamma=\lceil \log |V_i|\rceil$) with different threshold levels ($\alpha$); for each output (PAG), we find a MAG compatible with it, 
compute the likelihood of the 
corresponding Gaussian graphical model, 
and compute the eBIC score with tuning parameter set to 0.5;
we allow different $\alpha$ levels for PC, FCI and lFCI that each 
maximizes the eBIC score. 
\changemarker{For comparison, 
we also run a baseline method:
we first infer an undirected graph using graphical lasso (glasso) \citep{Friedman2007} with penalty tuned by eBIC. 
\as{For this baseline}, we assign edge marks (arrowhead, tail, and circle) randomly to obtain a mixed graph}.

\begin{figure}[t]
\centering
\begin{tabular}[c]{ccccc}
\begin{subfigure}{0.16\textwidth}
    \centering
    \includegraphics[width=\textwidth]{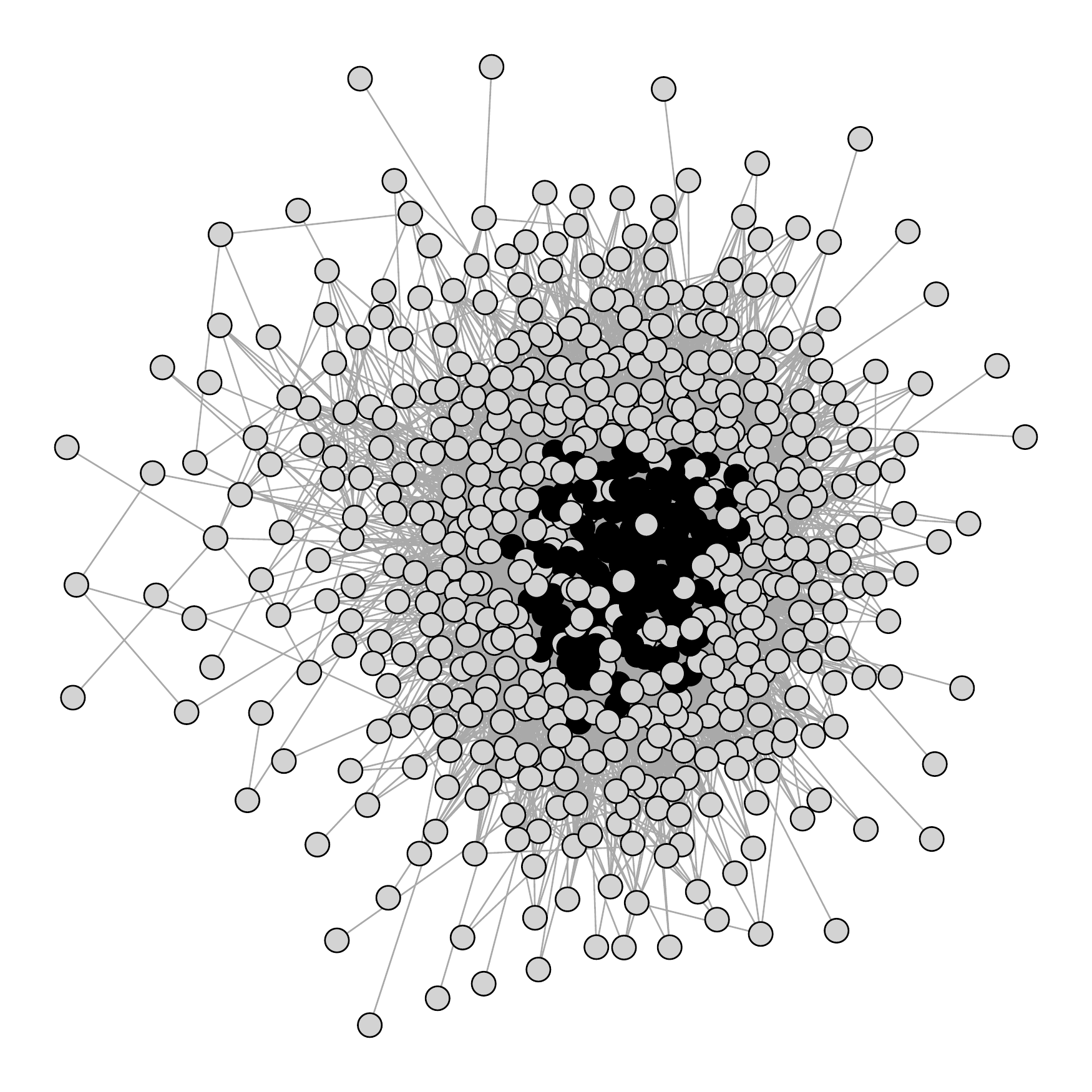}
\end{subfigure}&
\begin{subfigure}{0.16\textwidth}
    \centering
    \includegraphics[width=\textwidth]{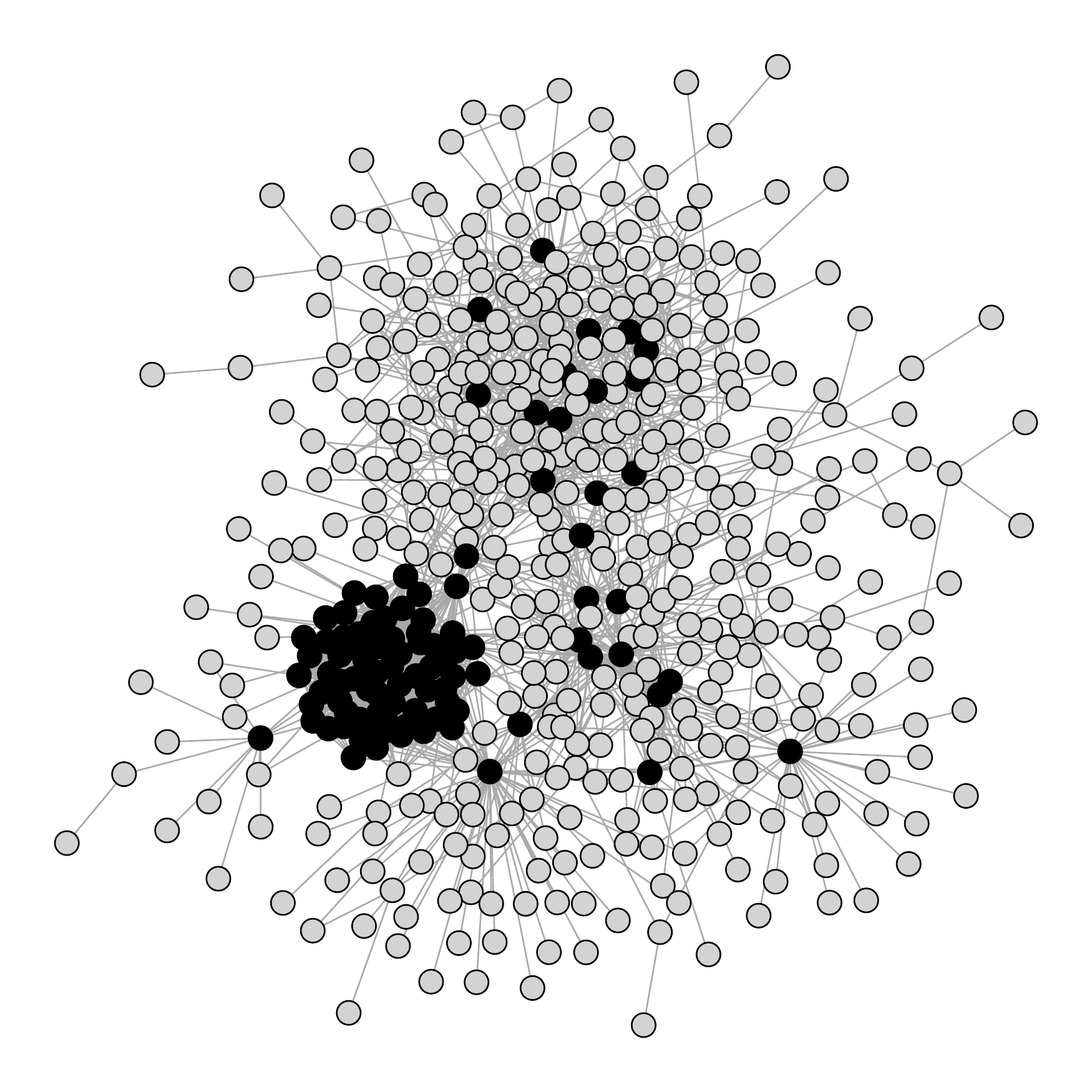}
\end{subfigure}
\begin{subfigure}{0.16\textwidth}
    \centering
    \includegraphics[width=\textwidth]{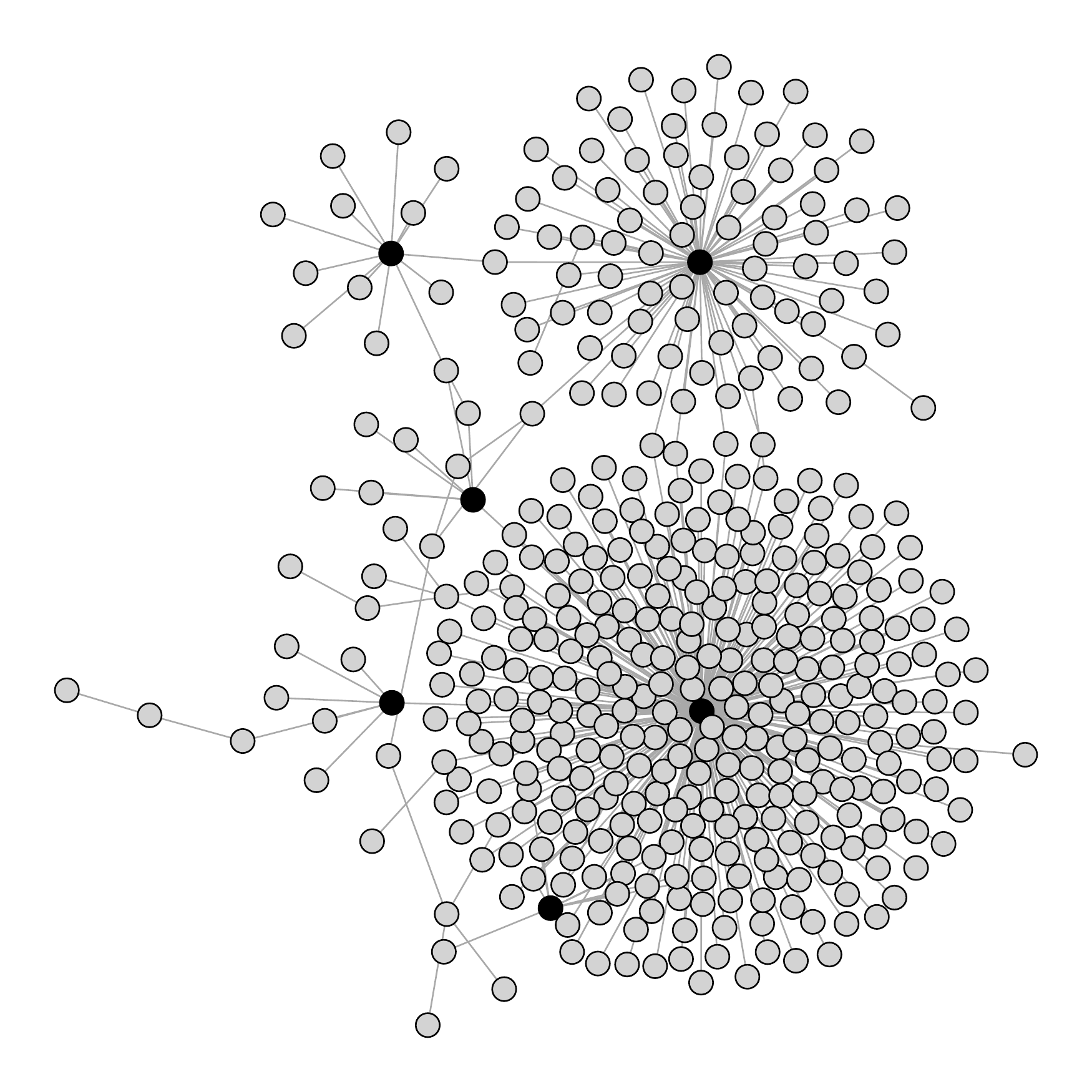}
\end{subfigure}
\begin{subfigure}{0.16\textwidth}
    \centering
    \includegraphics[width=\textwidth]{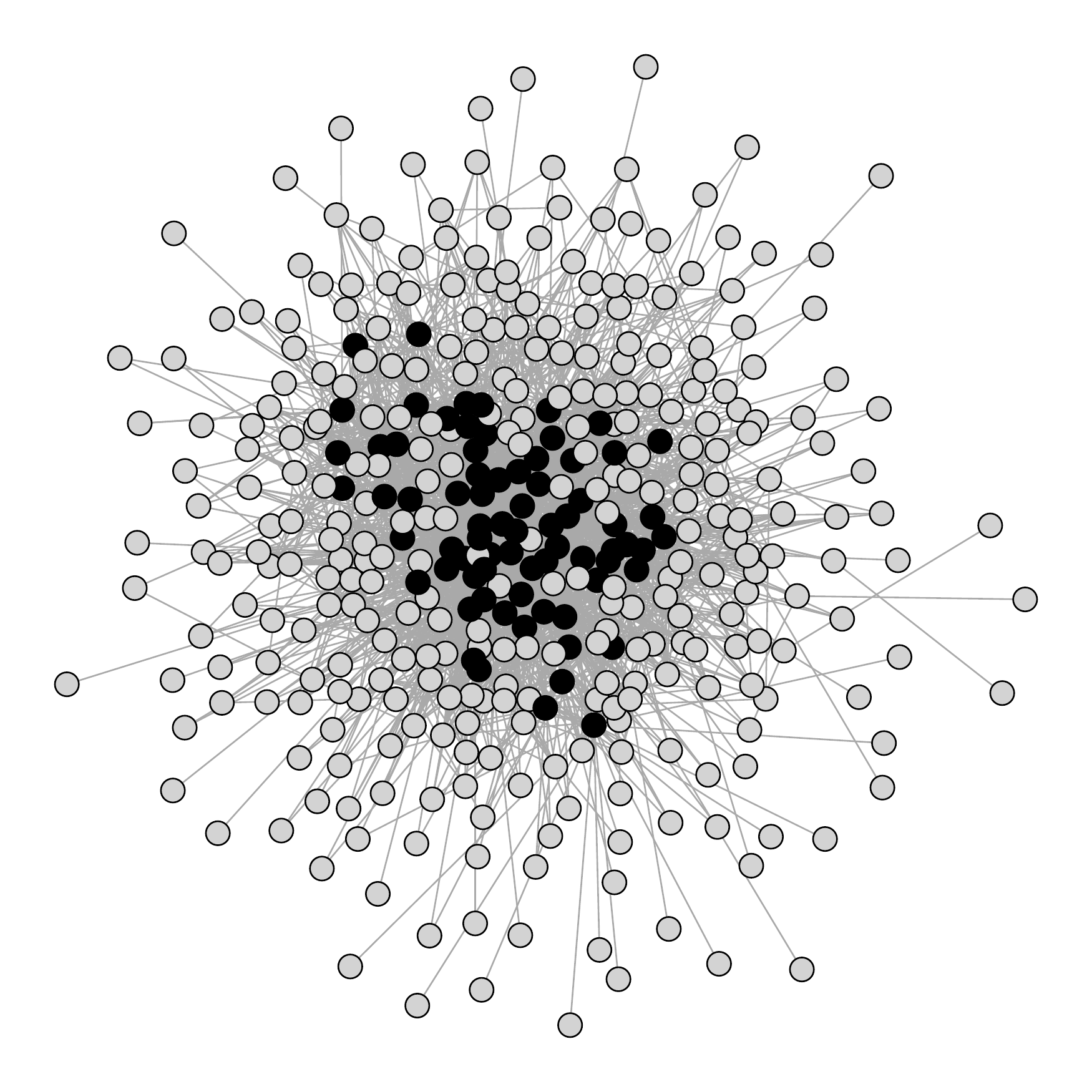}
\end{subfigure}
\begin{subfigure}{0.16\textwidth}
    \raggedleft
    \includegraphics[width=\textwidth]{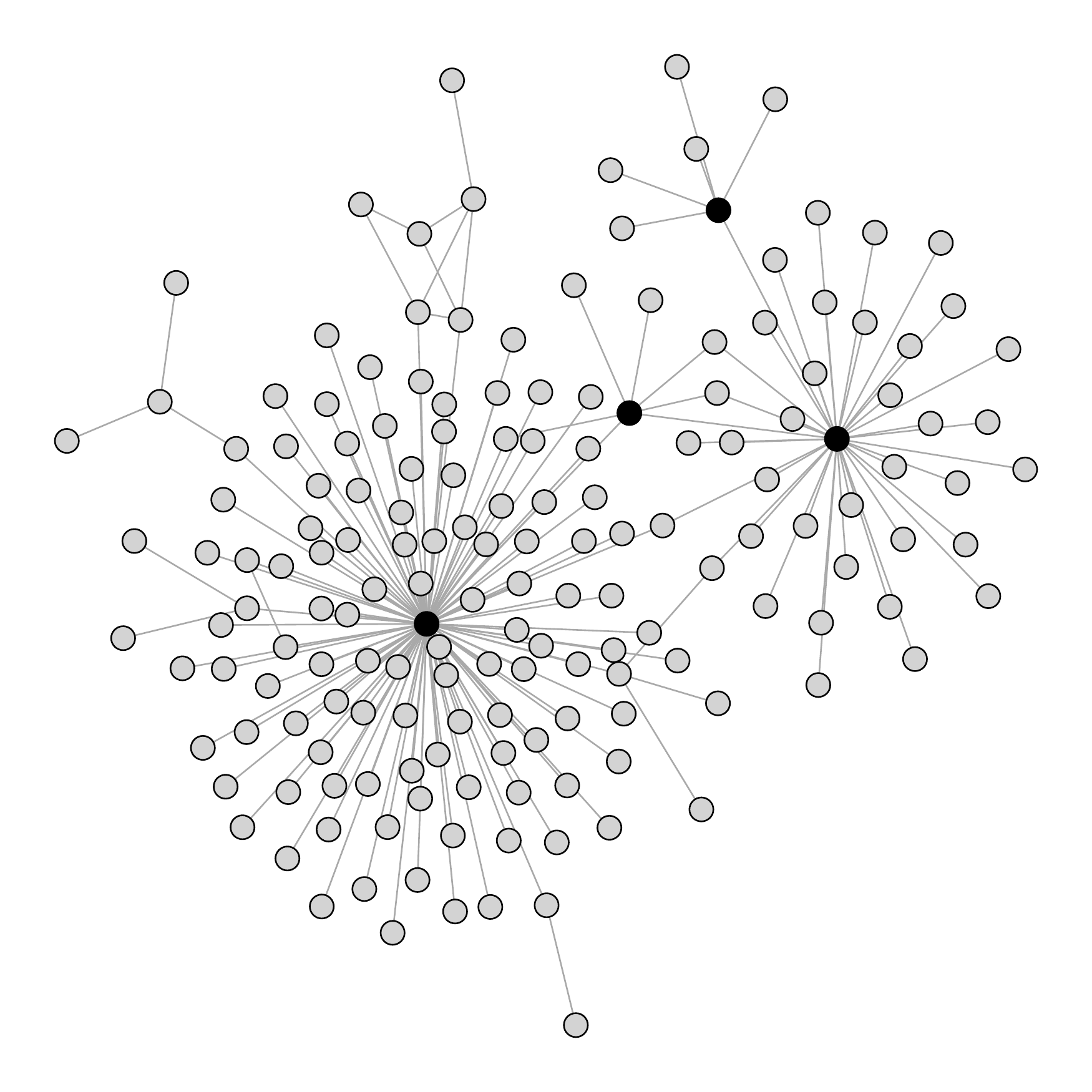}
\end{subfigure}\\
\begin{subfigure}{0.16\textwidth}
    \raggedleft
    \includegraphics[width=\textwidth]{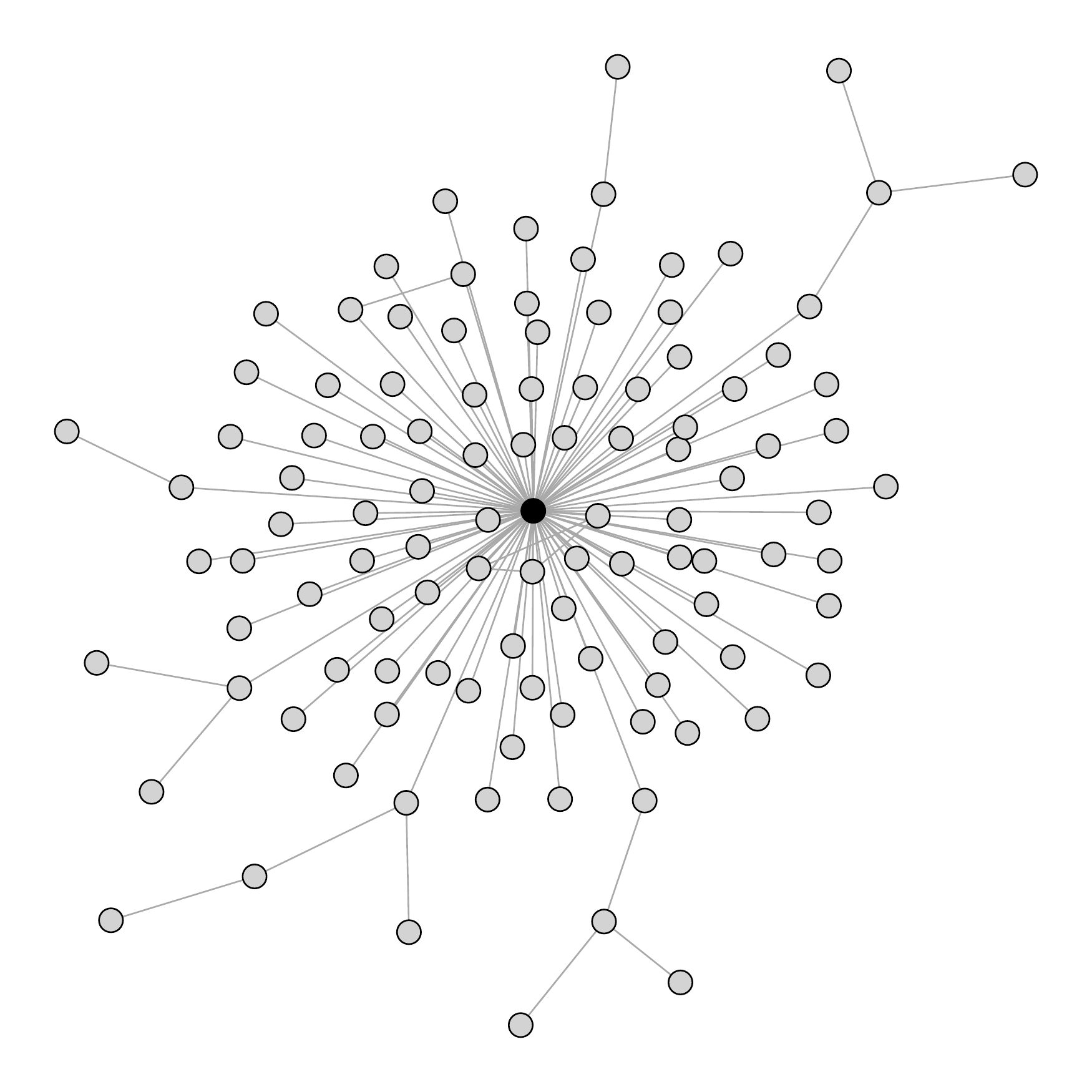}
\end{subfigure}&
\begin{subfigure}{0.16\textwidth}
    \raggedleft
    \includegraphics[width=\textwidth]{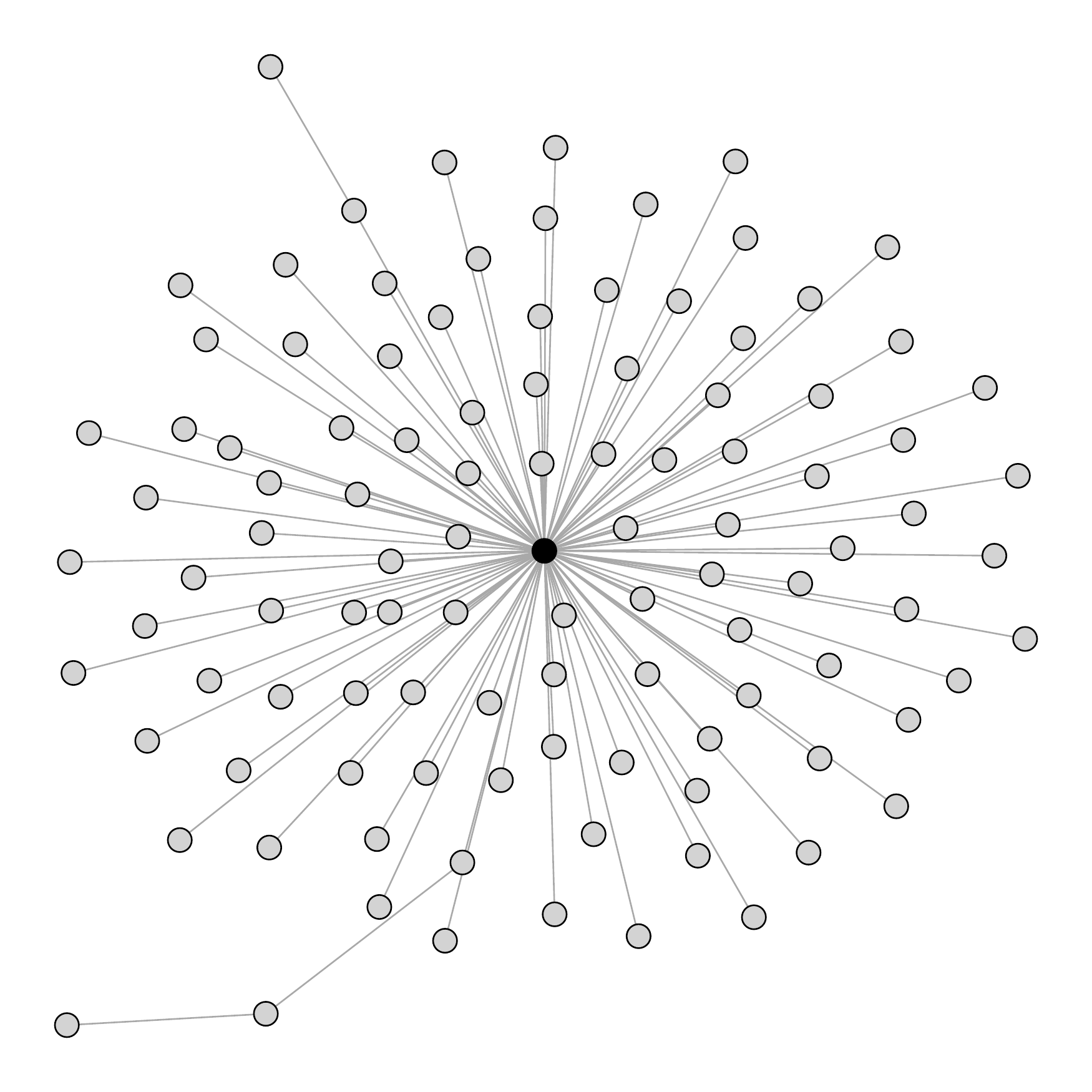}
\end{subfigure}
\begin{subfigure}{0.16\textwidth}
    \raggedleft
    \includegraphics[width=\textwidth]{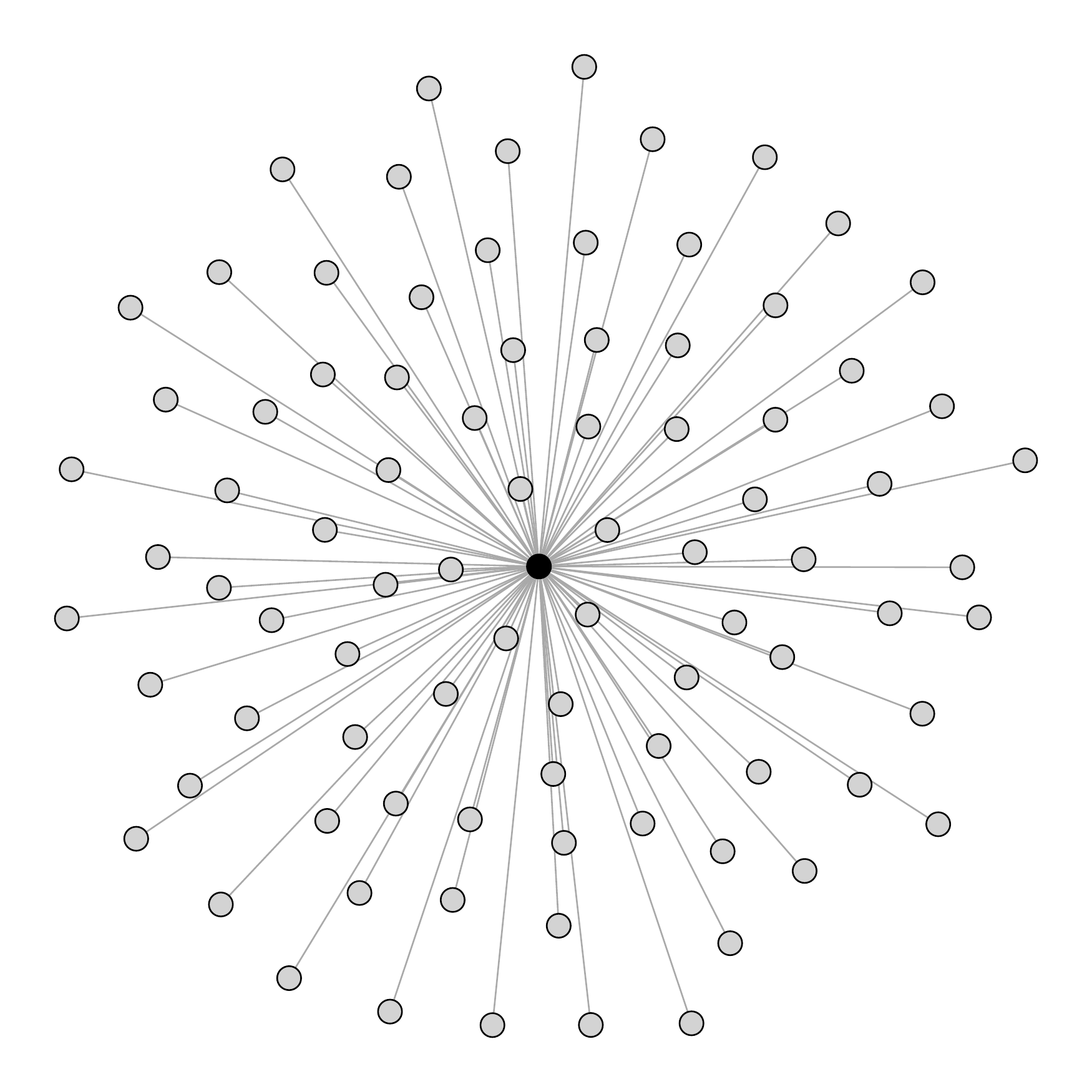}
\end{subfigure}
\begin{subfigure}{0.16\textwidth}
    \raggedleft
    \includegraphics[width=\textwidth]{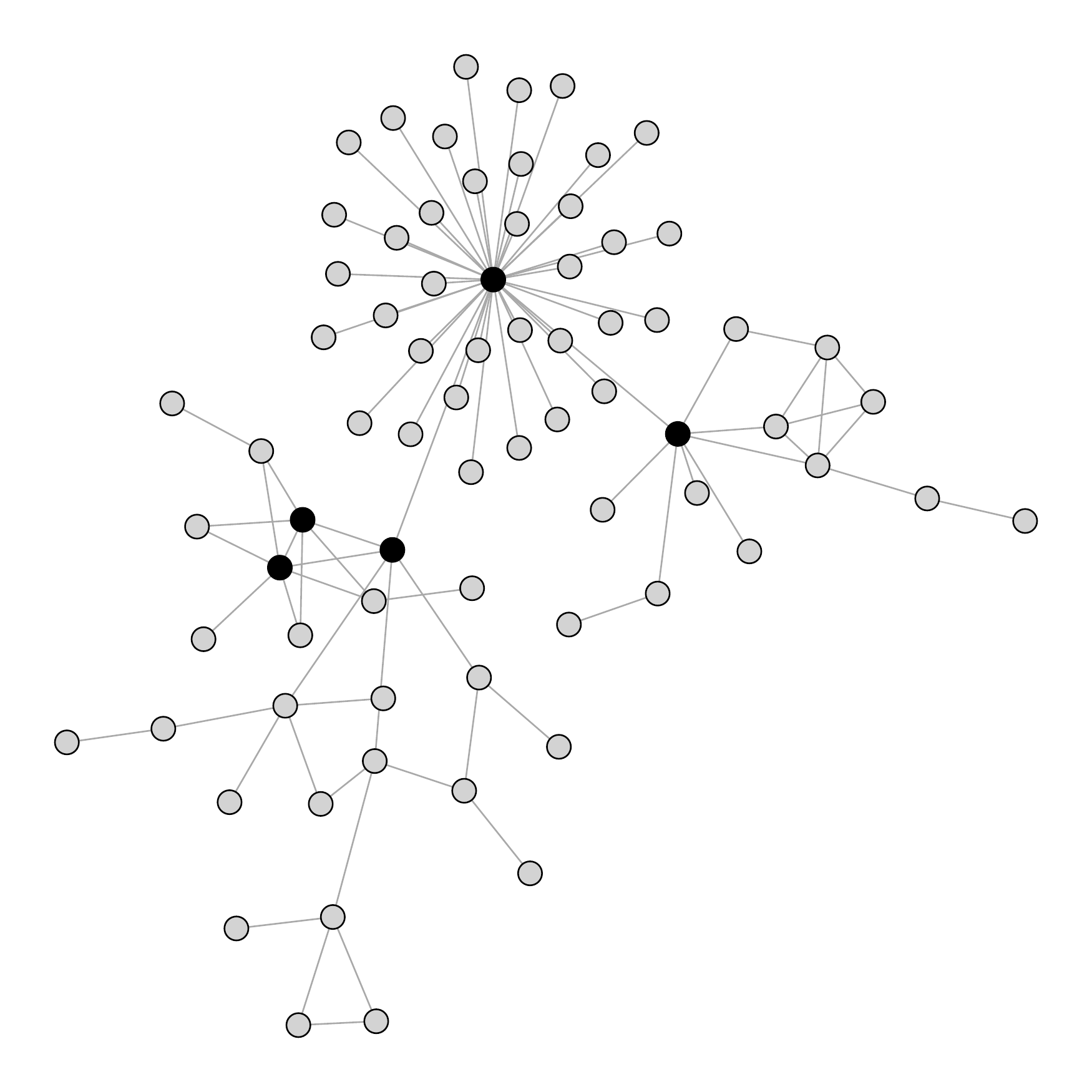}
\end{subfigure}
\begin{subfigure}{0.16\textwidth}
    \raggedleft
    \includegraphics[width=\textwidth]{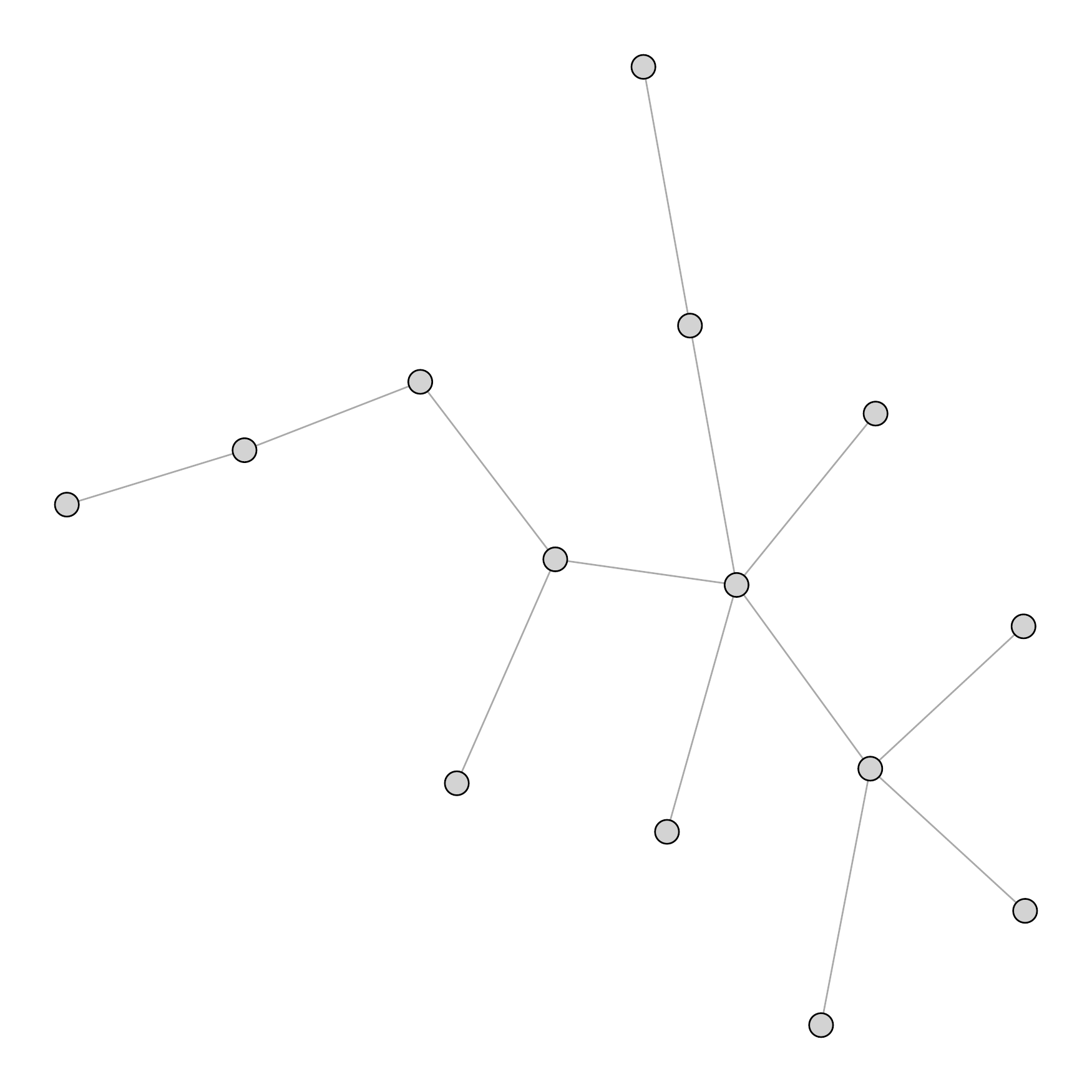}
\end{subfigure}
\end{tabular}  
\caption{Visualization of the BIOGRID networks $G_1^{\text{B}},\ldots, G_{10}^{\text{B}}$.}
\label{fig:biogrid}
\end{figure}

\begin{figure}
\centering
\begin{tabular}[c]{ccc}
\begin{subfigure}{0.16\textwidth}
    \centering
    \includegraphics[width=\textwidth]{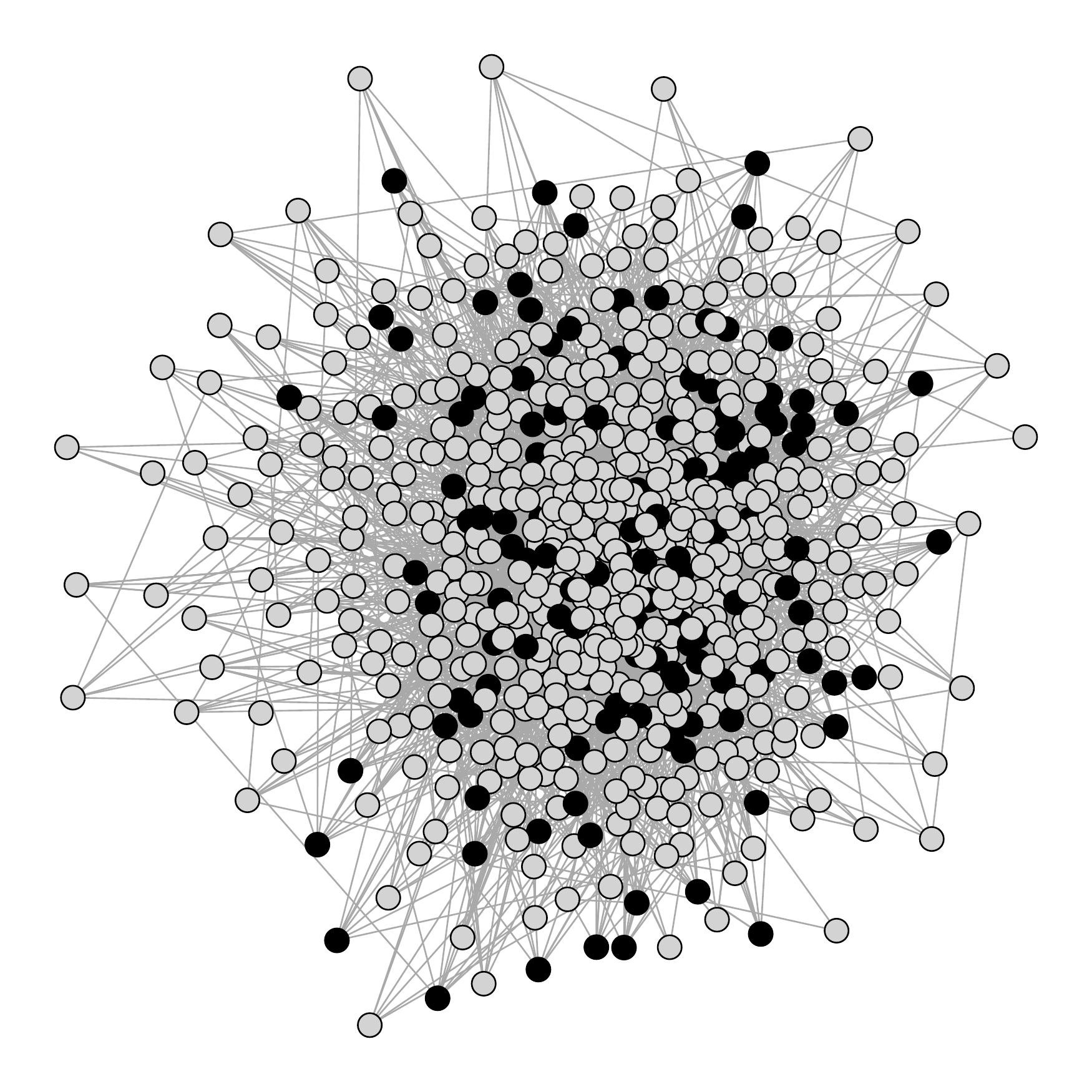}
\end{subfigure}&
\begin{subfigure}{0.16\textwidth}
    \centering
    \includegraphics[width=\textwidth]{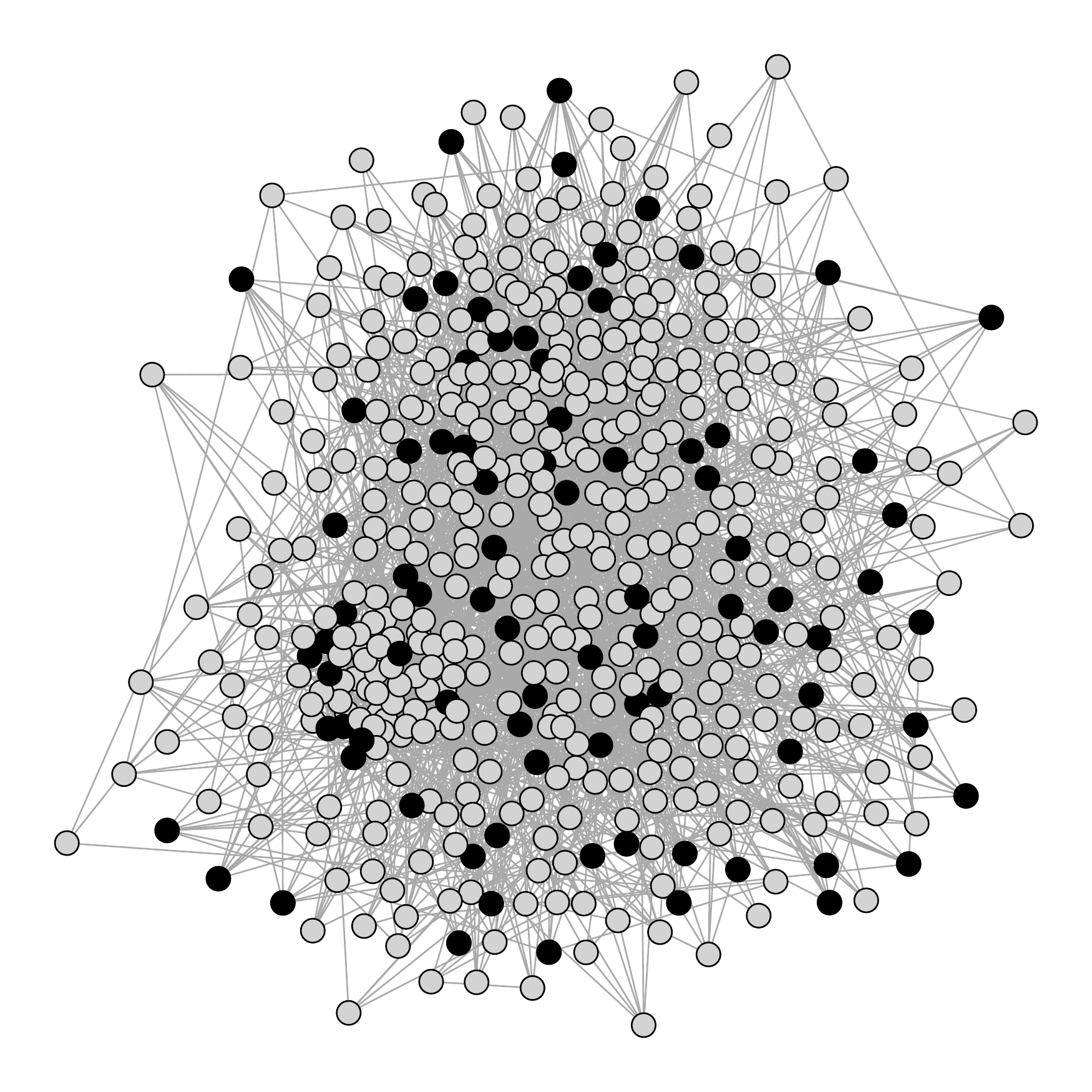}
\end{subfigure}
\begin{subfigure}{0.16\textwidth}
    \centering
    \includegraphics[width=\textwidth]{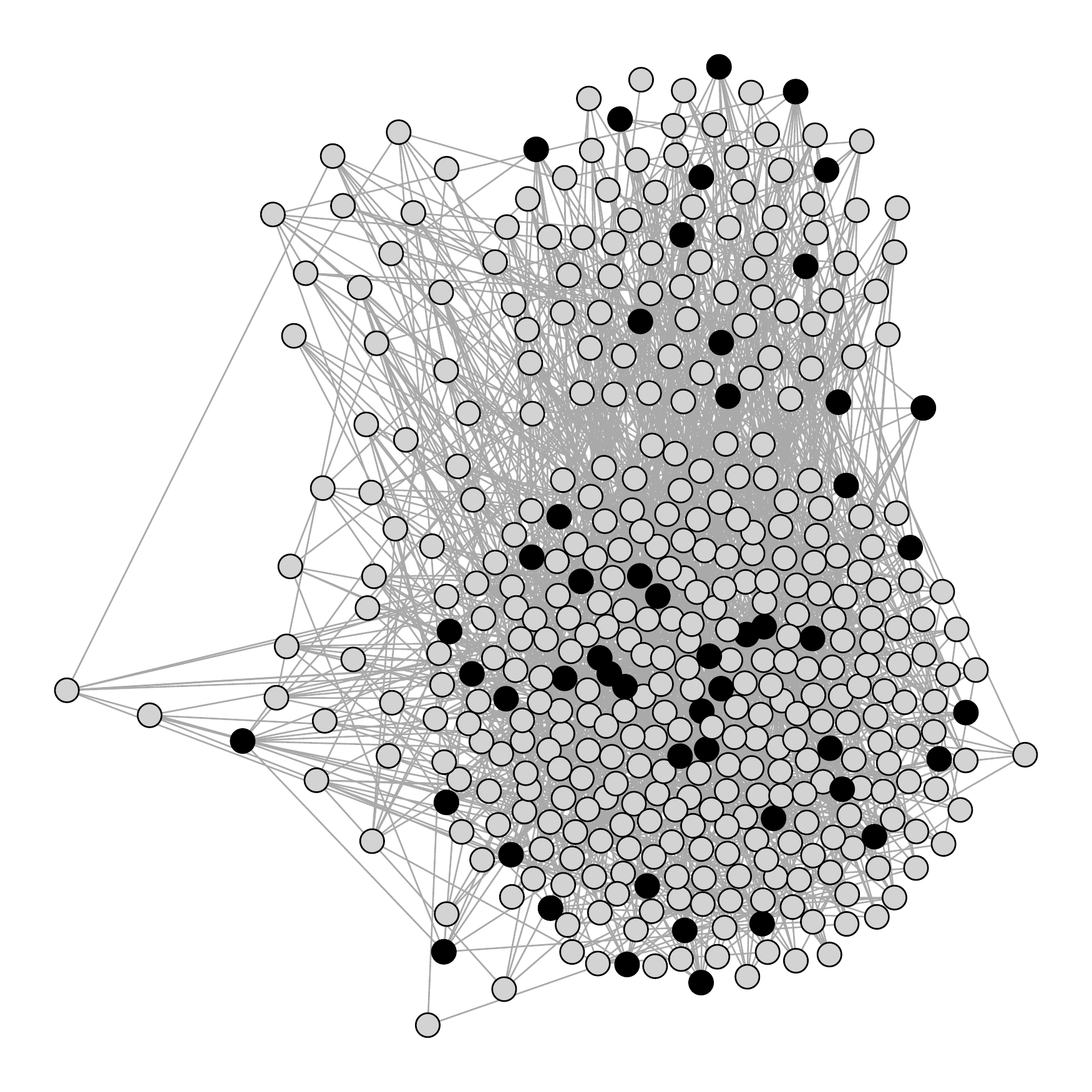}
\end{subfigure}
\begin{subfigure}{0.16\textwidth}
    \centering
    \includegraphics[width=\textwidth]{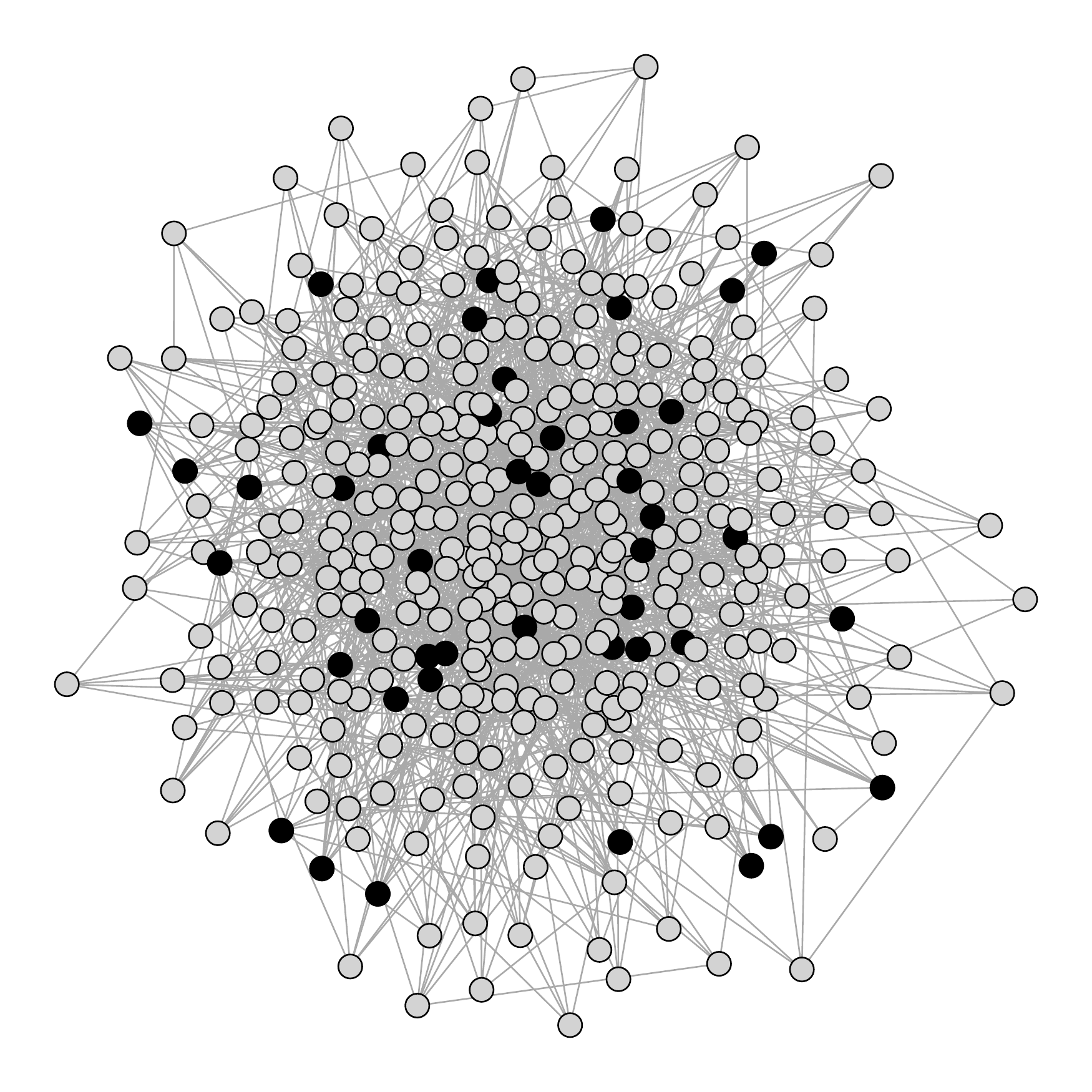}
\end{subfigure}
\begin{subfigure}{0.16\textwidth}
    \raggedleft
    \includegraphics[width=\textwidth]{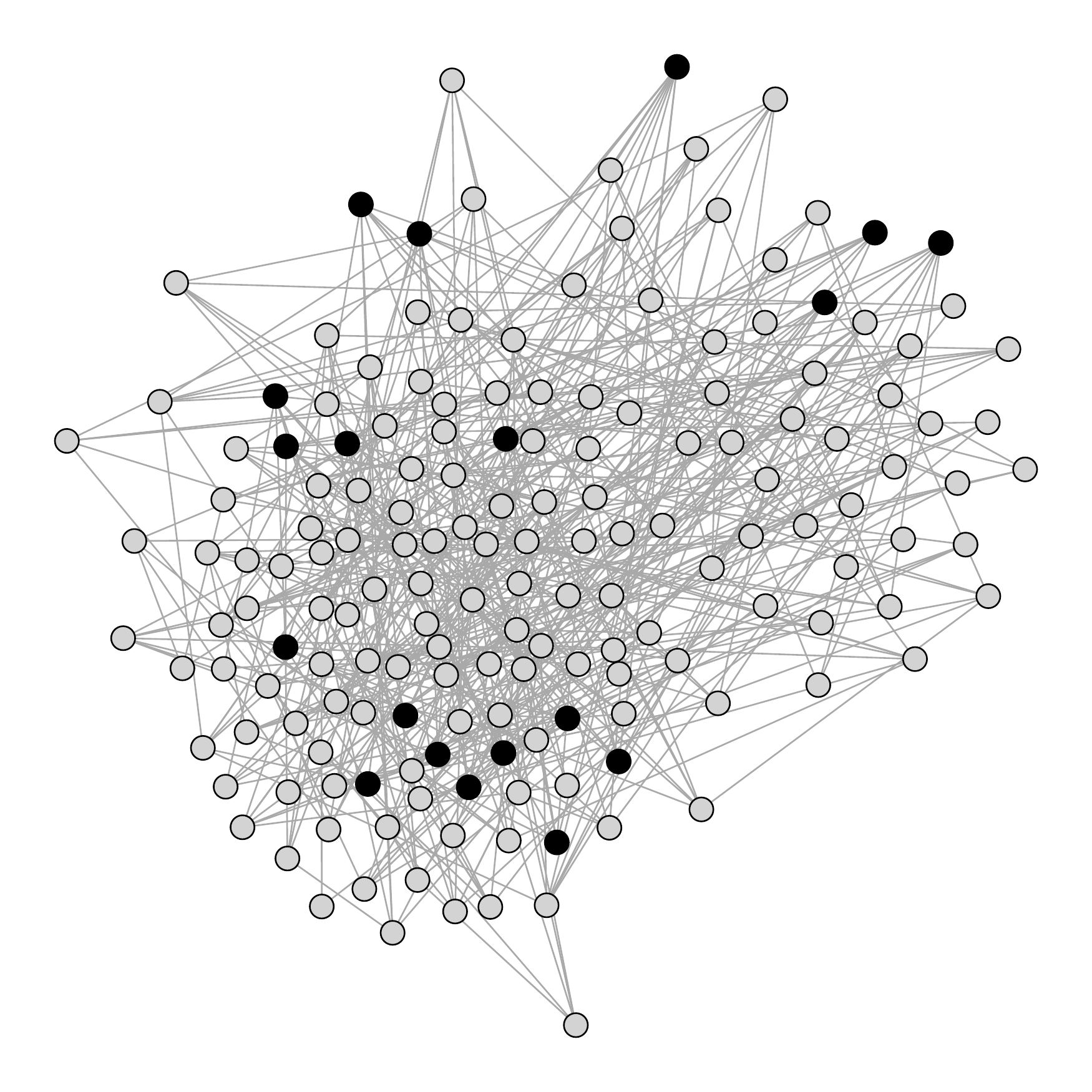}
\end{subfigure}\\
\begin{subfigure}{0.16\textwidth}
    \raggedleft
    \includegraphics[width=\textwidth]{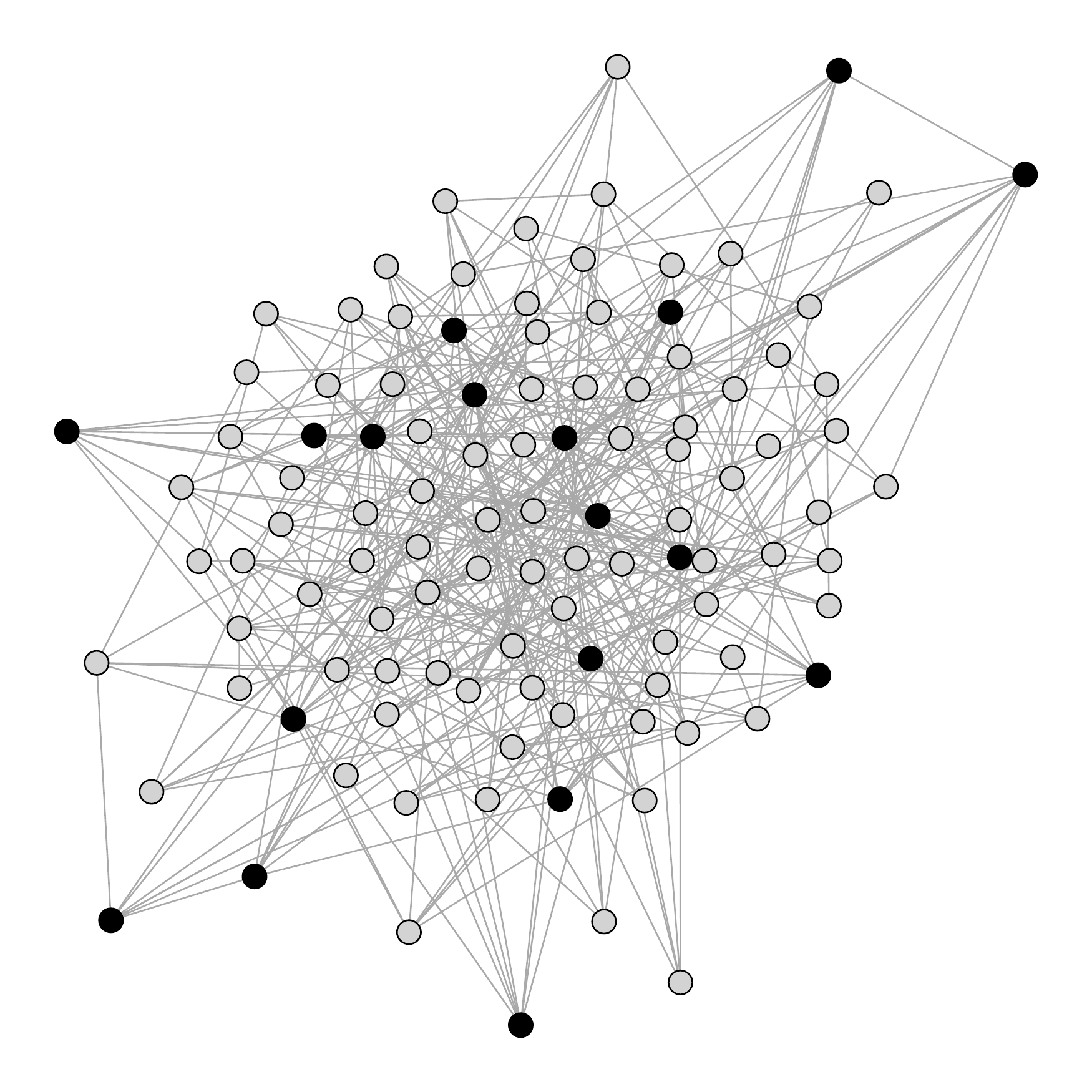}
\end{subfigure}&
\begin{subfigure}{0.16\textwidth}
    \raggedleft
    \includegraphics[width=\textwidth]{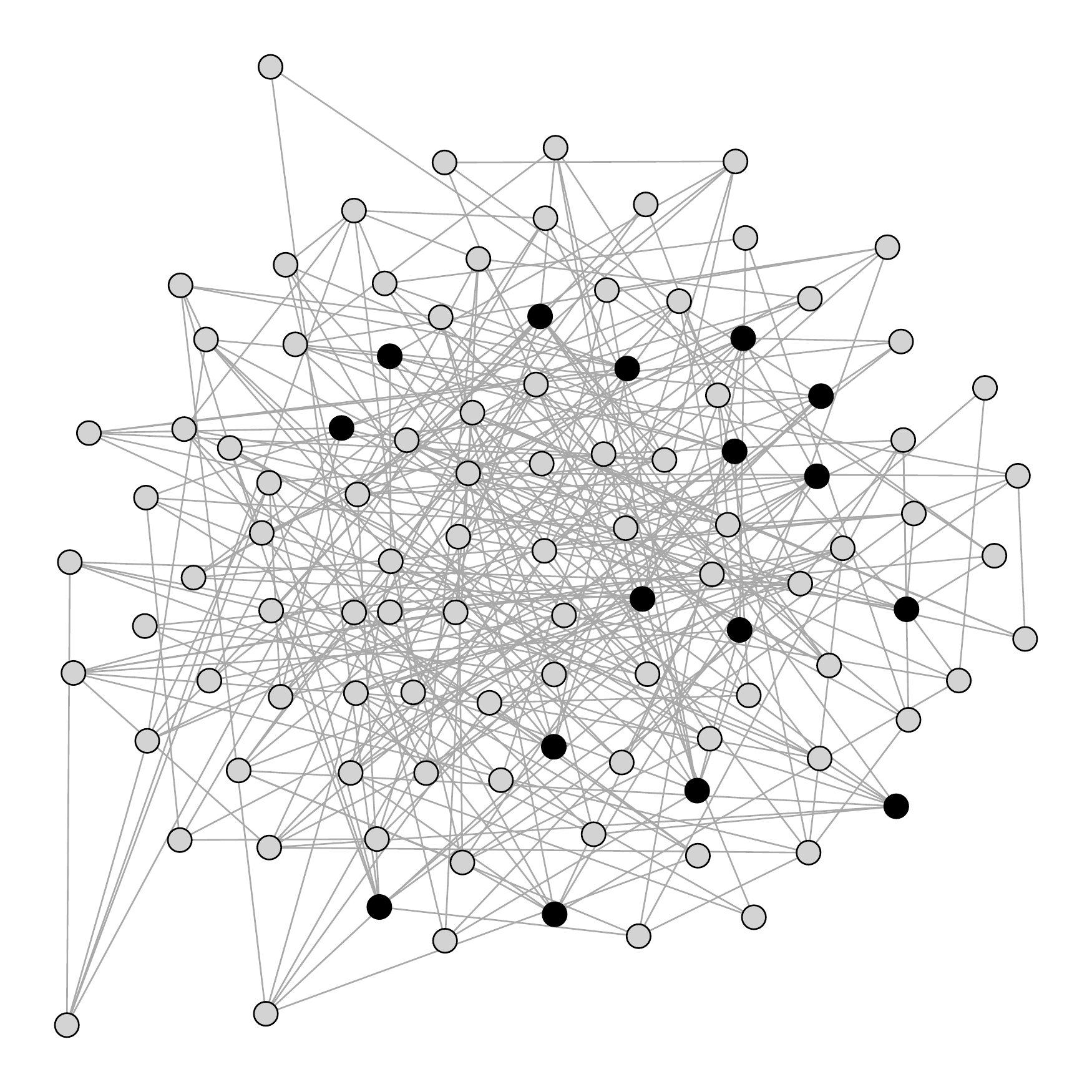}
\end{subfigure}
\begin{subfigure}{0.16\textwidth}
    \raggedleft
    \includegraphics[width=\textwidth]{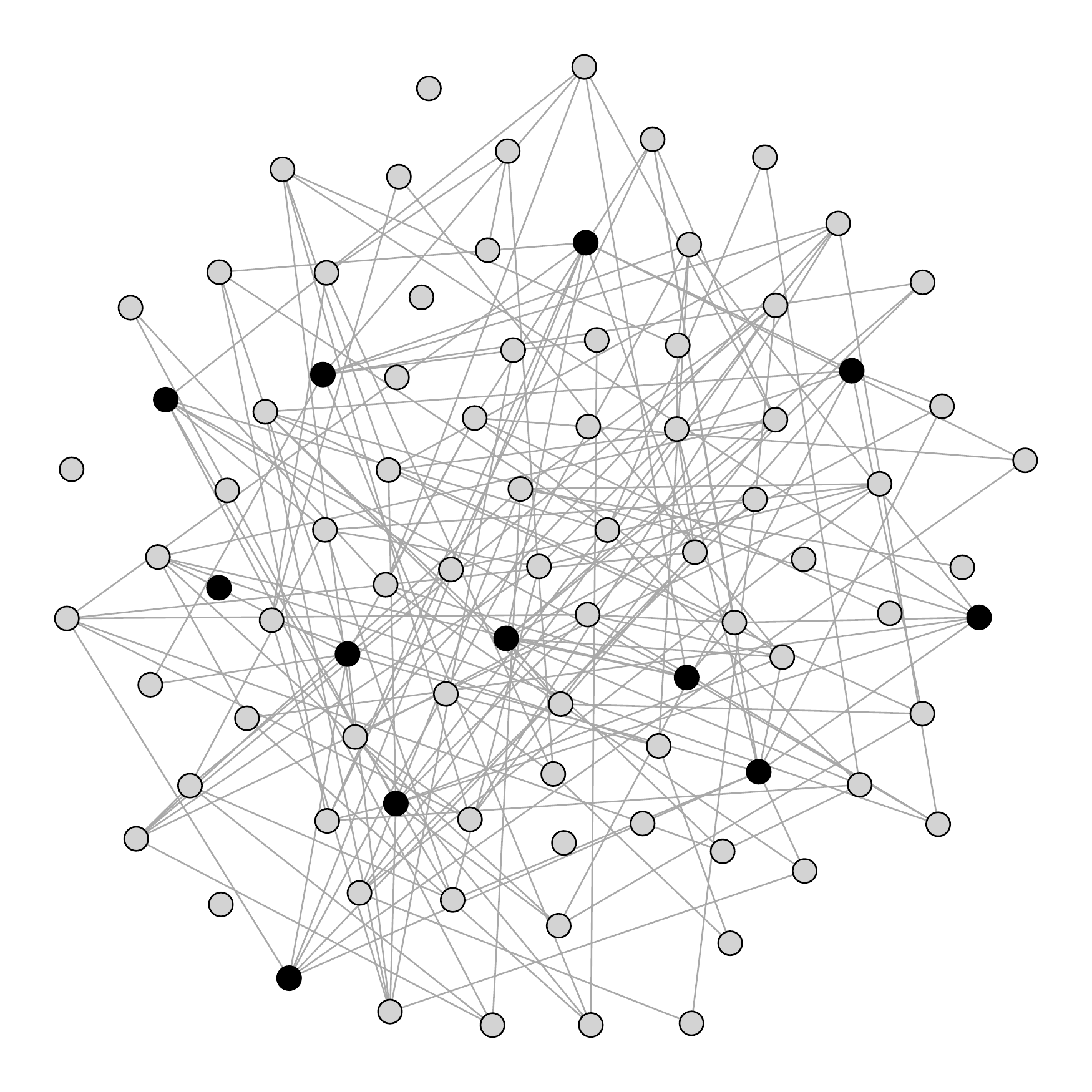}
\end{subfigure}
\begin{subfigure}{0.16\textwidth}
    \raggedleft
    \includegraphics[width=\textwidth]{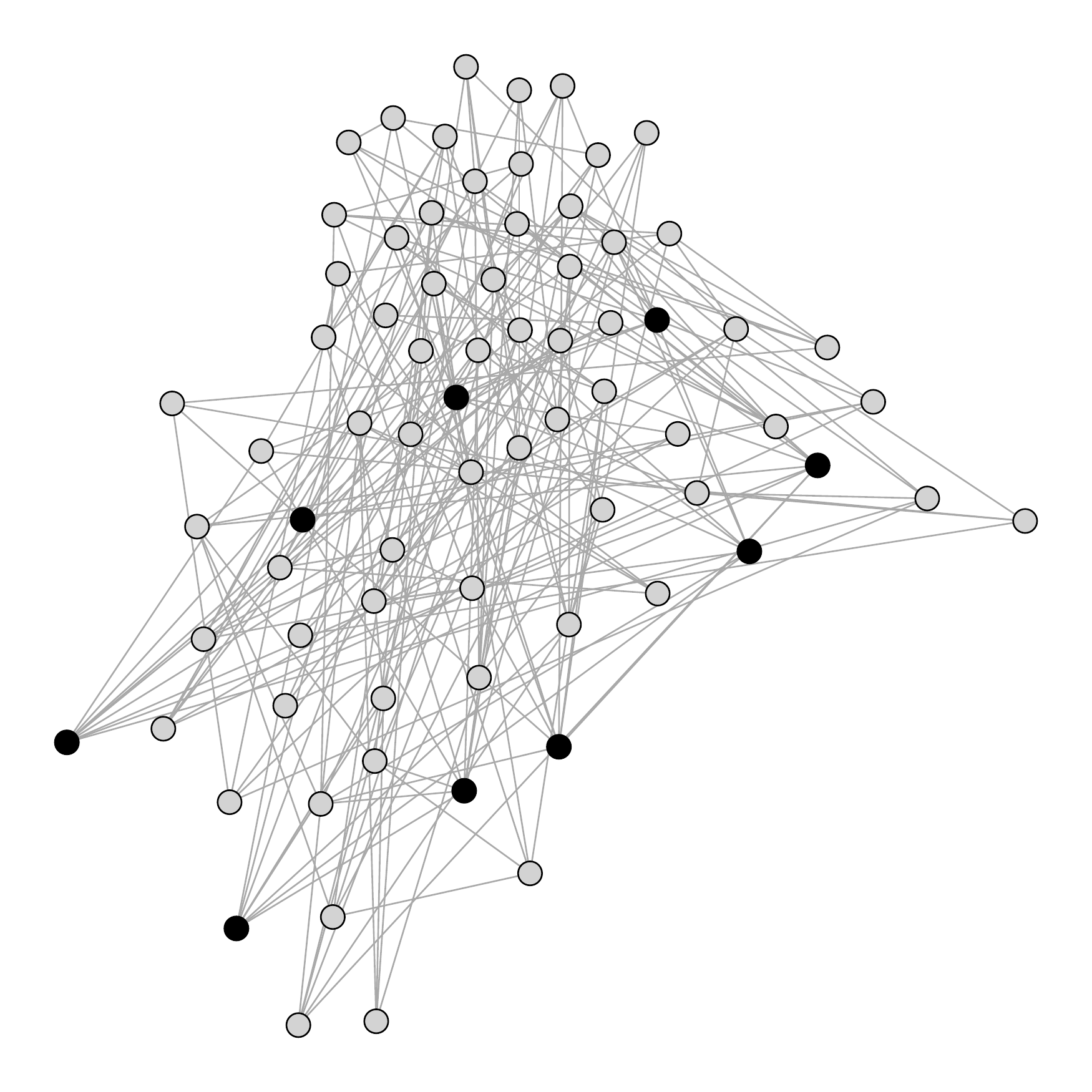}
\end{subfigure}
\begin{subfigure}{0.16\textwidth}
    \raggedleft
    \includegraphics[width=\textwidth]{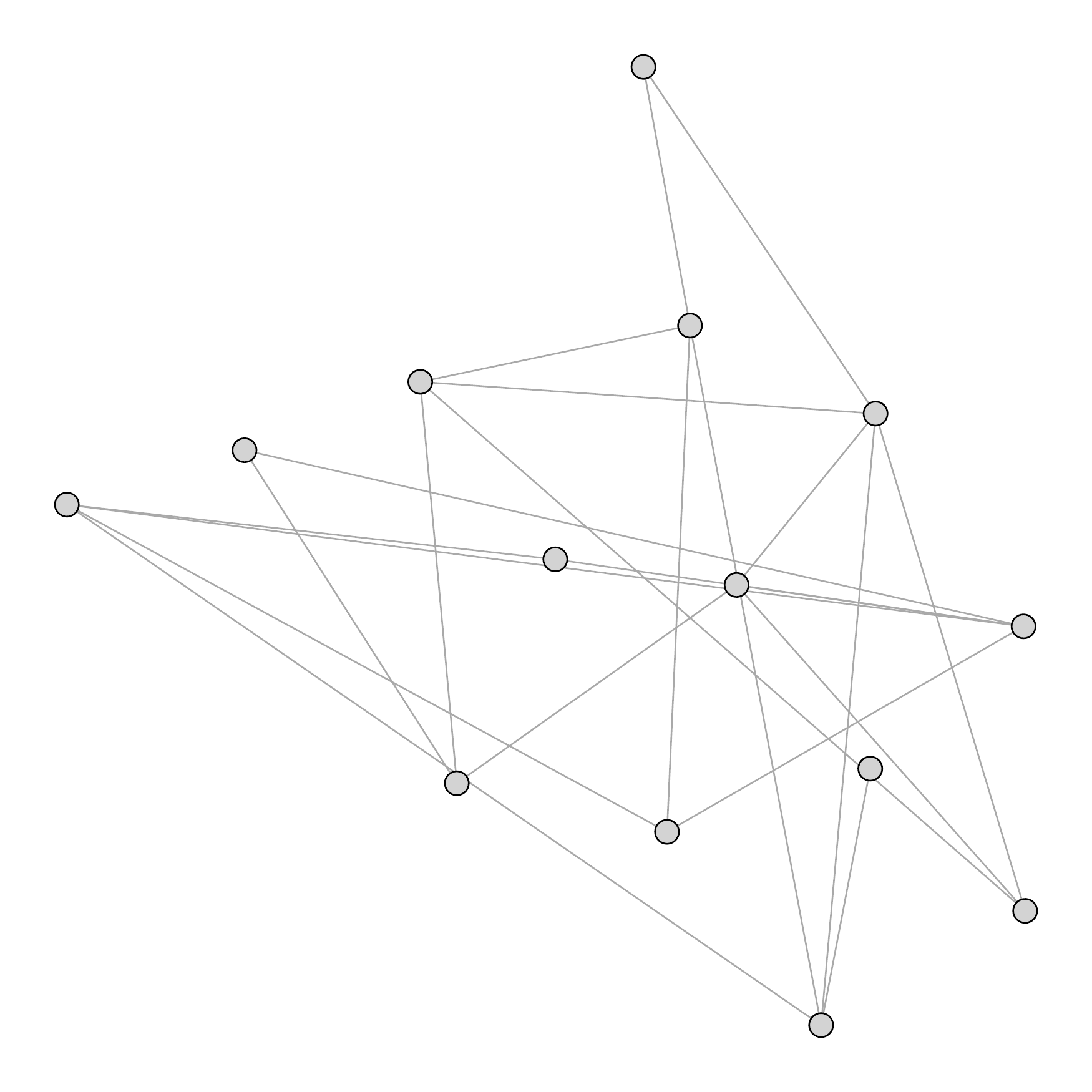}
\end{subfigure}
\end{tabular}  
\caption{Visualization of the estimated TCGA networks $\widetilde G_1,\ldots, \widetilde G_{10}$. 
}
\label{fig:tcga}
\end{figure}

We compare the estimated graphs 
with $\widetilde G_i^l$.
More specifically, we compare the graphs estimated using only observed variables $V_i^l$, to the ``truth" deducted from the network over $V_i$.
\changemarker{
We use the directed Structural Hamming Distance (dSHD) to measure the difference between the estimated graph and $\widetilde{G}_i^l$. 
For directed graphs, dSHD is 
defined as the number of edge additions, deletions or flips to transform one graph into another.
In mixed graphs, we count edges with two mismatching marks as a ``flip" and edges with one mismatching mark as half a ``flip"}.
We repeat this process 40 times for each $V_i$, $i=1,\ldots,10$. 

\begin{table}
\centering
\begin{tabular}{|c|c|c|c|c|c|c|c|c|c|c|}
\hline
     & $G_1$ & $G_2$ & $G_3$
     & $G_4$ & $G_5$ & $G_6$ 
     & $G_7$ & $G_8$ & $G_9$ & $G_{10}$\\ \hline
$|V|$   & 623 & 493 & 418 & 387 & 169 & 108 & 105 & 87 & 74 & 14      \\ \hline\hline 
PC   & 99\% & 99\%  & 101\% & 95\% & 98\% & 110\% & 96\% & 114\% &  84\% & 85\%  \\ \hline
FCI   & 96\%  & 95\% & 94\% & 92\% & 90\% & 89\%  & 82\% & 96\% &  76\% &\textbf{80\%}  \\ \hline
lFCI   & \textbf{94\%} & \textbf{94\%}  & \textbf{92\%} & \textbf{91\%} & \textbf{89\%}  & \textbf{87\%} & \textbf{80\%} & \textbf{91\%} &  \textbf{73\%}  & 81\%  \\ \hline
\end{tabular}
\caption{\changemarker{Average directed Structural Hamming Distances (dSHD) between outputs and truth, as percentage of the glasso baseline}.}
\label{tab:ex1}
\end{table}

The results in Table~\ref{tab:ex1} 
demonstrate that, as expected, FCI and lFCI outperforms PC. They also suggest that lFCI slightly outperforms or is comparable with FCI in \changemarker{9 of the} 10 subgraphs.
The improvement is more pronounced in large graphs and graphs containing many ``hubs" (e.g., component 1, 3 and 5), but is nonetheless persistent in all subgraphs.  

\section{Discussion}\label{sec:disc}
Causal structure learning from observational data is an important and challenging problem. The challenges are compounded in the presence of unmeasured confounding and selection bias. The gold-standard approach for this task, the FCI algorithm
\citep{sprites2000}, and its relatives, RFCI \citep{colombo2012} and FCI+ \citep{claassen2013}, are based on neighborhood-based search strategies that become inefficient in graphs with unbounded maximum degrees. However, such graphs are abundant in biological and physical systems. To facilitate causal structure discovery in such settings, our local FCI (lFCI) algorithm utilizes the \emph{local separation property} of large (random) networks \citep{anandkumar2012} by considering an alternative local-graph-based search strategy focused on short paths between pairs of observed nodes. 
This idea applies naturally to linear Gaussian structural equation models (SEMs), in which conditional independence is equivalent to zero partial correlation. However, the proposed algorithm only relies on conditional independence tests, and can be, in principle, applied to a wider collection of models in which causal relations are well-characterized by local structures. 
Extending this idea to more general distributions, using, e.g., Gaussian copulas \citep{harris2013}, or using conditional mutual information \citep{anandkumar2012b} can be fruitful directions of future research. 

For linear Gaussian SEMs, Assumption~\ref{ass:vanishtrekweight} gives a condition under which lFCI consistently learns a correct PAG.  The conditional covariances involved in this assumption can be expressed as summations over products along  \textit{treks}, which are particular  paths in the graph \citep{sullivant2010,draisma2013}; see also the discussion in Section 3 of the Supplementary Material.
From this point of view, conditioning on the local separators proposed in Section~\ref{sec:localseparation}
can be regarded as eliminating the contribution of short treks. It is then intuitive that the remaining long treks, each of which gives a product of many correlations, only make lower-order contributions. However, we found it difficult to formalize this argument into an assumption that weakens Assumption~\ref{ass:summable} as this would require explicitly controlling the number of long treks and their overall contribution to conditional covariances.

\bibliographystyle{apalike}
\bibliography{bib}

\begin{thebibliography}{}

\bibitem[Ali et~al., 2009]{ali2009}
Ali, R.~A., Richardson, T.~S., and Spirtes, P. (2009).
\newblock Markov equivalence for ancestral graphs.
\newblock {\em Ann. Statist.}, 37(5B):2808--2837.

\bibitem[Anandkumar et~al., 2011]{anandkumar12a}
Anandkumar, A., Hassidim, A., and Kelner, J. (2011).
\newblock Topology discovery of sparse random graphs with few participants.
\newblock {\em SIGMETRICS Perform. Eval. Rev.}, 39(1):253--264.

\bibitem[Anandkumar et~al., 2012a]{anandkumar2012}
Anandkumar, A., Tan, V. Y.~F., Huang, F., and Willsky, A.~S. (2012a).
\newblock High-dimensional {G}aussian graphical model selection: walk
  summability and local separation criterion.
\newblock {\em J. Mach. Learn. Res.}, 13:2293--2337.

\bibitem[Anandkumar et~al., 2012b]{anandkumar2012b}
Anandkumar, A., Tan, V. Y.~F., Huang, F., and Willsky, A.~S. (2012b).
\newblock High-dimensional structure estimation in {I}sing models: local
  separation criterion.
\newblock {\em Ann. Statist.}, 40(3):1346--1375.

\bibitem[Bollob{\'a}s and B{\'e}la, 2001]{bollobas2001}
Bollob{\'a}s, B. and B{\'e}la, B. (2001).
\newblock {\em Random Graphs}.
\newblock Cambridge Studies in Advanced Mathematics. Cambridge University
  Press.

\bibitem[Cancer-Genome-Atlas-Research-Network, 2012]{tcga2012}
Cancer-Genome-Atlas-Research-Network (2012).
\newblock Comprehensive genomic characterization of squamous cell lung cancers.
\newblock {\em Nature}, 489(7417):519--525.

\bibitem[Chen and Sharp, 2004]{chen2004}
Chen, H. and Sharp, B.~M. (2004).
\newblock Content-rich biological network constructed by mining pubmed
  abstracts.
\newblock {\em BMC Bioinformatics}, 5:147 -- 147.

\bibitem[Chung and Lu, 2006]{chung2006}
Chung, F. and Lu, L. (2006).
\newblock {\em Complex Graphs and Networks (CBMS Regional Conference Series in
  Mathematics)}.
\newblock American Mathematical Society.

\bibitem[Claassen et~al., 2013]{claassen2013}
Claassen, T., Mooij, J.~M., and Heskes, T. (2013).
\newblock Learning sparse causal models is not {NP}-hard.
\newblock In {\em Proceedings of the 29th Conference on Uncertainty in
  Artificial Intelligence}.

\bibitem[Colombo and Maathuis, 2014]{colombo14a}
Colombo, D. and Maathuis, M.~H. (2014).
\newblock Order-independent constraint-based causal structure learning.
\newblock {\em J. Mach. Learn. Res.}, 15:3741--3782.

\bibitem[Colombo et~al., 2012]{colombo2012}
Colombo, D., Maathuis, M.~H., Kalisch, M., and Richardson, T.~S. (2012).
\newblock Learning high-dimensional directed acyclic graphs with latent and
  selection variables.
\newblock {\em Ann. Statist.}, 40(1):294--321.

\bibitem[Dembo and Montanari, 2010]{dembo2008}
Dembo, A. and Montanari, A. (2010).
\newblock Ising models on locally tree-like graphs.
\newblock {\em Ann. Appl. Probab.}, 20(2):565--592.

\bibitem[Dommers et~al., 2010]{dommers2010}
Dommers, S., Giardin\`a, C., and van~der Hofstad, R. (2010).
\newblock Ising models on power-law random graphs.
\newblock {\em J. Stat. Phys.}, 141(4):638--660.

\bibitem[Draisma et~al., 2013]{draisma2013}
Draisma, J., Sullivant, S., and Talaska, K. (2013).
\newblock Positivity for {G}aussian graphical models.
\newblock {\em Adv. in Appl. Math.}, 50(5):661--674.

\bibitem[Drton and Richardson, 2008]{drton:richardson:2008}
Drton, M. and Richardson, T.~S. (2008).
\newblock Binary models for marginal independence.
\newblock {\em J. R. Stat. Soc. Ser. B Stat. Methodol.}, 70(2):287--309.

\bibitem[Foygel and Drton, 2010]{foygel2010}
Foygel, R. and Drton, M. (2010).
\newblock Extended {B}ayesian information criteria for {G}aussian graphical
  models.
\newblock In {\em Advances in Neural Information Processing Systems 23}, pages
  604--612.

\bibitem[Friedman et~al., 2007]{Friedman2007}
Friedman, J., Hastie, T., and Tibshirani, R. (2007).
\newblock Sparse inverse covariance estimation with the graphical lasso.
\newblock {\em Biostatistics}, 9(3):432--441.

\bibitem[Harris and Drton, 2013]{harris2013}
Harris, N. and Drton, M. (2013).
\newblock {PC} algorithm for nonparanormal graphical models.
\newblock {\em J. Mach. Learn. Res.}, 14(1):3365--3383.

\bibitem[Ideker and Krogan, 2012]{Ideker2012}
Ideker, T. and Krogan, N.~J. (2012).
\newblock Differential network biology.
\newblock {\em Molecular Systems Biology}, 8(1):565.

\bibitem[Kalisch and B\"{u}hlmann, 2007]{kalisch2007}
Kalisch, M. and B\"{u}hlmann, P. (2007).
\newblock Estimating high-dimensional directed acyclic graphs with the
  {PC}-algorithm.
\newblock {\em J. Mach. Learn. Res.}, 8:613--636.

\bibitem[Kalisch et~al., 2012]{pcalg}
Kalisch, M., M\"achler, M., Colombo, D., Maathuis, M.~H., and B\"uhlmann, P.
  (2012).
\newblock Causal inference using graphical models with the {R} package {pcalg}.
\newblock {\em Journal of Statistical Software}, 47(11):1--26.

\bibitem[Kleinberg et~al., 1999]{Kleinberg1999}
Kleinberg, J.~M., Kumar, R., Raghavan, P., Rajagopalan, S., and Tomkins, A.~S.
  (1999).
\newblock The web as a graph: Measurements, models, and methods.
\newblock In {\em Proceedings of the 5th Annual International Conference on
  Computing and Combinatorics}, pages 1--17.

\bibitem[Lin et~al., 2016]{lin2016}
Lin, L., Drton, M., and Shojaie, A. (2016).
\newblock Estimation of high-dimensional graphical models using regularized
  score matching.
\newblock {\em Electron. J. Statist.}, 10(1):806--854.

\bibitem[Liu and Luo, 2015]{Liu2015}
Liu, W. and Luo, X. (2015).
\newblock Fast and adaptive sparse precision matrix estimation in high
  dimensions.
\newblock {\em J. Multivariate Anal.}, 135:153--162.

\bibitem[Maathuis et~al., 2019]{gm_handbook}
Maathuis, M., Drton, M., Lauritzen, S., and Wainwright, M., editors (2019).
\newblock {\em Handbook of graphical models}.
\newblock CRC Press, Boca Raton, FL.

\bibitem[Malioutov et~al., 2006]{malioutov2006}
Malioutov, D.~V., Johnson, J.~K., and Willsky, A.~S. (2006).
\newblock Walk-sums and belief propagation in {G}aussian graphical models.
\newblock {\em J. Mach. Learn. Res.}, 7:2031--2064.

\bibitem[McKay et~al., 2004]{mckay2004}
McKay, B.~D., Wormald, N.~C., and Wysocka, B. (2004).
\newblock Short cycles in random regular graphs.
\newblock {\em Electron. J. Combin.}, 11(1):Research Paper 66, 12.

\bibitem[Molloy and Reed, 1995]{molloy1995}
Molloy, M. and Reed, B. (1995).
\newblock A critical point for random graphs with a given degree sequence.
\newblock {\em Random Structures Algorithms}, 6(2-3):161--179.

\bibitem[Ogarrio et~al., 2016]{ogarrio2016}
Ogarrio, J.~M., Spirtes, P., and Ramsey, J. (2016).
\newblock A hybrid causal search algorithm for latent variable models.
\newblock In {\em Proceedings of the 8th International Conference on
  Probabilistic Graphical Models}, pages 368--379.

\bibitem[Ravikumar et~al., 2011]{ravikumar2011}
Ravikumar, P., Wainwright, M.~J., Raskutti, G., and Yu, B. (2011).
\newblock High-dimensional covariance estimation by minimizing
  {$\ell_1$}-penalized log-determinant divergence.
\newblock {\em Electron. J. Stat.}, 5:935--980.

\bibitem[Richardson and Spirtes, 2002]{richardson2002}
Richardson, T. and Spirtes, P. (2002).
\newblock Ancestral graph {M}arkov models.
\newblock {\em Ann. Statist.}, 30(4):962--1030.

\bibitem[Shojaie, 2021]{shojaie2021}
Shojaie, A. (2021).
\newblock Differential network analysis: a statistical perspective.
\newblock {\em Wiley Interdiscip. Rev. Comput. Stat.}, 13(2):e1508, 16.

\bibitem[Sondhi and Shojaie, 2019]{arjun2018}
Sondhi, A. and Shojaie, A. (2019).
\newblock The reduced {PC}-algorithm: improved causal structure learning in
  large random networks.
\newblock {\em J. Mach. Learn. Res.}, 20:Paper No. 164, 31.

\bibitem[Spirtes, 2001]{spirtes2001}
Spirtes, P. (2001).
\newblock An anytime algorithm for causal inference.
\newblock In {\em Proceedings of the 8th International Workshop on Artificial
  Intelligence and Statistics}, volume~R3, pages 278--285.

\bibitem[Spirtes et~al., 2000]{sprites2000}
Spirtes, P., Glymour, C., and Scheines, R. (2000).
\newblock {\em Causation, Prediction, and Search, Second Edition}.
\newblock {MIT} Press: Cambridge.

\bibitem[Stark, 2006]{Stark2006}
Stark, C. (2006).
\newblock {BioGRID}: a general repository for interaction datasets.
\newblock {\em Nucleic Acids Research}, 34(90001):D535--D539.

\bibitem[Sullivant et~al., 2010]{sullivant2010}
Sullivant, S., Talaska, K., and Draisma, J. (2010).
\newblock Trek separation for {G}aussian graphical models.
\newblock {\em Ann. Statist.}, 38(3):1665--1685.

\bibitem[Tsamardinos et~al., 2006]{tsamardinos2006}
Tsamardinos, I., Brown, L.~E., and Aliferis, C.~F. (2006).
\newblock The max-min hill-climbing {B}ayesian network structure learning
  algorithm.
\newblock {\em Mach. Learn.}, 65(1):31--78.

\bibitem[van~der Zander and Liskiewicz, 2019]{Zander2019}
van~der Zander, B. and Liskiewicz, M. (2019).
\newblock Finding minimal d-separators in linear time and applications.
\newblock In {\em Proceedings of the 35th Conference on Uncertainty in
  Artificial Intelligence}.

\bibitem[Watts and Strogatz, 1998]{watts1998}
Watts, D.~J. and Strogatz, S.~H. (1998).
\newblock Collective dynamics of `small-world'networks.
\newblock {\em Nature}, 393(6684):440--442.

\bibitem[Yu et~al., 2019]{Yu2019}
Yu, S., Drton, M., and Shojaie, A. (2019).
\newblock Generalized score matching for non-negative data.
\newblock {\em J. Mach. Learn. Res.}, 20:Paper No. 76, 70.

\bibitem[Zhang, 2008]{zhang2008}
Zhang, J. (2008).
\newblock On the completeness of orientation rules for causal discovery in the
  presence of latent confounders and selection bias.
\newblock {\em Artif. Intell.}, 172(16-17):1873--1896.

\end{thebibliography}

\appendix
\section{Proofs}\label{sec:appendix}

\begin{proof}[Proof of Lemma~\ref{lem:sepsetsizeMAG}]
%
We prove the case of $\eta=3$ by enumerating all possible configurations of $G_\gamma(i,j)$ in the extended neighborhood of $i$
and constructing a small $m$-separator for each one.
Since  $G$ is a MAG, $i$ and $j$ cannot be ancestors of each other.  Without loss of generality, we suppose 
$i\notin \AN(G,j)$. 
We use the following three facts.
\begin{enumerate}
	\item Let $D_{G_\gamma(i,j)}(u,v)$ be the shortest path distance 
	in the local graph, and let $N_k=\{v\in V_\gamma(i,j):   D_{G_\gamma(i,j)}(v,i)=k\}$ for each $k=1,2,\ldots, \gamma$. Then for each $k\leq\gamma$, there are at most $\eta$ paths from $i$ to $N_k$. Hence, 
	$|\NE(G_\gamma(i,j),u)|\leq \eta$. 
	\item Consider the two set of edges $e_k^b=\{(u,v):u\in \ADJ(G_\gamma(i,j),v), u\in N_k, v\in N_{k+1}\}$ and 
	$e_k^m=\{(u,v):u\in \ADJ(G_\gamma(i,j),v), u, v\in N_{k}\}$. For each $k\leq \gamma$, every node in $N_k$ has at least one  edge in $e^b$ or $e^m$ (it cannot be a ``deadend"), and $|e^b_k|+|e^m_k|\leq \eta$. 
	\item In the MAG $G$, if  $(u,v)$ are non-adjacent, then there is no inducing path between them, i.e., no path on which every node is a collider and ancestor of $u$ or $v$. 
\end{enumerate} 
 The details are given in the Supplementary Material.
\end{proof}

\begin{proof}[Proof of Lemma~\ref{lem:searchingpool}]
\md{By Lemma~\ref{lem:localsep}, it suffices to consider two nodes $i$ and $j$ that are non-adjacent in $G$ and show that $V_\gamma(i,j)\setminus\{i,j\}\subseteq J_\gamma(i,j,C)$.}  
    By definition, any $v\in V_\gamma(i,j)\setminus \{i,j\}$ lies on a short path between $i,j$, 
    so 
    $
    D_G(i,v)+D_G(j,v)\leq \gamma.
    $
    Since $i,j$ are not adjacent, 
    we know  $C=C_{-ij}$ is a supergraph of $G$.
    Hence, 
     $D_{C_{-ij}}(i,v)\le D_G(i,v)$ and $D_{C_{-ij}}(j,v)\le D_G(j,v)$, which implies
    $
    D_{C_{-ij}}(i,v)+D_{C_{-ij}}(j,v)\leq \gamma,
    $
    and $v\in V_\gamma(i,j)\setminus\{i,j\}\subseteq J_\gamma(i,j,C)$. 
\end{proof}

To prove Theorem~\ref{thm:consistencylfci}, 
we first show an error bound for sample partial correlations. 
\begin{lemma}\label{lem:estimate}
	Assume $W=(W_1,\ldots, W_p)$ is a zero-mean random vector with covariance matrix $\Sigma$ such that each $W_i/\Sigma_{ii}^{1/2}$
	is sub-Gaussian with parameter $\sigma$. 
	Assume 
	 $\sigma$  is 
	bounded, and the minimal eigenvalue of all $(\eta+2)\times (\eta+2)$ submatrices of $\Sigma$
	are bounded below by $\lambda_{\min}>0$. If for $\zeta>0$ and $\epsilon>0$ satisfying 
	$\epsilon < 16(\eta+2)\lambda_{\min}^{-2}\max_i(\Sigma_{ii})(1+4\sigma^2))$ we have
	\begin{equation}
	n\geq \{\log (p^2+p)-\log (2\zeta)\}128(1+4\sigma^2)^2\max_i(\Sigma_{ii})^2(\eta+2)^2\left(\lambda_{\min}^{-1}+\lambda_{\min}^{-2}\left(1+2/\epsilon\right)\right)^2,
	\end{equation}\label{eq:samplecomplex}
	then the empirical partial correlation obtained from $n$ samples satisfies
	\begin{equation}\label{eq:rhobound}
	\PP{\max_{i\neq j, |S|\leq\eta}\big| \rho(i,j|S)-\widehat\rho(i,j|S) \big|\geq \epsilon}\leq
	\zeta.
	\end{equation}
\end{lemma}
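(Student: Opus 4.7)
The plan is to combine an entrywise concentration inequality for the sample covariance with a perturbation analysis of submatrix inverses, and then to translate the resulting control into a bound on the partial correlation via its representation in terms of entries of the inverse covariance submatrix.

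First I would establish elementwise concentration: since each $W_i/\sqrt{\Sigma_{ii}}$ is sub-Gaussian with parameter $\sigma$, the product $W_iW_j/\sqrt{\Sigma_{ii}\Sigma_{jj}}$ is sub-exponential, and a Bernstein-type tail bound yields
\[
\mathbb{P}\big(|\widehat\Sigma_{ij} - \Sigma_{ij}| > t\big) \le 2\exp\!\left(-\frac{n\,t^2}{128(1+4\sigma^2)^2\,\Sigma_{ii}\Sigma_{jj}}\right)
\]
for $t$ below a threshold determined by the sub-exponential scale; the explicit constants here come from standard computations for centred products of sub-Gaussian variables. A single union bound over the $(p^2+p)/2$ distinct entries of $\widehat\Sigma - \Sigma$ then produces uniform elementwise control.

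Next I would upgrade this to spectral-norm control on submatrices. For any $A\subseteq[p]$ with $|A|\le\eta+2$ one has $\|\widehat\Sigma_{A,A} - \Sigma_{A,A}\| \le (\eta+2)\max_{i,j\in A}|\widehat\Sigma_{ij}-\Sigma_{ij}|$, so the same union bound automatically controls every such submatrix. Applying the identity $X^{-1}-Y^{-1} = X^{-1}(Y-X)Y^{-1}$ together with $\|\Sigma_{A,A}^{-1}\|\le\lambda_{\min}^{-1}$ gives
\[
\|\widehat\Sigma_{A,A}^{-1} - \Sigma_{A,A}^{-1}\| \le \frac{\lambda_{\min}^{-2}\,\delta}{1-\lambda_{\min}^{-1}\delta}
\]
whenever the spectral-norm deviation is at most $\delta$ and $\lambda_{\min}^{-1}\delta < 1$; the assumption on $\epsilon$ guarantees that this regime is entered.

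I would then convert these bounds into a deviation bound for partial correlations. Writing $A=\{i,j\}\cup S$ and $K=\Sigma_{A,A}^{-1}$, $\widehat K=\widehat\Sigma_{A,A}^{-1}$, the partial correlation satisfies $\rho(i,j|S)=-K_{ij}/\sqrt{K_{ii}K_{jj}}$ and analogously for the sample version. Using $|K_{ij}|\le\lambda_{\min}^{-1}$ together with a lower bound on the diagonal entries $K_{ii},K_{jj}$ obtained from the bound on $\max_i \Sigma_{ii}$, a routine manipulation of the difference of two ratios bounds $|\rho(i,j|S)-\widehat\rho(i,j|S)|$ by a constant times $\max_i\Sigma_{ii}\cdot\bigl(\lambda_{\min}^{-1}+\lambda_{\min}^{-2}(1+2/\epsilon)\bigr)$ times the submatrix spectral-norm deviation. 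Choosing the elementwise threshold $t$ so that the resulting $|\rho-\widehat\rho|$ is at most $\epsilon$ and inverting to solve for $n$ produces the stated sample-complexity condition.

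The main obstacle will be carefully tracking the algebraic constants through the chain of inequalities so that the final condition matches the stated form exactly — in particular keeping the $(1+4\sigma^2)^2$ factor from the Bernstein tail, the $(\eta+2)^2$ factor from the spectral-to-entrywise conversion, and the $(1+2/\epsilon)$ term that arises when separately controlling numerator and denominator of the ratio defining $\widehat\rho(i,j|S)$. The outer union bound over the $O(p^{\eta+2})$ triples $(i,j,S)$ is subsumed by the elementwise concentration over the $p\times p$ covariance matrix, which is why only $\log(p^2+p)$ (rather than $\log p^{\eta+2}$) appears in the final expression.
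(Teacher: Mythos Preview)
Your proposal is correct and follows essentially the same approach as the paper: entrywise sub-Gaussian concentration for $\widehat\Sigma-\Sigma$, a union bound over the $(p^2+p)/2$ entries, and then a perturbation argument converting $\|\widehat\Sigma-\Sigma\|_\infty$ control into a uniform bound on $|\widehat\rho(i,j|S)-\rho(i,j|S)|$. The paper simply cites these two steps as Lemma~1 of \citet{ravikumar2011} and Lemma~4 of \citet{harris2013} respectively, whereas you sketch their content; in particular, the specific choice $\delta=\epsilon\lambda_{\min}^2/[(\epsilon+\epsilon\lambda_{\min}+2)(\eta+2)]$ in the paper's proof is exactly what produces the factor $(\eta+2)\bigl(\lambda_{\min}^{-1}+\lambda_{\min}^{-2}(1+2/\epsilon)\bigr)$ you identify.
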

\begin{proof}
	Set $\delta = \epsilon \lambda_{\min}^2/[(\epsilon+\epsilon\lambda_{\min}+2)(\eta+2)]$. Our choice of $\delta$ 
	satisfies $\delta\in (0,8\max_i(\Sigma_{ii})(1+4\sigma^2))$. 
	Then by Lemma 1 in  \cite{ravikumar2011}, we obtain the following inequality, 
	\begin{equation*}
	\mathbb{P}\left(|\widehat\Sigma_n(i,j) - \Sigma(i,j) |>\delta\right)\leq 4\exp\left\{-\frac{n\delta^2}{128(1+4\sigma^2)^2\max_i(\Sigma_{ii})^2}  \right\}
	\end{equation*}
	With the stated $n$, we have $\mathbb{P}\left(|\widehat\Sigma_n(i,j) - \Sigma(i,j) |>\delta\right)\leq 2\zeta/(p^2+p)$. A union bound over all the entries yields $\mathbb{P}\left(\|\widehat\Sigma_n - \Sigma \|_\infty>\delta\right)\leq \zeta$. 
	By
	Lemma 4 of \cite{harris2013},
	for all $i,j$ and $|S|\leq \eta$,
	$\|\widehat\Sigma_n - \Sigma \|_\infty\leq \delta$ implies 
	$|\rho(i,j|S)-\widehat\rho(i,j|S)|<\epsilon$.
	Therefore \eqref{eq:rhobound} holds.
\end{proof}

\begin{proof}[Proof of Theorem~\ref{thm:consistencylfci}]
By Lemma~\ref{lem:orient}, if all condition independence tests
for conditioning set $|S|\leq \eta$ make correct decisions, 
then the output of lFCI is sound, and under Assumption~\ref{ass:localdiscpaths} the output is complete. 
We aim to show that this is true on a high probability event. 

First suppose we make conditional independence decisions by rejecting the null if only if  $|\widehat \rho(i,j|S)|>\lambda/2$, where $\lambda$ is from Assumption~\ref{ass:faith}.
Define the following event,
\[
A=\left[\max_{i\neq j, |S|\leq\eta}\big| \rho(i,j|S)-\widehat\rho(i,j|S) \big|\leq \lambda/2\right].\]
By Lemma~\ref{lem:estimate},  for any  $\zeta>0$, with $n=\Omega((\log p)^{1/(1-2c)})$ we have $
	\PP{A}>1-\zeta$.  
Given $A$, it holds that for all $i$, $j$ and $|S|\leq\eta$,  $|\widehat \rho(i,j|S)|> \lambda/2$ if and only if 
$|\rho(i,j|S)|>\lambda$. Therefore,
with high probability, 
all the 
conditional independence decisions are correct 
and the output is sound. 
The completeness result follows Lemma~\ref{lem:orient}.

Next, suppose we make conditional independence decisions by comparing z-transformed partial correlations to normal quantiles.
It is shown in Appendix A of \citet{harris2013}
that  this approach is equivalent to the thresholding rule with the following significance levels, 
\[\alpha_n=2\left(1-\Phi\left(0.5\sqrt{n-3}\log\left(\frac{1+\lambda/3}{1-\lambda/3}\right)\right)\right).\]
\end{proof}

\begin{proof}[Proof of Corollary~\ref{cor:consistencyrpc}]
    Under the assumptions, there are always small neighborhood-based separators between non-adjacent nodes (see the proof for
	Lemma~\ref{lem:sepsetsize}), 
	and therefore the approximate-rPC, like rPC, consistently 
	recovers  correct skeletons. 
	The orientation step is also sound and complete (See proof of Lemma~\ref{lem:orient} for $\mathcal{R}_0$ --  $\mathcal{R}_3$.)
\end{proof}

We next show theoretical guarantees for Algorithm~\ref{alg:lFCI_mb} with estimated Markov blanket. 

\begin{theorem}\label{thm:consistencyforlfcimb}

Under Assumptions~ \ref{ass:gxlocalsep}, \ref{ass:faith}, \ref{ass:cov}, \ref{ass:vanishtrekweight},  
and suppose $n=\Omega((\log p)^{1/(1-2c})$.
Suppose  
$\gamma$ is large enough such that $\MB_\gamma(G,v)=\MB(G,v)$ for all $v\in V$.
Suppose there exists a sequence
 $\tau_{n,p}\to 0$ such that  the 
estimated precision matrix satisfies  $\norm{\widehat \Theta-\Theta}_\infty\leq \tau_{n,p}$
with high probability. 
Also assume $\min_{i\in\ADJ(G,j)}|\Theta_{ij}|\geq 2\tau_{n,p}$. 
Then 
there exists a sequence $\alpha_n\to 0$ such that  Algorithm~\ref{alg:lFCI_mb} consistently learns a PAG for $[G]$. 
	Moreover, 
	if Assumption~\ref{ass:localdiscpaths} holds,
	then  it
	consistently learns the maximally informative PAG. 
\end{theorem}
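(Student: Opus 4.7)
The plan is to combine two high-probability events: (i) consistent recovery of the (local) moral graph from the estimated precision matrix, and (ii) uniform accuracy of all sample partial correlations needed by the edge-removal loop of Algorithm~\ref{alg:lFCI_mb}. Once both events hold, we can essentially replay the argument of Theorem~\ref{thm:consistencylfci}, with the reach level reduced from $\eta$ to $\eta-1$ thanks to Lemma~\ref{lem:mb}.

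First, I would handle the moral graph initialization. Under Assumption~\ref{ass:cov} the linear SEM \eqref{eq:SEM} produces a marginal distribution whose precision matrix $\Theta=\Sigma^{-1}$ has support equal to the moral graph of $G$, and hence, under the hypothesis $\MB_\gamma(G,v)=\MB(G,v)$, equal to the local moral graph. With the estimator satisfying $\lVert \widehat\Theta-\Theta\rVert_\infty\le \tau_{n,p}$ with high probability, the simple thresholding rule that keeps entries with $|\widehat\Theta_{ij}|>\tau_{n,p}$ (or any level strictly between $\tau_{n,p}$ and $2\tau_{n,p}$) correctly recovers $\mathrm{supp}(\Theta)$, because the minimum signal condition $\min_{i\in\ADJ(G,j)}|\Theta_{ij}|\ge 2\tau_{n,p}$ forces every true edge above threshold. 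Call this event $A_1$; by hypothesis $\Pp{}{A_1}\to 1$.

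Next, on $A_1$ the initial skeleton $C$ in Algorithm~\ref{alg:lFCI_mb} equals the moral graph, so by Lemma~\ref{lem:searchingpool} the search pools $J_\gamma(i,j,C)$ constructed from this $C$ still contain a $\gamma$-local-graph separator for every non-adjacent pair $(i,j)$ in $G$ (the moral graph is a supergraph of $\mathrm{skel}(G)$). Moreover, because $C$ only retains edges $(i,j)$ with $i\in\MB_\gamma(G,j)$, all pairs tested in the loop satisfy the hypothesis of Lemma~\ref{lem:mb}; under Assumption~\ref{ass:gxlocalsep} this lemma (or Theorem~\ref{prop:hybridMAG} in the hybrid case) gives $L^{\mathrm{mb}}(G,\gamma)\le \eta-1$. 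Thus a local-graph separator of size at most $\eta-1$ exists for every edge that must be removed, which justifies stopping the loop at level $\ell=\eta-1$.

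Third, I would invoke Lemma~\ref{lem:estimate} with conditioning sets of size up to $\eta-1$ and take $\epsilon=\lambda/2$, where $\lambda$ comes from Assumption~\ref{ass:faith}. Under the sub-Gaussianity and spectral bounds of Assumption~\ref{ass:cov}, the sample-size requirement $n=\Omega((\log p)^{1/(1-2c)})$ ensures that the event
\[
A_2=\Bigl[\max_{i\ne j,\,|S|\le \eta-1}\bigl|\widehat\rho(i,j\mid S)-\rho(i,j\mid S)\bigr|\le \lambda/2\Bigr]
\]
has probability tending to one. Combining Assumptions~\ref{ass:faith} and \ref{ass:vanishtrekweight}, on $A_2$ every test based on thresholding $|\widehat\rho(i,j\mid S)|$ at $\lambda/2$ makes the correct decision for all $|S|\le \eta-1$; as in the proof of Theorem~\ref{thm:consistencylfci}, this is equivalent to using Fisher's $z$-transform with a corresponding significance sequence $\alpha_n\to 0$. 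A union bound over $A_1\cap A_2$ gives that, with probability tending to one, the loop terminates at level $\eta-1$ with the correct skeleton and a correct collection of local-graph separators $\mathrm{SEP}(i,j)$.

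Finally, the orientation step is identical to the one analyzed in Lemma~\ref{lem:orient}: using the modified rule $\mathcal{R}_4'$, the output is sound whenever the separators are local-graph separators of $G$, and becomes maximally informative once Assumption~\ref{ass:localdiscpaths} ensures the existence of a discriminating path contained in $G_\gamma(i,j)$. The main obstacle I anticipate is the interplay between the two high-probability events: one must check that the pool $J_\gamma(i,j,C)$ computed from an estimated $C$ still behaves like the pool derived from the true moral graph (addressed via Lemma~\ref{lem:searchingpool} on $A_1$) and that the early termination at $\eta-1$ does not discard separators that would otherwise have required size $\eta$ — this is exactly where Lemma~\ref{lem:mb} is indispensable, and why the additional condition $\MB_\gamma(G,v)=\MB(G,v)$ is imposed.
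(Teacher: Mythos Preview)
Your approach is essentially the same as the paper's: establish that the estimated moral graph is a valid initialization, then replay the proof of Theorem~\ref{thm:consistencylfci} with $\eta$ replaced by $\eta-1$ via Lemma~\ref{lem:mb}. The paper's proof is terser but follows exactly this structure.

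One small correction is needed. You claim that on $A_1$ the initial skeleton $C$ \emph{equals} the moral graph, i.e., that thresholding recovers $\mathrm{supp}(\Theta)$ exactly. But the beta-min condition $\min_{i\in\ADJ(G,j)}|\Theta_{ij}|\ge 2\tau_{n,p}$ is only assumed for pairs adjacent in $G$, not for all pairs in the moral graph; entries $\Theta_{ij}$ with $i\in\MB(G,j)\setminus\ADJ(G,j)$ may be arbitrarily small and hence dropped by thresholding. What actually holds with high probability is the sandwich
\[
\{(i,j):i\in\ADJ(G,j)\}\subseteq \mathrm{supp}(\widehat\Theta)\subseteq\{(i,j):i\in\MB_\gamma(G,j)\},
\]
which is precisely what the paper states. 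This is all you need: the left inclusion makes $C$ a supergraph of $\mathrm{skel}(G)$ so Lemma~\ref{lem:searchingpool} applies, and the right inclusion ensures every remaining pair satisfies the hypothesis of Lemma~\ref{lem:mb}. Your subsequent argument already uses only these two facts, so the fix is purely cosmetic.
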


\begin{remark}
To estimate the precision matrix, we can use in particular
generalized score matching 
\citep{lin2016, Yu2019} or equivalently 
the SCIO algorithm \citep{Liu2015}, 
which satisfies 
$
\norm{\widehat\Theta-\Theta}_\infty=O_p\left(\sqrt{s_p\log (p)/n}\right),
$
 where $s_p=\max_{i\in V}|\MB_\gamma(G,i)|$. 
\end{remark}

\begin{proof}[Proof of Theorem~\ref{thm:consistencyforlfcimb}]
    Under stated assumptions on precision and beta-min, 
    with high probability, 
$
    \left\{(i,j):i\in \ADJ(G,j)\right\}
    \subseteq \left\{(i,j): (i,j)\in \text{supp}(\widehat \Theta)\right\}\subseteq 
    \left\{(i,j):i\in \MB_\gamma(G,j)\right\}.
$
	The rest of this proof is identical to Theorem~\ref{thm:consistencylfci}, 
	with $\eta$ replaced by $\eta-1$, following Lemma~\ref{lem:mb}.
\end{proof}

\section{Simulations with local-graph separation oracle}
\label{sec:simoracle}
\changemarker{
In this section we investigate the performance of population version of FCI, FCI+ and lFCI.
We use the exact same settings as the simulation study in Section~\ref{sec:experiments}. 
We run FCI and FCI+ with oracle $m$-separations, and lFCI with oracle $\gamma$-local-separations, with $\gamma=6$. The experiment is repeated 50 times. In the power-law  setting with $p=100$, 
FCI is ususally much more  computationally burdensome than FCI+ and lFCI. Due to this limitation, we only report the runs in which FCI terminated within 8 hours. Results are shown in Table~\ref{tab:oracle}.}

\changemarker{
Performances of the methods are evaluated by the proportion of cases in which the true (unique) maximally informative PAG is recovered.
Computational costs are compared based on the total number of CI tests and 
the maximal reach levels.
We also check  Assumption~\ref{ass:vanishtrekweight} directly with $\rho^*=\max_{(i,j)\notin E}\min_{S\in\mathcal{S}_{\eta,\gamma}(i,j)}|\rho(i,j|S)|$ and report the median over all cases. 
Note that FCI/FCI+ are exact algorithms, meaning they recover the maximally informative PAG in all cases, because $m$-separations always
correspond to zero partial correlations. lFCI/lFCI\_mb are not guaranteed to be complete if Assumption~\ref{ass:vanishtrekweight} is violated --- though their outputs are correct PAGs, they are sometimes not maximally informative.  
The proposed methods show improvement on the computational aspects: the number of tests are consistent over different graph generating schemes, whereas FCI/FCI+ could suffer in power-law cases. 
The maximum 
\as{reach level} ($m_{\mathrm{reach}}$) confirms the results in Section~\ref{sec:localseparation} --- in most cases the local separators are indeed as small as 3}.

\begin{table}[t]
\centering\footnotesize\setlength{\tabcolsep}{2pt}
\begin{tabular}{cccccccccccccc}
                                                                     &         & \multicolumn{4}{c}{ER}                               & \multicolumn{4}{c}{PL}                               & \multicolumn{4}{c}{WS}                               \\
                                                                     &   $p$      & FCI  & FCI+ & lFCI              & lFCImb          & FCI  & FCI+ & lFCI              & lFCImb          & FCI & FCI+ & lFCI               & lFCImb          \\
\multicolumn{1}{c}{\multirow{3}{*}{\%Recovered}}                     & 20  & 1    & 1    & 1                 & 1                  & 1    & 1    & 1                 & 1                  & 1   & 1    & 1                  & 1                  \\
\multicolumn{1}{c}{}                                                 & 50  & 1    & 1    & 1                 & 1                  & 1    & 1    & 1                 & 1                  & 1   & 1    & 1                  & 1                  \\
\multicolumn{1}{c}{}                                                 & 100 & 1    & 1    & 1                 & 1                  & 1    & 1    & 1                 & 1                  & 1   & 1    & 1                  & 1                  \\
\multicolumn{1}{c}{\multirow{3}{*}{$\rho^*$}} & 20  & 0    & 0    & 0 & 0 & 0    & 0    & 0 & 0 & 0   & 0    & 0 & 0                  \\
\multicolumn{1}{c}{}                                                 & 50  & 0    & 0    & 0 & 0 & 0    & 0    & 0 & 0 & 0   & 0    & 0 & $0$ \\
\multicolumn{1}{c}{}                                                 & 100 & 0    & 0    & 0.004 & $0$ & 0    & 0    & 0.009 & 0 & 0   & 0    & 0.001  & $0$  \\
log(\#CI)                                                         & 20  & 8.3  & 6.7  & 6.3               & 5.7                & 10.9 & 9.3  & 6.2               & 5.7                & 6.6 & 6.1  & 5.8                & 5.3                \\
                                                                     & 50  & 10.4 & 8.1  & 8.2               & 7.4                & 17.1 & 13.1 & 7.7               & 7.0                & 8.5 & 7.7  & 7.6                & 6.9                \\
                                                                     & 100 & 11.9 & 8.9  & 8.9               & 8.2                & 15.2 & 12.1 & 8.9               & 8.2                & 9.8 & 8.8  & 8.9                & 8.2                \\
$m_{\text{reach}}$                                                             & 20  & 4.9  & 4.0  & 2.9               & 1.6                & 8.7  & 7.9  & 2.4               & 1.3                & 4   & 3    & 3                  & 1                  \\
                                                                     & 50  & 6.2  & 5.2  & 3.2               & 2.0                & 13.6 & 12.5 & 2.8               & 1.4                & 5   & 4    & 3                  & 2                  \\
                                                                     & 100 & 6.9  & 6.0  & 3.2               & 1.8                & 12.7 & 12.7 & 2.7               & 1.3                & 6   & 5    & 3                  & 2                 
\end{tabular}
\caption{Average performance of population version of FCI, FCI+, lFCI and lFCImb with graphs of size $p\in \{20,50,100\}$ and fixed $\gamma=6$. }
\label{tab:oracle}
\end{table}


\end{document}


\maketitle


\section{Additional results}\label{sec:supplement_1}
\begin{definition}[PAG]\label{def:pag}
	Let $G=(X\cup L\cup Z,E)$ be a DAG, and  
	$H$ be a simple graph with vertex set $X$ and edges of the type
	$\to$, $\edgecirchead$, $\edgecirccirc$, $\edgeheadhead$, 
	$-$, or $\edgecirctail$. 
	Then $H$ is a PAG  representing $G$ if and only if the following four conditions
	hold:
	\begin{enumerate}

		\item The absence of an edge between two vertices $i$ and $j$ in $H$ implies that
		there exists a subset $Y \subseteq X \setminus \{i, j \}$ 
		such that $i$ and $j$ 
		are $m$-separated given $(Y\cup Z)$. 

		\item The presence of an edge between two vertices $i$ and $j$ in $H$ implies that $i$ and $j$
		are $m$-connected given 
		$(Y\cup Z)$ 
		 for all subsets $Y \subseteq X \setminus \{i, j \}$. 

		 \item If an edge between $i$ and $j$ in $H$ has an arrowhead at $j$, then $j \notin \AN(G, i\cup Z )$. 

		 \item  If an edge between $i$ and $j$ in $H$ has a tail at $j$, then $j \in \AN(G, i \cup Z)$.
	\end{enumerate}
\end{definition}

\section{Additional Proofs}
 \begin{proof}[Proof of Lemma 3]
 \begin{figure}[ht]
	\centering
	\begin{tikzpicture}[> = stealth,shorten > = 1pt,auto,node distance = 1cm, semithick ]
	\tikzstyle{every state}=[draw = black,thick,fill = white,minimum size = 1mm]
	\node[label=$i$,c1] (i) at(0,0) {};
	\node[c1] (u) at(1,0) {};
	\node[]()at(1.5,0){$\ldots$};
	\draw[->,line width= 1] (i) -- (u);
	\end{tikzpicture}
	\begin{tikzpicture}[> = stealth,shorten > = 1pt,auto,node distance = 1cm, semithick ]
	\tikzstyle{every state}=[draw = black,thick,fill = white,minimum size = 1mm]
	\node[label=$i$,c1] (i) at(0,0) {};
	\node[c2] (u) at(1,0) {};
	\node[] at (0.3,0) {$\star$};
	\node[]()at(1.5,0){$\ldots$};
	\draw[-,line width= 1] (i) -- (u);
	\end{tikzpicture}
	\begin{tikzpicture}[> = stealth,shorten > = 1pt,auto,node distance = 1cm, semithick ]
	\tikzstyle{every state}=[draw = black,thick,fill = white,minimum size = 1mm]
	\node[label=$i$,c1] (i) at(0,0) {};
	\node[c4] (u) at(1,0) {};
	\node[]()at(1.5,0){$\ldots$};
	\draw[<->,line width= 1] (i) -- (u);
	\end{tikzpicture}
		\begin{tikzpicture}[> = stealth,shorten > = 1pt,auto,node distance = 1cm, semithick ]
	\tikzstyle{every state}=[draw = black,thick,fill = white,minimum size = 1mm]
	\node[label=$i$,c1] (i) at(0,0) {};
	\node[c2] (u) at(1,0) {};
	\node[c3] (v) at(2,0) {};
	\node[]()at(2.5,0){$\ldots$};
	\draw[<->,line width= 1] (i) -- (u);
	\draw[->,line width= 1] (u) -- (v);
	\end{tikzpicture}
	\begin{tikzpicture}[> = stealth,shorten > = 1pt,auto,node distance = 1cm, semithick ]
	\tikzstyle{every state}=[draw = black,thick,fill = white,minimum size = 1mm]
	\node[label=$i$,c1] (i) at(0,0) {};
	\node[c2] (u) at(1,0) {};
	\node[c1] (v) at(2,0) {};
	\node[c3] (w) at(2,1) {};
	\node[]()at(2.5,0){$\ldots$};
	\node[]()at(2.5,1){$\ldots$};
	\draw[dotted,line width= 1] (v) -- (w);
	\draw[<->,line width= 1] (i) -- (u);
	\draw[->,line width= 1] (u) -- (v);
	\draw[->,line width= 1] (u) -- (w);	
	\end{tikzpicture}
	\begin{tikzpicture}[> = stealth,shorten > = 1pt,auto,node distance = 1cm, semithick ]
	\tikzstyle{every state}=[draw = black,thick,fill = white,minimum size = 1mm]
	\node[label=$i$,c1] (i) at(0,0) {};
	\node[c2] (u) at(1,0) {};
	\node[c2] (v) at(2,0) {};
	\node[c3] (w) at(2,1) {};
	\node[]()at(2.5,0){$\ldots$};
	\node[]()at(2.5,1){$\ldots$};
	\draw[dotted,line width= 1] (v) -- (w);
	\draw[<->,line width= 1] (i) -- (u);
	\draw[<-,line width= 1] (u) -- (v);
	\draw[->,line width= 1] (u) -- (w);	
	\end{tikzpicture}
	\begin{tikzpicture}[> = stealth,shorten > = 1pt,auto,node distance = 1cm, semithick ]
	\tikzstyle{every state}=[draw = black,thick,fill = white,minimum size = 1mm]
	\node[label=$i$,c1] (i) at(0,0) {};
	\node[c2] (u) at(1,0) {};
	\node[c4] (v) at(2,0) {};
	\node[c3] (w) at(2,1) {};
	\node[]()at(2.5,0){$\ldots$};
	\node[]()at(2.5,1){$\ldots$};
	\draw[dotted,line width= 1] (v) -- (w);
	\draw[<->,line width= 1] (i) -- (u);
	\draw[<->,line width= 1] (u) -- (v);
	\draw[->,line width= 1] (u) -- (w);	
	\end{tikzpicture}
	\begin{tikzpicture}[> = stealth,shorten > = 1pt,auto,node distance = 1cm, semithick ]
	\tikzstyle{every state}=[draw = black,thick,fill = white,minimum size = 1mm]
	\node[label=$i$,c1] (i) at(0,0) {};
	\node[c2] (u) at(1,0) {};
	\node[c2] (v) at(2,0) {};
	\node[c3] (w) at(2,1) {};
	\node[c3] (x) at(3,0) {};
	\node[]()at(3.5,0){$\ldots$};
	\node[]()at(2.5,1){$\ldots$};
	\draw[dotted,line width= 1] (x) -- (w);
	\draw[dotted,line width= 1] (v) -- (w);
	\draw[<->,line width= 1] (i) -- (u);
	\draw[<->,line width= 1] (u) -- (v);
	\draw[->,line width= 1] (u) -- (w);	
	\draw[->,line width= 1] (v) -- (x);	
	\end{tikzpicture}
	\begin{tikzpicture}[> = stealth,shorten > = 1pt,auto,node distance = 1cm, semithick ]
	\tikzstyle{every state}=[draw = black,thick,fill = white,minimum size = 1mm]
	\node[label=$i$,c1] (i) at(0,0) {};
	\node[c2] (u) at(1,0) {};
	\node[c2] (v) at(2,0) {};
	\node[c3] (w) at(2,1) {};
	\node[c2] (y) at(3,0) {};
	\node[]()at(2.5,1){$\ldots$};
	\node[]()at(3.5,0){$\ldots$};
	\draw[dotted,line width= 1] (y) -- (w);
	\draw[<->,line width= 1] (i) -- (u);
	\draw[<->,line width= 1] (u) -- (v);
	\draw[->,line width= 1] (u) -- (w);	
	\draw[->,line width= 1] (v) -- (w);	
	\draw[<-,line width= 1] (v) -- (y);	
	\end{tikzpicture}
	\begin{tikzpicture}[> = stealth,shorten > = 1pt,auto,node distance = 1cm, semithick ]
	\tikzstyle{every state}=[draw = black,thick,fill = white,minimum size = 1mm]
	\node[label=$i$,c1] (i) at(0,0) {};
	\node[c2] (u) at(1,0) {};
	\node[c2] (v) at(2,0) {};
	\node[c3] (w) at(2,1) {};
	\node[c4] (y) at(3,0) {};
	\node[]()at(2.5,1){$\ldots$};
	\node[]()at(3.5,0){$\ldots$};
	\draw[dotted,line width= 1] (y) -- (w);
	\draw[<->,line width= 1] (i) -- (u);
	\draw[<->,line width= 1] (u) -- (v);
	\draw[->,line width= 1] (u) -- (w);	
	\draw[->,line width= 1] (v) -- (w);	
	\draw[<->,line width= 1] (v) -- (y);	
	\end{tikzpicture}
	\begin{tikzpicture}[> = stealth,shorten > = 1pt,auto,node distance = 1cm, semithick ]
	\tikzstyle{every state}=[draw = black,thick,fill = white,minimum size = 1mm]
	\node[label=$i$,c1] (i) at(0,0) {};
	\node[c2] (u) at(1,0) {};
	\node[c2] (v) at(2,0) {};
	\node[c3] (w) at(2,1) {};
	\node[c5] (y) at(3,0) {};
	\node[]()at(2.5,1){$\ldots$};
	\node[]()at(3.5,0){$\ldots$};
	\draw[dotted,line width= 1] (y) -- (w);
	\draw[<->,line width= 1] (i) -- (u);
	\draw[<->,line width= 1] (u) -- (v);
	\draw[->,line width= 1] (u) -- (w);	
	\draw[->,line width= 1] (v) -- (w);	
	\draw[<->,line width= 1] (v) -- (y);	
	\end{tikzpicture}
	\begin{tikzpicture}[> = stealth,shorten > = 1pt,auto,node distance = 1cm, semithick ]
	\tikzstyle{every state}=[draw = black,thick,fill = white,minimum size = 1mm]
	\node[label=$i$,c1] (i) at(0,0) {};
	\node[c2] (u) at(1,0) {};
	\node[c2] (v) at(2,0) {};
	\node[c3] (w) at(2,1) {};
	\node[c3] (x) at(3,0) {};
	\node[c2] (y) at(3,-1) {};
	\node[]()at(2.5,1){$\ldots$};
	\node[]()at(3.5,0){$\ldots$};
	\node[]()at(3.5,-1){$\ldots$};
	\draw[dotted,line width= 1] (y) -- (w);
	\draw[dotted,line width= 1] (x) -- (w);
	\draw[dotted,line width= 1] (x) -- (y);
	\draw[->,dotted,line width= 1] (w) -- (v);
	\draw[<->,line width= 1] (i) -- (u);
	\draw[<->,line width= 1] (u) -- (v);
	\draw[->,line width= 1] (u) -- (w);	
	\draw[->,line width= 1] (v) -- (x);	
	\draw[<-,line width= 1] (v) -- (y);	
	\end{tikzpicture}
	\begin{tikzpicture}[> = stealth,shorten > = 1pt,auto,node distance = 1cm, semithick ]
	\tikzstyle{every state}=[draw = black,thick,fill = white,minimum size = 1mm]
	\node[label=$i$,c1] (i) at(0,0) {};
	\node[c2] (u) at(1,0) {};
	\node[c2] (v) at(2,0) {};
	\node[c3] (w) at(2,1) {};
	\node[c3] (x) at(3,0) {};
	\node[c5] (y) at(3,-1) {};
	\node[]()at(2.5,1){$\ldots$};
	\node[]()at(3.5,0){$\ldots$};
	\node[]()at(3.5,-1){$\ldots$};
	\draw[dotted,line width= 1] (y) -- (w);
	\draw[dotted,line width= 1] (x) -- (w);
	\draw[dotted,line width= 1] (x) -- (y);
	\draw[->,dotted,line width= 1] (w) -- (v);
	\draw[<->,line width= 1] (i) -- (u);
	\draw[<->,line width= 1] (u) -- (v);
	\draw[->,line width= 1] (u) -- (w);	
	\draw[->,line width= 1] (v) -- (x);	
	\draw[<->,line width= 1] (v) -- (y);	
	\end{tikzpicture}
	\begin{tikzpicture}[> = stealth,shorten > = 1pt,auto,node distance = 1cm, semithick ]
	\tikzstyle{every state}=[draw = black,thick,fill = white,minimum size = 1mm]
	\node[label=$i$,c1] (i) at(0,0) {};
	\node[c2] (u) at(1,0) {};
	\node[c2] (v) at(2,0) {};
	\node[c3] (w) at(2,1) {};
	\node[c3] (x) at(3,0) {};
	\node[c4] (y) at(3,-1) {};
	\node[]()at(2.5,1){$\ldots$};
	\node[]()at(3.5,0){$\ldots$};
	\node[]()at(3.5,-1){$\ldots$};
	\draw[dotted,line width= 1] (y) -- (w);
	\draw[dotted,line width= 1] (x) -- (w);
	\draw[dotted,line width= 1] (x) -- (y);
	\draw[->,dotted,line width= 1] (w) -- (v);
	\draw[<->,line width= 1] (i) -- (u);
	\draw[<->,line width= 1] (u) -- (v);
	\draw[->,line width= 1] (u) -- (w);	
	\draw[->,line width= 1] (v) -- (x);	
	\draw[<->,line width= 1] (v) -- (y);	
	\end{tikzpicture}
	\begin{tikzpicture}[> = stealth,shorten > = 1pt,auto,node distance = 1cm, semithick ]
	\tikzstyle{every state}=[draw = black,thick,fill = white,minimum size = 1mm]
	\node[label=$i$,c1] (i) at(0,0) {};
	\node[c2] (u) at(1,0) {};
	\node[c1] (v) at(2,0) {};
	\node[c3] (w) at(2,1) {};
	\node[c1] (x) at(2,-1) {};
	\node[]()at(2.5,1){$\ldots$};
	\node[]()at(2.5,0){$\ldots$};
	\node[]()at(2.5,-1){$\ldots$};
	\draw[dotted,line width= 1] (w) -- (v);
	\draw[dotted,line width= 1] (x) -- (v);
	\draw [dotted] (w) to [out=-60,in=60] (x);
	\draw[<->,line width= 1] (i) -- (u);
	\draw[->,line width= 1] (u) -- (v);
	\draw[->,line width= 1] (u) -- (w);	
	\draw[->,line width= 1] (u) -- (x);	
	\end{tikzpicture}
	\begin{tikzpicture}[> = stealth,shorten > = 1pt,auto,node distance = 1cm, semithick ]
	\tikzstyle{every state}=[draw = black,thick,fill = white,minimum size = 1mm]
	\node[label=$i$,c1] (i) at(0,0) {};
	\node[c2] (u) at(1,0) {};
	\node[c2] (v) at(2,0) {};
	\node[c3] (w) at(2,1) {};
	\node[c1] (x) at(2,-1) {};
	\node[]()at(2.5,1){$\ldots$};
	\node[]()at(2.5,0){$\ldots$};
	\node[]()at(2.5,-1){$\ldots$};
	\draw[dotted,line width= 1] (w) -- (v);
	\draw[dotted,line width= 1] (x) -- (v);
	\draw [dotted] (w) to [out=-60,in=60] (x);
	\draw[<->,line width= 1] (i) -- (u);
	\draw[<-,line width= 1] (u) -- (v);
	\draw[->,line width= 1] (u) -- (w);	
	\draw[->,line width= 1] (u) -- (x);	
	\end{tikzpicture}
	\begin{tikzpicture}[> = stealth,shorten > = 1pt,auto,node distance = 1cm, semithick ]
	\tikzstyle{every state}=[draw = black,thick,fill = white,minimum size = 1mm]
	\node[label=$i$,c1] (i) at(0,0) {};
	\node[c2] (u) at(1,0) {};
	\node[c5] (v) at(2,0) {};
	\node[c3] (w) at(2,1) {};
	\node[c1] (x) at(2,-1) {};
	\draw[dotted,line width= 1] (w) -- (v);
	\draw[dotted,line width= 1] (x) -- (v);
	\draw [dotted] (w) to [out=-60,in=60] (x);
	\node[]()at(2.5,1){$\ldots$};
	\node[]()at(2.5,0){$\ldots$};
	\node[]()at(2.5,-1){$\ldots$};
	\draw[<->,line width= 1] (i) -- (u);
	\draw[<->,line width= 1] (u) -- (v);
	\draw[->,line width= 1] (u) -- (w);	
	\draw[->,line width= 1] (u) -- (x);	
	\end{tikzpicture}
	\begin{tikzpicture}[> = stealth,shorten > = 1pt,auto,node distance = 1cm, semithick ]
	\tikzstyle{every state}=[draw = black,thick,fill = white,minimum size = 1mm]
	\node[label=$i$,c1] (i) at(0,0) {};
	\node[c2] (u) at(1,0) {};
	\node[c5] (v) at(2,0) {};
	\node[c3] (w) at(2,1) {};
	\node[c5] (x) at(2,-1) {};
	\node[]()at(2.5,1){$\ldots$};
	\node[]()at(2.5,0){$\ldots$};
	\node[]()at(2.5,-1){$\ldots$};
	\draw[dotted,line width= 1] (w) -- (v);
	\draw[dotted,line width= 1] (x) -- (v);
	\draw [dotted] (w) to [out=-60,in=60] (x);
	\draw[<->,line width= 1] (i) -- (u);
	\draw[<->,line width= 1] (u) -- (v);
	\draw[->,line width= 1] (u) -- (w);	
	\draw[<->,line width= 1] (u) -- (x);	
	\end{tikzpicture}
		\begin{tikzpicture}[> = stealth,shorten > = 1pt,auto,node distance = 1cm, semithick ]
	\tikzstyle{every state}=[draw = black,thick,fill = white,minimum size = 1mm]
	\node[label=$i$,c1] (i) at(0,0) {};
	\node[c2] (u) at(1,0) {};
	\node[c5] (v) at(2,0) {};
	\node[c3] (w) at(2,1) {};
	\node[c2] (x) at(2,-1) {};
	\node[]()at(2.5,1){$\ldots$};
	\node[]()at(2.5,0){$\ldots$};
	\node[]()at(2.5,-1){$\ldots$};
	\draw[dotted,line width= 1] (w) -- (v);
	\draw[dotted,line width= 1] (x) -- (v);
	\draw [dotted] (w) to [out=-60,in=60] (x);
	\draw[<->,line width= 1] (i) -- (u);
	\draw[<->,line width= 1] (u) -- (v);
	\draw[->,line width= 1] (u) -- (w);	
	\draw[<-,line width= 1] (u) -- (x);	
	\end{tikzpicture}
		\begin{tikzpicture}[> = stealth,shorten > = 1pt,auto,node distance = 1cm, semithick ]
	\tikzstyle{every state}=[draw = black,thick,fill = white,minimum size = 1mm]
	\node[label=$i$,c1] (i) at(0,0) {};
	\node[c2] (u) at(1,0) {};
	\node[c2] (v) at(2,0) {};
	\node[c3] (w) at(2,1) {};
	\node[c2] (x) at(2,-1) {};
	\node[]()at(2.5,1){$\ldots$};
	\node[]()at(2.5,0){$\ldots$};
	\node[]()at(2.5,-1){$\ldots$};
	\draw[dotted,line width= 1] (w) -- (v);
	\draw[dotted,line width= 1] (x) -- (v);
	\draw [dotted] (w) to [out=-60,in=60] (x);
	\draw[<->,line width= 1] (i) -- (u);
	\draw[<-,line width= 1] (u) -- (v);
	\draw[->,line width= 1] (u) -- (w);	
	\draw[<-,line width= 1] (u) -- (x);	
	\end{tikzpicture}
	\caption{Local graph configurations with $\eta=3$ and $|\NE(G_\gamma(i,j),i)|=1$.
	A separator (not necessarily minimal) is marked with shade. 
	Marked edge represents the pattern of $G_\gamma(i,j)$, while absence of an edge represents the absence pattern of $G_\gamma(i,j)$. 
	Ellipses between nodes means this edge is allowed to occur in $G_\gamma(i,j)$,
	as long as it agrees with the MAG property and local-path property. 
	The square shape represents a node with no outgoing edge (except the marked ones). The diamond shape represents a node that controls whether the separator is minimum --- if this node is not ancestor of $j$, then smaller separator exists. }
	\label{fig:eta=3.1}
\end{figure}
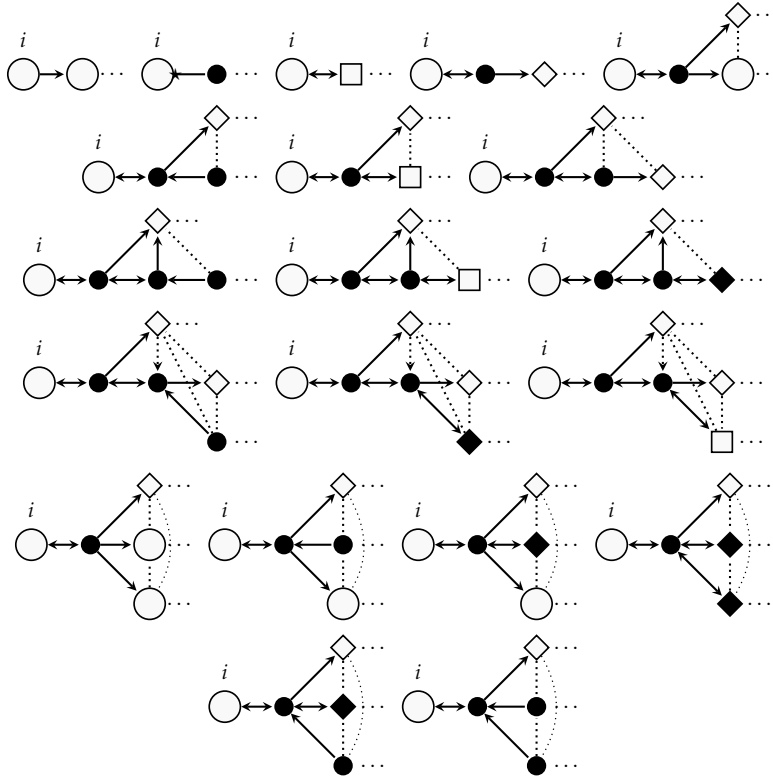

\begin{figure}[ht]
	\centering
	\begin{tikzpicture}[> = stealth,shorten > = 1pt,auto,node distance = 1cm, semithick ]
	\tikzstyle{every state}=[draw = black,thick,fill = white,minimum size = 1mm]
	\node[label=$i$,c1] (i) at(0,0) {};
	\node[c1] (u) at(1,0) {};
	\node[c1] (v) at(1,1) {};
	\node[]()at(1.5,1){$\ldots$};
	\node[]()at(1.5,0){$\ldots$};
	\draw[dotted,line width= 1] (u) -- (v);
	\draw[->,line width= 1] (i) -- (u);
	\draw[->,line width= 1] (i) -- (v);
	\end{tikzpicture}
	\begin{tikzpicture}[> = stealth,shorten > = 1pt,auto,node distance = 1cm, semithick ]
	\tikzstyle{every state}=[draw = black,thick,fill = white,minimum size = 1mm]
	\node[label=$i$,c1] (i) at(0,0) {};
	\node[c2] (u) at(1,0) {};
	\node[c2] (v) at(1,1) {};
	\node[]()at(1.5,1){$\ldots$};
	\node[]()at(1.5,0){$\ldots$};
	\draw[dotted,line width= 1] (u) -- (v);
	\draw[<-,line width= 1] (i) -- (u);
	\draw[<-,line width= 1] (i) -- (v);
	\end{tikzpicture}
	\begin{tikzpicture}[> = stealth,shorten > = 1pt,auto,node distance = 1cm, semithick ]
	\tikzstyle{every state}=[draw = black,thick,fill = white,minimum size = 1mm]
	\node[label=$i$,c1] (i) at(0,0) {};
	\node[c2] (u) at(1,0) {};
	\node[c2] (v) at(1,1) {};
	\node[]()at(1.5,1){$\ldots$};
	\node[]()at(1.5,0){$\ldots$};
	\draw[dotted,line width= 1] (u) -- (v);
	\draw[-,line width= 1] (i) -- (u);
	\draw[-,line width= 1] (i) -- (v);
	\end{tikzpicture}
	\begin{tikzpicture}[> = stealth,shorten > = 1pt,auto,node distance = 1cm, semithick ]
	\tikzstyle{every state}=[draw = black,thick,fill = white,minimum size = 1mm]
	\node[label=$i$,c1] (i) at(0,0) {};
	\node[c1] (u) at(1,0) {};
	\node[c2] (v) at(1,1) {};
	\node[]()at(1.5,1){$\ldots$};
	\node[]()at(1.5,0){$\ldots$};
	\draw[dotted,line width= 1] (u) -- (v);
	\draw[->,line width= 1] (i) -- (u);
	\draw[<-,line width= 1] (i) -- (v);
	\end{tikzpicture}
	\begin{tikzpicture}[> = stealth,shorten > = 1pt,auto,node distance = 1cm, semithick ]
	\tikzstyle{every state}=[draw = black,thick,fill = white,minimum size = 1mm]
	\node[label=$i$,c1] (i) at(0,0) {};
	\node[c1] (u) at(1,0) {};
	\node[c2] (v) at(1,1) {};
	\node[]()at(1.5,1){$\ldots$};
	\node[]()at(1.5,0){$\ldots$};
	\draw[dotted,line width= 1] (u) -- (v);
	\draw[->,line width= 1] (i) -- (u);
	\draw[-,line width= 1] (i) -- (v);
	\end{tikzpicture}
	\begin{tikzpicture}[> = stealth,shorten > = 1pt,auto,node distance = 1cm, semithick ]
	\tikzstyle{every state}=[draw = black,thick,fill = white,minimum size = 1mm]
	\node[label=$i$,c1] (i) at(0,0) {};
	\node[c2] (u) at(1,0) {};
	\node[c4] (v) at(1,1) {};
	\node[]()at(1.5,1){$\ldots$};
	\node[]()at(1.5,0){$\ldots$};
	\draw[dotted,line width= 1] (u) -- (v);
	\draw[<-,line width= 1] (i) -- (u);
	\draw[<->,line width= 1] (i) -- (v);
	\end{tikzpicture}
	\begin{tikzpicture}[> = stealth,shorten > = 1pt,auto,node distance = 1cm, semithick ]
	\tikzstyle{every state}=[draw = black,thick,fill = white,minimum size = 1mm]
	\node[label=$i$,c1] (i) at(0,0) {};
	\node[c2] (u) at(1,0) {};
	\node[c2] (v) at(1,1) {};
	\node[c1] (y) at(2,1){};
	\node[c1] (x) at(2,0){};
	\node[]()at(2.5,0){$\ldots$};
	\node[]()at(2.5,1){$\ldots$};
	\node[]()at(1.3,0){$\star$};
	\node[]()at(1.7,0){$\star$};
	\draw[dotted,line width= 1] (v) -- (w);
	\draw[dotted,line width= 1] (y) -- (u);
	\draw[-,line width= 1] (x) -- (u);
	\draw[<-,line width= 1] (i) -- (u);
	\draw[<->,line width= 1] (i) -- (v);
	\draw[<-,line width= 1] (u) -- (v);
	\draw[->,line width= 1] (v) -- (y);
	\end{tikzpicture}
	\begin{tikzpicture}[> = stealth,shorten > = 1pt,auto,node distance = 1cm, semithick ]
	\tikzstyle{every state}=[draw = black,thick,fill = white,minimum size = 1mm]
	\node[label=$i$,c1] (i) at(0,0) {};
	\node[c2] (u) at(1,0) {};
	\node[c2] (v) at(1,1) {};
	\node[c2] (y) at(2,1){};
	\node[c1] (x) at(2,0){};
	\node[]()at(2.5,0){$\ldots$};
	\node[]()at(2.5,1){$\ldots$};
	\node[]()at(1.3,0){$\star$};
	\node[]()at(1.7,0){$\star$};
	\draw[dotted,line width= 1] (v) -- (w);
	\draw[dotted,line width= 1] (y) -- (u);
	\draw[-,line width= 1] (x) -- (u);
	\draw[<-,line width= 1] (i) -- (u);
	\draw[<->,line width= 1] (i) -- (v);
	\draw[<-,line width= 1] (u) -- (v);
	\draw[<-,line width= 1] (v) -- (y);
	\end{tikzpicture}
	\begin{tikzpicture}[> = stealth,shorten > = 1pt,auto,node distance = 1cm, semithick ]
	\tikzstyle{every state}=[draw = black,thick,fill = white,minimum size = 1mm]
	\node[label=$i$,c1] (i) at(0,0) {};
	\node[c2] (u) at(1,0) {};
	\node[c2] (v) at(1,1) {};
	\node[c5] (y) at(2,1){};
	\node[c1] (x) at(2,0){};
	\node[]()at(2.5,0){$\ldots$};
	\node[]()at(2.5,1){$\ldots$};
	\node[]()at(1.3,0){$\star$};
	\node[]()at(1.7,0){$\star$};
	\draw[dotted,line width= 1] (v) -- (w);
	\draw[dotted,line width= 1] (y) -- (u);
	\draw[-,line width= 1] (x) -- (u);
	\draw[<-,line width= 1] (i) -- (u);
	\draw[<->,line width= 1] (i) -- (v);
	\draw[<-,line width= 1] (u) -- (v);
	\draw[<->,line width= 1] (v) -- (y);
	\end{tikzpicture}
		\begin{tikzpicture}[> = stealth,shorten > = 1pt,auto,node distance = 1cm, semithick ]
	\tikzstyle{every state}=[draw = black,thick,fill = white,minimum size = 1mm]
	\node[label=$i$,c1] (i) at(0,0) {};
	\node[c2] (u) at(1,0) {};
	\node[c2] (v) at(1,1) {};
	\node[c1] (y) at(2,1){};
	\node[c1] (w) at(2,2){};
	\node[]()at(2.5,1){$\ldots$};
	\node[]()at(2.5,2){$\ldots$};
	\draw[dotted,line width= 1] (y) -- (w);
	\draw[->,line width= 1] (v) -- (w);
	\draw[<-,line width= 1] (i) -- (u);
	\draw[<->,line width= 1] (i) -- (v);
	\draw[<-,line width= 1] (u) -- (v);
	\draw[->,line width= 1] (v) -- (y);
	\end{tikzpicture}
	\begin{tikzpicture}[> = stealth,shorten > = 1pt,auto,node distance = 1cm, semithick ]
	\tikzstyle{every state}=[draw = black,thick,fill = white,minimum size = 1mm]
	\node[label=$i$,c1] (i) at(0,0) {};
	\node[c2] (u) at(1,0) {};
	\node[c2] (v) at(1,1) {};
	\node[c3] (w) at(2,1) {};	
	\node[c1] (x) at(2,0){};
	\node[]()at(2.5,0){$\ldots$};
	\node[]()at(2.5,1){$\ldots$};
	\node[]()at(1.3,0){$\star$};
	\node[]()at(1.7,0){$\star$};
	\draw[dotted,line width= 1] (v) -- (x);
	\draw[dotted,line width= 1] (u) -- (w);
	\draw[-,line width= 1] (x) -- (u);
	\draw[->,strike through,line width= 1] (v) -- (u);
	\draw[<-,line width= 1] (i) -- (u);
	\draw[->,line width= 1] (v) -- (w);
	\draw[<->,line width= 1] (i) -- (v);
	\end{tikzpicture}
	\begin{tikzpicture}[> = stealth,shorten > = 1pt,auto,node distance = 1cm, semithick ]
	\tikzstyle{every state}=[draw = black,thick,fill = white,minimum size = 1mm]
	\node[label=$i$,c1] (i) at(0,0) {};
	\node[c2] (u) at(1,0) {};
	\node[c2] (v) at(1,1) {};
	\node[c3] (w) at(2,1) {};	
	\node[c1] (x) at(2,2){};
	\node[]()at(2.5,2){$\ldots$};
	\node[]()at(2.5,1){$\ldots$};
	\draw[->,line width= 1] (v) -- (x);
	\draw[dotted,line width= 1] (x) -- (w);
	\draw[->,strike through,line width= 1] (v) -- (u);
	\draw[<-,line width= 1] (i) -- (u);
	\draw[->,line width= 1] (v) -- (w);
	\draw[<->,line width= 1] (i) -- (v);
	\end{tikzpicture}
	\begin{tikzpicture}[> = stealth,shorten > = 1pt,auto,node distance = 1cm, semithick ]
	\tikzstyle{every state}=[draw = black,thick,fill = white,minimum size = 1mm]
	\node[label=$i$,c1] (i) at(0,0) {};
	\node[c2] (u) at(1,0) {};
	\node[c2] (v) at(1,1) {};
	\node[c3] (w) at(2,1) {};	
	\node[c2] (x) at(2,2){};
	\node[]()at(2.5,2){$\ldots$};
	\node[]()at(2.5,1){$\ldots$};
	\draw[<-,line width= 1] (v) -- (x);
	\draw[dotted,line width= 1] (x) -- (w);
	\draw[->,strike through,line width= 1] (v) -- (u);
	\draw[<-,line width= 1] (i) -- (u);
	\draw[->,line width= 1] (v) -- (w);
	\draw[<->,line width= 1] (i) -- (v);
	\end{tikzpicture}
	\begin{tikzpicture}[> = stealth,shorten > = 1pt,auto,node distance = 1cm, semithick ]
	\tikzstyle{every state}=[draw = black,thick,fill = white,minimum size = 1mm]
	\node[label=$i$,c1] (i) at(0,0) {};
	\node[c2] (u) at(1,0) {};
	\node[c2] (v) at(1,1) {};
	\node[c3] (w) at(2,1) {};	
	\node[c5] (x) at(2,2){};
	\node[]()at(2.5,2){$\ldots$};
	\node[]()at(2.5,1){$\ldots$};
	\draw[<->,line width= 1] (v) -- (x);
	\draw[dotted,line width= 1] (x) -- (w);
	\draw[->,strike through,line width= 1] (v) -- (u);
	\draw[<-,line width= 1] (i) -- (u);
	\draw[->,line width= 1] (v) -- (w);
	\draw[<->,line width= 1] (i) -- (v);
	\end{tikzpicture}
	\begin{tikzpicture}[> = stealth,shorten > = 1pt,auto,node distance = 1cm, semithick ]
	\tikzstyle{every state}=[draw = black,thick,fill = white,minimum size = 1mm]
	\node[label=$i$,c1] (i) at(0,0) {};
	\node[c2] (u) at(1,0) {};
	\node[c2] (v) at(1,1) {};
	\node[c3] (w) at(2,1) {};
	\node[]()at(1.5,0){$\ldots$};
	\node[]()at(2.5,1){$\ldots$};
	\draw[dotted,line width= 1] (u) -- (w);
	\draw[<-,line width= 1] (i) -- (u);
	\draw[<->,line width= 1] (i) -- (v);
	\draw[->,line width= 1] (v) -- (w);
	\end{tikzpicture}
		\begin{tikzpicture}[> = stealth,shorten > = 1pt,auto,node distance = 1cm, semithick ]
	\tikzstyle{every state}=[draw = black,thick,fill = white,minimum size = 1mm]
	\node[label=$i$,c1] (i) at(0,0) {};
	\node[c2] (u) at(1,0) {};
	\node[c2] (v) at(1,1) {};
	\node[c3] (y) at(2,1){};
	\node[c2] (x) at(2,2){};
	\node[]()at(1.5,0){$\ldots$};
	\node[]()at(2.5,1){$\ldots$};
	\node[]()at(2.5,2){$\ldots$};
	\draw[dotted,line width= 1] (x) -- (y);
	\draw[dotted,line width= 1] (u) -- (w);
	\draw[dotted,line width= 1] (u) -- (x);
	\draw[<-,line width= 1] (i) -- (u);
	\draw[<->,line width= 1] (i) -- (v);
	\draw[->,line width= 1] (v) -- (y);
	\draw[<-,line width= 1] (v) -- (x);
	\end{tikzpicture}
	\begin{tikzpicture}[> = stealth,shorten > = 1pt,auto,node distance = 1cm, semithick ]
	\tikzstyle{every state}=[draw = black,thick,fill = white,minimum size = 1mm]
	\node[label=$i$,c1] (i) at(0,0) {};
	\node[c2] (u) at(1,0) {};
	\node[c2] (v) at(1,1) {};
	\node[c3] (y) at(2,1){};
	\node[c1] (x) at(2,2){};
	\node[]()at(1.5,0){$\ldots$};
	\node[]()at(2.5,1){$\ldots$};
	\node[]()at(2.5,2){$\ldots$};
	\draw[dotted,line width= 1] (x) -- (y);
	\draw[dotted,line width= 1] (u) -- (w);
	\draw[dotted,line width= 1] (u) -- (x);
	\draw[<-,line width= 1] (i) -- (u);
	\draw[<->,line width= 1] (i) -- (v);
	\draw[->,line width= 1] (v) -- (y);
	\draw[->,line width= 1] (v) -- (x);
	\end{tikzpicture}
	\begin{tikzpicture}[> = stealth,shorten > = 1pt,auto,node distance = 1cm, semithick ]
	\tikzstyle{every state}=[draw = black,thick,fill = white,minimum size = 1mm]
	\node[label=$i$,c1] (i) at(0,0) {};
	\node[c2] (u) at(1,0) {};
	\node[c2] (v) at(1,1) {};
	\node[c3] (y) at(2,1){};
	\node[c5] (x) at(2,2){};
	\node[]()at(1.5,0){$\ldots$};
	\node[]()at(2.5,1){$\ldots$};
	\node[]()at(2.5,2){$\ldots$};
	\draw[dotted,line width= 1] (x) -- (y);
	\draw[dotted,line width= 1] (u) -- (w);
	\draw[dotted,line width= 1] (u) -- (x);
	\draw[<-,line width= 1] (i) -- (u);
	\draw[<->,line width= 1] (i) -- (v);
	\draw[->,line width= 1] (v) -- (y);
	\draw[<->,line width= 1] (v) -- (x);
	\end{tikzpicture}
	\caption{Local graph configurations with $\eta=3$ and $|\NE(G_\gamma(i,j),i)|=2$. 
	(continues in Figure~\ref{fig:eta=3.2b}).
	A separator (not necessarily minimal) is marked with shade. 
	Marked edge represents the pattern of $G_\gamma(i,j)$, while absence of an edge represents the absence pattern of $G_\gamma(i,j)$. 
	Ellipses between nodes means this edge is allowed to occur in $G_\gamma(i,j)$,
	as long as it agrees with the MAG property and local-path property. 
	The square shape represents a node with no outgoing edge (except the marked ones). The diamond shape represents a node that controls whether the separator is minimum --- if this node is not ancestor of $j$, then smaller separator exists.}
	\label{fig:eta=3.2}
\end{figure}

\begin{figure}[t]
	\centering
	\begin{tikzpicture}[> = stealth,shorten > = 1pt,auto,node distance = 1cm, semithick ]
	\tikzstyle{every state}=[draw = black,thick,fill = white,minimum size = 1mm]
	\node[label=$i$,c1] (i) at(0,0) {};
	\node[c4] (u) at(1,0) {};
	\node[c1] (v) at(1,1) {};
	\node[]()at(1.5,1){$\ldots$};
	\node[]()at(1.5,0){$\ldots$};
	\draw[dotted,line width= 1] (v) -- (u);
	\draw[<->,line width= 1] (i) -- (u);
	\draw[->,line width= 1] (i) -- (v);
	\end{tikzpicture}
	\begin{tikzpicture}[> = stealth,shorten > = 1pt,auto,node distance = 1cm, semithick ]
	\tikzstyle{every state}=[draw = black,thick,fill = white,minimum size = 1mm]
	\node[label=$i$,c1] (i) at(0,0) {};
	\node[c2] (u) at(1,0) {};
	\node[c1] (v) at(1,1) {};
	\node[c3] (w) at(2,0) {};
	\node[]()at(1.5,1){$\ldots$};
	\node[]()at(2.5,0){$\ldots$};
	\draw[dotted,line width= 1] (v) -- (u);
	\draw[dotted,line width= 1] (v) -- (w);
	\draw[<->,line width= 1] (i) -- (u);
	\draw[->,line width= 1] (i) -- (v);
	\draw[->,line width= 1] (u) -- (w);
	\end{tikzpicture}
	\begin{tikzpicture}[> = stealth,shorten > = 1pt,auto,node distance = 1cm, semithick ]
	\tikzstyle{every state}=[draw = black,thick,fill = white,minimum size = 1mm]
	\node[label=$i$,c1] (i) at(0,0) {};
	\node[c1] (u) at(1,0) {};
	\node[c2] (v) at(1,1) {};
	\node[c3] (y) at(2,1){};
	\node[c1] (t) at(2,2){};
	\node[]()at(2.5,1){$\ldots$};
	\node[]()at(1.5,0){$\ldots$};
	\node[]()at(2.5,2){$\ldots$};
	\draw[dotted,line width= 1] (u) -- (v);
	\draw[dotted,line width= 1] (u) -- (y);
	\draw[dotted,line width= 1] (u) -- (t);
	\draw[dotted,line width= 1] (y) -- (t);
	\draw[->,line width= 1] (i) -- (u);
	\draw[<->,line width= 1] (i) -- (v);
	\draw[->,line width= 1] (v) -- (y);
	\draw[->,line width= 1] (v) -- (t);
	\end{tikzpicture}
	\begin{tikzpicture}[> = stealth,shorten > = 1pt,auto,node distance = 1cm, semithick ]
	\tikzstyle{every state}=[draw = black,thick,fill = white,minimum size = 1mm]
	\node[label=$i$,c1] (i) at(0,0) {};
	\node[c1] (u) at(1,0) {};
	\node[c2] (v) at(1,1) {};
	\node[c3] (y) at(2,1){};
	\node[c2] (t) at(2,2){};
	\node[]()at(2.5,1){$\ldots$};
	\node[]()at(1.5,0){$\ldots$};
	\node[]()at(2.5,2){$\ldots$};
	\draw[dotted,line width= 1] (u) -- (v);
	\draw[dotted,line width= 1] (u) -- (y);
	\draw[dotted,line width= 1] (u) -- (t);
	\draw[dotted,line width= 1] (y) -- (t);
	\draw[->,line width= 1] (i) -- (u);
	\draw[<->,line width= 1] (i) -- (v);
	\draw[->,line width= 1] (v) -- (y);
	\draw[<-,line width= 1] (v) -- (t);
	\end{tikzpicture}
	\begin{tikzpicture}[> = stealth,shorten > = 1pt,auto,node distance = 1cm, semithick ]
	\tikzstyle{every state}=[draw = black,thick,fill = white,minimum size = 1mm]
	\node[label=$i$,c1] (i) at(0,0) {};
	\node[c1] (u) at(1,0) {};
	\node[c2] (v) at(1,1) {};
	\node[c3] (y) at(2,1){};
	\node[c5] (t) at(2,2){};
	\node[]()at(2.5,1){$\ldots$};
	\node[]()at(1.5,0){$\ldots$};
	\node[]()at(2.5,2){$\ldots$};
	\draw[dotted,line width= 1] (u) -- (v);
	\draw[dotted,line width= 1] (u) -- (y);
	\draw[dotted,line width= 1] (u) -- (t);
	\draw[dotted,line width= 1] (y) -- (t);
	\draw[->,line width= 1] (i) -- (u);
	\draw[<->,line width= 1] (i) -- (v);
	\draw[->,line width= 1] (v) -- (y);
	\draw[<->,line width= 1] (v) -- (t);
	\end{tikzpicture}
	\begin{tikzpicture}[> = stealth,shorten > = 1pt,auto,node distance = 1cm, semithick ]
	\tikzstyle{every state}=[draw = black,thick,fill = white,minimum size = 1mm]
	\node[label=$i$,c1] (i) at(0,0) {};
	\node[c2] (u) at(1,0) {};
	\node[c2] (v) at(1,1) {};
	\node[c3] (x) at(2,0){};
	\node[c3] (y) at(2,1){};
	\node[]()at(2.5,1){$\ldots$};
	\node[]()at(2.5,0){$\ldots$};
	\draw[dotted,line width= 1] (x) -- (y);
	\draw[dotted,line width= 1] (u) -- (y);
	\draw[dotted,line width= 1] (x) -- (v);
	\draw[dotted,line width= 1] (u) -- (v);
	\draw[<->,line width= 1] (i) -- (u);
	\draw[<->,line width= 1] (i) -- (v);
	\draw[->,line width= 1] (u) -- (x);
	\draw[->,line width= 1] (v) -- (y);
	\end{tikzpicture}
	\begin{tikzpicture}[> = stealth,shorten > = 1pt,auto,node distance = 1cm, semithick ]
	\tikzstyle{every state}=[draw = black,thick,fill = white,minimum size = 1mm]
	\node[label=$i$,c1] (i) at(0,0) {};
	\node[c2] (u) at(1,0) {};
	\node[c2] (v) at(1,1) {};
	\node[c3] (x) at(2,0){};
	\node[]()at(2.5,0){$\ldots$};
	\node[]()at(1.5,1){$\ldots$};
	\draw[<->,line width= 1] (i) -- (u);
	\draw[<->,line width= 1] (i) -- (v);
	\draw[->,line width= 1] (u) -- (x);
	\draw[->,,line width= 1] (v) -- (x);
	\end{tikzpicture}
	\begin{tikzpicture}[> = stealth,shorten > = 1pt,auto,node distance = 1cm, semithick ]
	\tikzstyle{every state}=[draw = black,thick,fill = white,minimum size = 1mm]
	\node[label=$i$,c1] (i) at(0,0) {};
	\node[c2] (u) at(1,0) {};
	\node[c2] (v) at(1,1) {};
	\node[c3] (x) at(2,0){};
	\node[]()at(2.5,0){$\ldots$};
	\node[]()at(1.5,1){$\ldots$};
	\draw[<->,line width= 1] (i) -- (u);
	\draw[<->,line width= 1] (i) -- (v);
	\draw[->,line width= 1] (v) -- (u);
	\draw[->,line width= 1] (u) -- (x);
	\end{tikzpicture}
	\begin{tikzpicture}[> = stealth,shorten > = 1pt,auto,node distance = 1cm, semithick ]
	\tikzstyle{every state}=[draw = black,thick,fill = white,minimum size = 1mm]
	\node[label=$i$,c1] (i) at(0,0) {};
	\node[c2] (u) at(1,0) {};
	\node[c2] (v) at(1,1) {};
	\node[c3] (x) at(2,0){};
	\node[c3] (w) at(2,1){};
	\node[]()at(2.5,0){$\ldots$};
	\node[]()at(2.5,1){$\ldots$};
	\draw[dotted,line width= 1] (x) -- (v);
	\draw[dotted,line width= 1] (w) -- (u);
	\draw[<->,line width= 1] (i) -- (u);
	\draw[<->,line width= 1] (i) -- (v);
	\draw[->,strike through,line width= 1] (v) -- (u);
	\draw[->,line width= 1] (u) -- (x);
	\draw[->,line width= 1] (v) -- (w);
	\end{tikzpicture}
	\begin{tikzpicture}[> = stealth,shorten > = 1pt,auto,node distance = 1cm, semithick ]
	\tikzstyle{every state}=[draw = black,thick,fill = white,minimum size = 1mm]
	\node[label=$i$,c1] (i) at(0,0) {};
	\node[c2] (u) at(1,0) {};
	\node[c2] (v) at(1,1) {};
	\node[c3] (x) at(2,0){};
	\node[c3] (y) at(2,1){};
	\node[c1] (t) at(2,2){};
	\node[]()at(2.5,1){$\ldots$};
	\node[]()at(2.5,0){$\ldots$};
	\node[]()at(2.5,2){$\ldots$};
	\draw[<->,line width= 1] (i) -- (u);
	\draw[<->,line width= 1] (i) -- (v);
	\draw[->,line width= 1] (u) -- (x);
	\draw[->,line width= 1] (v) -- (t);
	\draw[->,line width= 1] (v) -- (y);
	\end{tikzpicture}
	\begin{tikzpicture}[> = stealth,shorten > = 1pt,auto,node distance = 1cm, semithick ]
	\tikzstyle{every state}=[draw = black,thick,fill = white,minimum size = 1mm]
	\node[label=$i$,c1] (i) at(0,0) {};
	\node[c2] (u) at(1,0) {};
	\node[c2] (v) at(1,1) {};
	\node[c3] (x) at(2,0){};
	\node[c3] (y) at(2,1){};
	\node[c2] (t) at(2,2){};
	\node[]()at(2.5,1){$\ldots$};
	\node[]()at(2.5,0){$\ldots$};
	\node[]()at(2.5,2){$\ldots$};
	\draw[<->,line width= 1] (i) -- (u);
	\draw[<->,line width= 1] (i) -- (v);
	\draw[->,line width= 1] (u) -- (x);
	\draw[<-,line width= 1] (v) -- (t);
	\draw[->,line width= 1] (v) -- (y);
	\end{tikzpicture}
	\begin{tikzpicture}[> = stealth,shorten > = 1pt,auto,node distance = 1cm, semithick ]
	\tikzstyle{every state}=[draw = black,thick,fill = white,minimum size = 1mm]
	\node[label=$i$,c1] (i) at(0,0) {};
	\node[c2] (u) at(1,0) {};
	\node[c2] (v) at(1,1) {};
	\node[c3] (x) at(2,0){};
	\node[c3] (y) at(2,1){};
	\node[c5] (t) at(2,2){};
	\node[]()at(2.5,1){$\ldots$};
	\node[]()at(2.5,0){$\ldots$};
	\node[]()at(2.5,2){$\ldots$};
	\draw[<->,line width= 1] (i) -- (u);
	\draw[<->,line width= 1] (i) -- (v);
	\draw[->,line width= 1] (u) -- (x);
	\draw[<->,line width= 1] (v) -- (t);
	\draw[->,line width= 1] (v) -- (y);
	\end{tikzpicture}
	\begin{tikzpicture}[> = stealth,shorten > = 1pt,auto,node distance = 1cm, semithick ]
	\tikzstyle{every state}=[draw = black,thick,fill = white,minimum size = 1mm]
	\node[label=$i$,c1] (i) at(0,0) {};
	\node[c2] (u) at(1,0) {};
	\node[c2] (v) at(1,1) {};
	\node[c3] (y) at(2,1){};
	\node[c1] (t) at(2,2){};
	\node[]()at(2.5,1){$\ldots$};
	\node[]()at(2.5,2){$\ldots$};
	\draw[dotted,line width= 1] (t) -- (y);
	\draw[<->,line width= 1] (i) -- (u);
	\draw[<->,line width= 1] (i) -- (v);
	\draw[->,line width= 1] (u) -- (y);
	\draw[->,line width= 1] (v) -- (t);
	\draw[->,line width= 1] (v) -- (y);
	\end{tikzpicture}
	\begin{tikzpicture}[> = stealth,shorten > = 1pt,auto,node distance = 1cm, semithick ]
	\tikzstyle{every state}=[draw = black,thick,fill = white,minimum size = 1mm]
	\node[label=$i$,c1] (i) at(0,0) {};
	\node[c2] (u) at(1,0) {};
	\node[c2] (v) at(1,1) {};
	\node[c3] (y) at(2,1){};
	\node[c2] (t) at(2,2){};
	\node[]()at(2.5,1){$\ldots$};
	\node[]()at(2.5,2){$\ldots$};
	\draw[dotted,line width= 1] (t) -- (y);
	\draw[<->,line width= 1] (i) -- (u);
	\draw[<->,line width= 1] (i) -- (v);
	\draw[->,line width= 1] (u) -- (y);
	\draw[<-,line width= 1] (v) -- (t);
	\draw[->,line width= 1] (v) -- (y);
	\end{tikzpicture}
	\begin{tikzpicture}[> = stealth,shorten > = 1pt,auto,node distance = 1cm, semithick ]
	\tikzstyle{every state}=[draw = black,thick,fill = white,minimum size = 1mm]
	\node[label=$i$,c1] (i) at(0,0) {};
	\node[c2] (u) at(1,0) {};
	\node[c2] (v) at(1,1) {};
	\node[c3] (y) at(2,1){};
	\node[c5] (t) at(2,2){};
	\node[]()at(2.5,1){$\ldots$};
	\node[]()at(2.5,2){$\ldots$};
	\draw[dotted,line width= 1] (t) -- (y);
	\draw[<->,line width= 1] (i) -- (u);
	\draw[<->,line width= 1] (i) -- (v);
	\draw[->,line width= 1] (u) -- (y);
	\draw[<->,line width= 1] (v) -- (t);
	\draw[->,line width= 1] (v) -- (y);
	\end{tikzpicture}
	\begin{tikzpicture}[> = stealth,shorten > = 1pt,auto,node distance = 1cm, semithick ]
	\tikzstyle{every state}=[draw = black,thick,fill = white,minimum size = 1mm]
	\node[label=$i$,c1] (i) at(0,0) {};
	\node[c4] (u) at(1,0) {};
	\node[c4] (v) at(1,1) {};
	\node[]()at(1.5,1){$\ldots$};
	\node[]()at(1.5,0){$\ldots$};
	\draw[dotted,line width= 1] (v) -- (u);
	\draw[<->,line width= 1] (i) -- (u);
	\draw[<->,line width= 1] (i) -- (v);
	\end{tikzpicture}
	\begin{tikzpicture}[> = stealth,shorten > = 1pt,auto,node distance = 1cm, semithick ]
	\tikzstyle{every state}=[draw = black,thick,fill = white,minimum size = 1mm]
	\node[label=$i$,c1] (i) at(0,0) {};
	\node[c2] (u) at(1,0) {};
	\node[c4] (v) at(1,1) {};
	\node[c3] (w) at(2,0) {};
	\node[]()at(1.5,1){$\ldots$};
	\node[]()at(2.5,0){$\ldots$};
	\draw[dotted,line width= 1] (v) -- (u);
	\draw[dotted,line width= 1] (v) -- (w);
	\draw[<->,line width= 1] (i) -- (u);
	\draw[<->,line width= 1] (i) -- (v);
	\draw[->,line width= 1] (u) -- (w);
	\end{tikzpicture}
	\begin{tikzpicture}[> = stealth,shorten > = 1pt,auto,node distance = 1cm, semithick ]
	\tikzstyle{every state}=[draw = black,thick,fill = white,minimum size = 1mm]
	\node[label=$i$,c1] (i) at(0,0) {};
	\node[c4] (u) at(1,0) {};
	\node[c2] (v) at(1,1) {};
	\node[c3] (y) at(2,1){};
	\node[c1] (t) at(2,2){};
	\node[]()at(2.5,1){$\ldots$};
	\node[]()at(1.5,0){$\ldots$};
	\node[]()at(2.5,2){$\ldots$};
	\draw[dotted,line width= 1] (u) -- (v);
	\draw[dotted,line width= 1] (u) -- (y);
	\draw[dotted,line width= 1] (u) -- (t);
	\draw[dotted,line width= 1] (y) -- (t);
	\draw[<->,line width= 1] (i) -- (u);
	\draw[<->,line width= 1] (i) -- (v);
	\draw[->,line width= 1] (v) -- (y);
	\draw[->,line width= 1] (v) -- (t);
	\end{tikzpicture}
	\begin{tikzpicture}[> = stealth,shorten > = 1pt,auto,node distance = 1cm, semithick ]
	\tikzstyle{every state}=[draw = black,thick,fill = white,minimum size = 1mm]
	\node[label=$i$,c1] (i) at(0,0) {};
	\node[c4] (u) at(1,0) {};
	\node[c2] (v) at(1,1) {};
	\node[c3] (y) at(2,1){};
	\node[c2] (t) at(2,2){};
	\node[]()at(2.5,1){$\ldots$};
	\node[]()at(1.5,0){$\ldots$};
	\node[]()at(2.5,2){$\ldots$};
	\draw[dotted,line width= 1] (u) -- (v);
	\draw[dotted,line width= 1] (u) -- (y);
	\draw[dotted,line width= 1] (u) -- (t);
	\draw[dotted,line width= 1] (y) -- (t);
	\draw[<->,line width= 1] (i) -- (u);
	\draw[<->,line width= 1] (i) -- (v);
	\draw[->,line width= 1] (v) -- (y);
	\draw[<-,line width= 1] (v) -- (t);
	\end{tikzpicture}
	\begin{tikzpicture}[> = stealth,shorten > = 1pt,auto,node distance = 1cm, semithick ]
	\tikzstyle{every state}=[draw = black,thick,fill = white,minimum size = 1mm]
	\node[label=$i$,c1] (i) at(0,0) {};
	\node[c4] (u) at(1,0) {};
	\node[c2] (v) at(1,1) {};
	\node[c3] (y) at(2,1){};
	\node[c5] (t) at(2,2){};
	\node[]()at(2.5,1){$\ldots$};
	\node[]()at(1.5,0){$\ldots$};
	\node[]()at(2.5,2){$\ldots$};
	\draw[dotted,line width= 1] (u) -- (v);
	\draw[dotted,line width= 1] (u) -- (y);
	\draw[dotted,line width= 1] (u) -- (t);
	\draw[dotted,line width= 1] (y) -- (t);
	\draw[<->,line width= 1] (i) -- (u);
	\draw[<->,line width= 1] (i) -- (v);
	\draw[->,line width= 1] (v) -- (y);
	\draw[<->,line width= 1] (v) -- (t);
	\end{tikzpicture}
	\caption{Local graph configurations with $\eta=3$ and $|\NE(G_\gamma(i,j),i)|=2$.
		A separator (not necessarily minimal) is marked with shade. 
		Marked edge represents the pattern of $G_\gamma(i,j)$, while absence of an edge represents the absence pattern of $G_\gamma(i,j)$. 
		Ellipses between nodes means this edge is allowed to occur in $G_\gamma(i,j)$,
		as long as it agrees with the MAG property and local-path property. 
		The square shape represents a node with no outgoing edge (except the marked ones). The diamond shape represents a node that controls whether the separator is minimum --- if this node is not ancestor of $j$, then smaller separator exists.}
	\label{fig:eta=3.2b}
\end{figure}
 \begin{enumerate}
    \item $|\NE(G_{\gamma}(i,j),i)|=1$. (Figure~\ref{fig:eta=3.1}) Denote the only neighbor as $u$. If $i\to u$ or
    $i\leftrightarrow u$ but 
    $u\notin \AN(G_{\gamma}(i,j),j)$, then $S_\gamma(i,j)=\emptyset$. If $i\leftarrow u$ or $i-u$, then $S_\gamma(i,j)= \{u\}$. 
    If $i\leftrightarrow u$ and
    $u\in \AN(G_{\gamma}(i,j),j)$, then there must be an 
    edge $u\to w$ and $w\in \AN(G_{\gamma}(i,j),j)$.
    If $\NE(G_\gamma(i,j),u)=\{i,w\}$
    then $S_\gamma(i,j)=\{u\}$. Now discuss cases with additional neighbors. 
    \begin{enumerate}
    	\item If $\NE(G_\gamma(i,j),u)=\{i,w,v\}$, we have $S_\gamma(i,j)=\{u\}$ if 
    	$u\to v$ or $u\leftrightarrow v$ but
    	$v\notin \AN(G_{\gamma}(i,j),j)$; also $S_\gamma(i,j)=\{u,v\}$ if
    	$u\leftarrow v$ or $u\leftrightarrow v$
    	and $v\in\AN(G_{\gamma}(i,j),j)$ and $v$ has exactly one neighbor other than $u$ (it is allowed to be $w$). 
    	If $v$ has 2 neighbors other than $u$, and neither is child of $v$, then $v\notin \AN(G_{\gamma}(i,j),j)$, which is covered in the previous case. 
    	Now suppose $v\to x$ and there is also an edge $v\edgestarstar y$, 
    	in which case $|e^b_2|+|e^m_2|=3$. We have $S_\gamma(i,j)=\{u,v,y\}$ if
    	$v\leftarrow y$ or $v\leftrightarrow y$  and $y\in \AN(G_{\gamma}(i,j),j)$, and  otherwise $S_\gamma(i,j)=\{u,v\}$. 
    	\item If $\NE(G_\gamma(i,j),u)=\{i,w,v,x\}$, then $|e^b_2|=3$ and  
    	$S_\gamma(i,j)\subseteq \{u,v,x\}$.
    \end{enumerate}
    \item $|\NE(G_{\gamma}(i,j),i)|=2$. Denote the neighbors as $u$ and $v$. We discuss the direction of the two edges 
    $i\edgestarstar u$ and $i\edgestarstar v$. 
    If the directions are $(\to,\to)$ , then $S_\gamma(i,j)=\emptyset$. 
    If $(\leftarrow,\to)$, then $S_\gamma(i,j)= \{u\}$. 
    If $(\leftarrow,\leftarrow)$ or $(-,\leftarrow)$ or $(-,-)$, then $S_\gamma(i,j)=\{u,v\}$. 
    \begin{enumerate}
        \item If $(\leftrightarrow, \leftarrow)$, then we need to discuss neighbors of $u$, too.
If $u\in \ADJ(G_{\gamma}(i,j),v)$, then $u \edgestarstar v$ is a merging edge, and by Fact 2,  $u$ and $v$ have in total no more than 2 bearing edges of order 2. 
            If $u$ is not ancestral to $i$ or $j$, then $S_\gamma(i,j)=\{v\}$. 
            The case of $u\notin \AN(G_{\gamma}(i,j),\{i,j\})$ is trivial. 
            If $u\in \AN(G_{\gamma}(i,j),\{i,j\})$, 
            then $u$ has at least one outgoing edge. 
             If the outgoing edge is $u\to v$ (second row of Figure~\ref{fig:eta=3.2}), there are two sub-cases. If $v$ has an bearing edge of order 2, then $u$ has only one other neighbor, call it $x$, 
            	and we condition on $x$ if and only if $u\leftarrow x$ or $u\leftrightarrow x$ and $x\in \AN(G_{\gamma}(i,j),j)$. If $v$ has no bearing edge, then $u$ can have at most 2 other neighbors. However, these bearing edges must not have arrow at $u$, due to the inducing path interpretation of MAG. Therefore we do not need to condition on these additional neighbors. 
            
             If the outgoing edge is 
            	not $u\to v$ (third row of Figure~\ref{fig:eta=3.2}), then there is some edge $u\to w$.
            	If $v$ has an bearing edge of order 2, then $u$ no other neighbor than $\{i,u,w\}$. 
            	If $v$ has no bearing edge, then $u$ could have one additional neighbor, call it $x$, and we condition on $x$ if and only if $u\leftarrow x$ or $u\leftrightarrow x$ and $x\in \AN(G_{\gamma}(i,j),j)$.
 
 If $u\notin \ADJ(G_{\gamma}(i,j),v)$ (fourth row of Figure~\ref{fig:eta=3.2}), 
            then $u$ has at most two additional neighbors. 
            If none of them are child of $u$, then $u\notin \AN(G_{\gamma}(i,j),\{i,j\})$ and $S_\gamma(i,j)=\{v\}$; If 
            $u\to w$, $u\edgestarstar x$ then we condition on $x$ if and only if $u\leftarrow x$ or $u\leftrightarrow x$ and $x\in \AN(G_{\gamma}(i,j),j)$.

    	\item if $(\leftrightarrow, \to)$, the situations are simpler since we never condition on $v$. Since $v$ must have either a bearing edge or a merging edge, 
    	$u$ can have at most 2 bearing edges. If $u\in \AN(G_\gamma(i,j),j)$, then one of the edges is $u\to w$. As for the other one, $u\edgestarstar x$, we condition on $x$ if and only if $u\leftarrow x$ or $u\leftrightarrow x$ and $x\in \AN(G_{\gamma}(i,j),j)$. 
        \item If $(\leftrightarrow, \leftrightarrow)$:
        If neither of $u$ and $v$ are ancestral to $j$,
        then $S_\gamma(i,j)=\emptyset$. 
        If both are ancestral to $j$ (row 2-3 and first 2 figures of row 4 in Figure~\ref{fig:eta=3.2b}), then they each has 
        a outgoing edge.
        Then by Fact 2,  there is at most one other bearing edge. 
    	WLOG, suppose $u$ has
        $u\to w$ and $u\edgestarstar x$. 
        Then $S_\gamma(i,j)= \{u,v,x\}$ if $x\in \AN(G_{\gamma}(i,j),\{u,j\})$ and 
        otherwise   $S_\gamma(i,j)= \{u,v\}$. 
        If $u$ is ancestral to $j$ and $v$ is not, 
        then still $u$ has one outgoing edge and at most 
        one other edge.
        Then $S_\gamma(i,j)= \{u,x\}$ if $x\in \AN(G_{\gamma}(i,j),\{u,j\})$ and 
        otherwise 
            $S_\gamma(i,j)= \{u\}$. 
    \end{enumerate}
    \item $|\NE(G_{\gamma}(i,j),i)|=3$. Denote the three neighbors of $i$ as $u,v,w$. By Fact 2, they each has at most  one bearing edge. Therefore we do not need to look further, and  $S_\gamma(i,j)\subseteq\{u,v,w\}$.    
\end{enumerate}
 \end{proof}

The following proof is similar to Lemma 2 of \cite{arjun2018} with slight modification, we show the proof here for completeness. 
\begin{proof}[Proof of Lemma~5.7]
    We write 
    \begin{align*}
        \Sigma &= (I-B)^{-1}\Omega(I-B)^{-\top}\\
        &= \left(\sum_{r=0}^{p-1} B^r\right)\Omega\left(\sum_{r=0}^{p-1} B^r\right)^\top\\
        &= \left(\sum_{r=0}^\gamma B^r+\sum_{r=\gamma+1}^{p-1} B^r\right)\Omega\left(\sum_{r=0}^\gamma B^r+\sum_{r=\gamma+1}^{p-1} B^r\right)^\top
    \end{align*}
    Denote $\Lambda_H = \sum_{r=0}^\gamma B^r$ and
    $R_\gamma =\sum_{r=\gamma+1}^\infty B^r=\sum_{r=\gamma+1}^{p-1} B^r$. 
    By the directed $\beta$-summability assumption,  
    we have $\norm{\Lambda_H}\leq \frac{1-\beta^{\gamma+1}}{1-\beta}$ 
    and $\norm{R_\gamma}\leq \frac{\beta^{\gamma+1}-\beta^{p}}{1-\beta}$. 
    Now we can bound the difference between $\Sigma$ and 
    the local approximation version $\Sigma_H:= \Lambda_H\Omega \Lambda_H^\top$, which only 
    contains paths no longer than $\gamma$.
    \begin{align*}
        \norm{\Sigma-\Sigma_H} &= 
        \norm{\Lambda_H \Omega R_\gamma^\top +
        R_\gamma \Omega \Lambda_H^\top +
        R_\gamma\Omega R_\gamma^\top }\\
        &\leq   \norm{\Omega}\left(2\norm{
        \Lambda_H}\norm{R_\gamma} + \norm{R_\gamma}^2\right)\\
        &\leq \norm{\Omega}\left(
      2 \frac{(1-\beta^{\gamma+1})\beta^{\gamma+1}}{(1-\beta)^2} + \frac{\beta^{2\gamma+2}}{(1-\beta)^2}
        \right)\\
        &= \norm{\Omega}\frac{\beta^{\gamma+1}(2-\beta^{\gamma+1})}{(1-\beta)^2}
    \end{align*}
    We write
    $\gamma*=\log(\beta)^{-1}\left(\log M - \log2-\log\norm{\Omega} - \log(\eta+2) -\log(1+3/\lambda)
    \right)-1$. 
    We invoke the error propagation lemma from \citet{harris2013}. 
    For any non-adjacent pair $(i,j)$ and a set 
    $S\subseteq V\setminus \{i,j\}$ with 
    $|S|\leq \eta$,  
    whenever $\gamma\geq\gamma^*$,  it holds that
    \[
    |\rho(i,j|S)-\rho_H(i,j|S)|\leq\lambda
    \]
    where $\rho_H$ is the partial correlation 
    obtained from $\Sigma_H$.
    Since $\Sigma_H$ only composes of short paths, 
    $\rho_H(i,j|S_\gamma)=0$ for every
    local-graph separator $S_\gamma$. 
    Therefore $|\rho(i,j|S_\gamma)|<\lambda$.
\end{proof}

\section{Treks}\label{sec:appendix_treks}
In this section we provide an algebraic explanation of Assumption 5. 
In particular, we review the trek representation of partial correlation in linear SEM.  The representation clarifies that conditional dependence in a linear SEM is tied to existence of paths/treks in the graph underlying the model.  This allows us to argue  that conditional dependence is typically induced by short versus long treks, which in turn provides the basis for exploiting small local separators in our algorithms.

To simplify the discussion, we present the following results assuming there is no
 selection variables in the graph. 
Let $G$ be a mixed graph without undirected edges. We define a \textit{trek} from node $i$ to $j$ as a tuple $\tau=(P_L,P_M,P_R)$, 
where $P_L$ is a directed path from some node $s$ to $i$,
and $P_R$ is a directed path 
from some node $t$ to $j$,
and $P_M$ is either one bidirected edge $s\edgeheadhead t$ or the empty set when $s=t$. 
We define the 
\emph{trek monomial} as $m_\tau=\beta^L \omega_{s,t} \beta^R$, where $\beta^L=\prod_{k\to l\in P_L}\beta_{kl}$ and $\beta^R=\prod_{k\to l\in P_R}\beta_{kl}$.
Moreover, 
for sets $C$ and $D$ with $|C|=|D|=k$, we define
a \textit{trek system} $T$ from $C$ to $D$ as a set of $k$ treks whose initial nodes exhaust 
$C$ and final nodes exhaust $D$. 
With abuse of notation we write  $T$ as a tuple of collections of paths $(P_L,P_M,P_R)$, 
and define the \textit{trek system monomial} as the product of  trek monomials in the system, i.e.,  $m_T=\prod_{\tau\in T}m_\tau$. 
Each trek system determines a permutation of the initial and final nodes, which we call the sign of the system. 
 Let $\mathcal{T}(C,D)$ denote  the collection of all trek systems from $C$ to $D$.
By the Cauchy–Binet determinant expansion, we have,
 \begin{align}
 \det \Sigma(C,D)& =\sum_{R,S\subset V, |R|=|S|=k} \det\left((I-B)^{-\top}\right)_{C,R} \det\Omega_{R,S}\det \left((I-B)^{-1}\right)_{S,D} \\
 &= \sum_{T\in \mathcal{T}(C,D)}\sign(T)m_T.\label{eq:trekdecomp}
 \end{align}

We say a trek system $T$ has \textit{sided intersection} if two paths in $P_L$, $P_R$, or $P_M$ have shared nodes. 
If $T$ is a trek system between $C$ and $D$ with sided intersections, then its weight $m_T$ is cancelled in the summation in \eqref{eq:trekdecomp}, \citep[for a proof, see][]{sullivant2010}.
In other words, the summation in \eqref{eq:trekdecomp}
only needs to run over trek systems without sided intersections.  
Consequently,  
$\det\left(\Sigma(i\cup S, j\cup S)\right)=0$
if and only if every system of treks from $i\cup S$ to $j\cup S$ has a sided intersection. 
The later condition is also called \textit{$t$-separation}. 
For Gaussian SEMs, in which conditional 
independence is characterized by 
zero partial correlation, this means
$W_i\independent W_j|W_S$ if and only if
$\sum_{T\in \mathcal{T}(i\cup S,j\cup S)}\sign(T)m_T=0$.

We will show next that Assumption 5 can be expressed
as a condition on trek weights. Let $G=(V,E)$ be a MAG. 
For non-adjacent nodes $i,j\in V$, and  
$S\subseteq V(G_\gamma)\setminus\{i,j\}$, 
we denote $\mathcal{T}_\gamma(i, j,S)$  as the collection of trek systems from $i\cup S$ to $j\cup S$ in $G_\gamma(i,j)$, and
$\mathcal{T}^C_\gamma(i , j, S) := \mathcal{T}(i\cup S , j\cup S) \setminus \mathcal{T}^C_\gamma(i , j, S) $. 
By our definition, $\mathcal{T}^C_\gamma(i , j, S) $ only contains treks that goes through a node outside  $G_\gamma(i,j)$.

\begin{lemma}\label{prop:trekdecomp}
Let $G$ be a MAG. Under Assumption 3, 
 if  there exists $\beta\in(0,1)$ such that 
 \[
 \max_{i\notin\ADJ(G,j)}\min_{S_\gamma\in \mathcal{S}_{\eta,\gamma}(i,j)}\left|\sum_{T\in \mathcal{T}^C_\gamma(i,j, S_\gamma)}\sign(T)m_T\right|=O(\beta^\gamma),
 \]
 where $\mathcal{S}_{\eta,\gamma}(i,j)$ is the collection of $\gamma$-local-graph separators of size at most $\eta$, 
 then Assumption 5  holds. 
\end{lemma}
\begin{proof}
By Definition 4, if a set $S_\gamma$ is a $\gamma$-local-separator
of $(i,j)$, then it is a separator of $i$ and $j$ in $G_\gamma(i,j)$, so all 
trek systems between $i\cup S_\gamma$ and $j\cup S_\gamma$
have sided intersections in $G_\gamma(i,j)$, 
and hence also in $G$. 
Following \citet{draisma2013}, we only need to take summation over trek systems without sided intersection in $G$. 
Therefore, 
\begin{align*}
\sum_{T\in \mathcal{T}(i,j, S_\gamma)}\sign(T)m_T&=
\sum_{T\in \mathcal{T}_\gamma(i,j, S_\gamma)}\sign(T)m_T+\sum_{T\in \mathcal{T}^C_\gamma(i,j, S_\gamma)}\sign(T)m_T\\
&= \sum_{T\in \mathcal{T}^C_\gamma(i,j, S_\gamma)}\sign(T)m_T.
\end{align*}
Now denote $\Sigma(i,j|S_\gamma)$ as the $(i,j)$-th entry of the conditional variance matrix given $S_\gamma$. We have 
\begin{align*}
\rho(i,j|S_\gamma) &= \frac{\Sigma(i,j|S_\gamma)}{\sqrt{ \Sigma(i,i|S_\gamma)\Sigma(j,j|S_\gamma)  }}= \frac{\sum_{T\in \mathcal{T}(i,j, S_\gamma)}\sign(T)m_T}{\det(\Sigma(S_\gamma,S_\gamma))} \frac{1}{\sqrt{ \Sigma(i,i|S_\gamma)\Sigma(j,j|S_\gamma)  }}.
\end{align*}
By the fact that $\Sigma(j,j|S_\gamma) \geq \Sigma(j,j|V\setminus\{j\})= 1/\omega_{jj}$,
and $\det(\Sigma(S_\gamma,S_\gamma))\geq M^{-\eta}$ under Assumption 3, we have $|\rho(i,j|S_\gamma)|=O(\beta^\gamma).$
\end{proof}

\section{Choice of $\gamma$}\label{sec:choiceofparameter}
Recall the simulation study in Section 6. We randomly generate 
Erd{\H{o}}s-Renyi graphs and power-law graphs with  $p = |V|=200$ nodes and average node degree 2. 
Edge weights are drawn uniformly from $\pm [.1, 1]$, and $n=100$ observations are generated by the \texttt{rmvDAG} function. 
We randomly choose $q=0.2 p$ nodes as latent variables, and the rest as observed. We include no selection variables. 
We run lFCI with $\gamma=\{2,3,4,5,6,7,8,p/2,p-1\}$, and  
$\alpha=\{10^{-15,},10^{-9},10^{-8},10^{-7},\cdot10^{-6},10^{-3},\cdot10^{-5},10^{-4},10^{-3},10^{-2}\}$.
We repeat the experiment 100 times for each $\alpha$, and compare the
true positive and false positive discoveries of the skeleton of the true PAG.

Figure~\ref{fig:gamma_sensitivity} suggests that as long as $\gamma$ is large enough, the algorithm yields almost identical outputs.
The only exception is the case of power-law graph with
$\gamma=2$, in which the algorithm appears to be too aggressive, and the performance is sub-par on a part of the pROC curve. 
We also point out that in the ``many false positive" part of the curves (i.e., to the right end), methods with smaller $\gamma$ tends to perform better, since they perform fewer tests.  
However, that region is only relevant for ``discovery''. In general, we recommend using $\gamma=O(\log |V|)$.

\section{Simulations with standardized
normal coefficients}\label{sec:longtreksim}

\begin{figure}[t]
    \centering
    \includegraphics[width=0.35\linewidth]{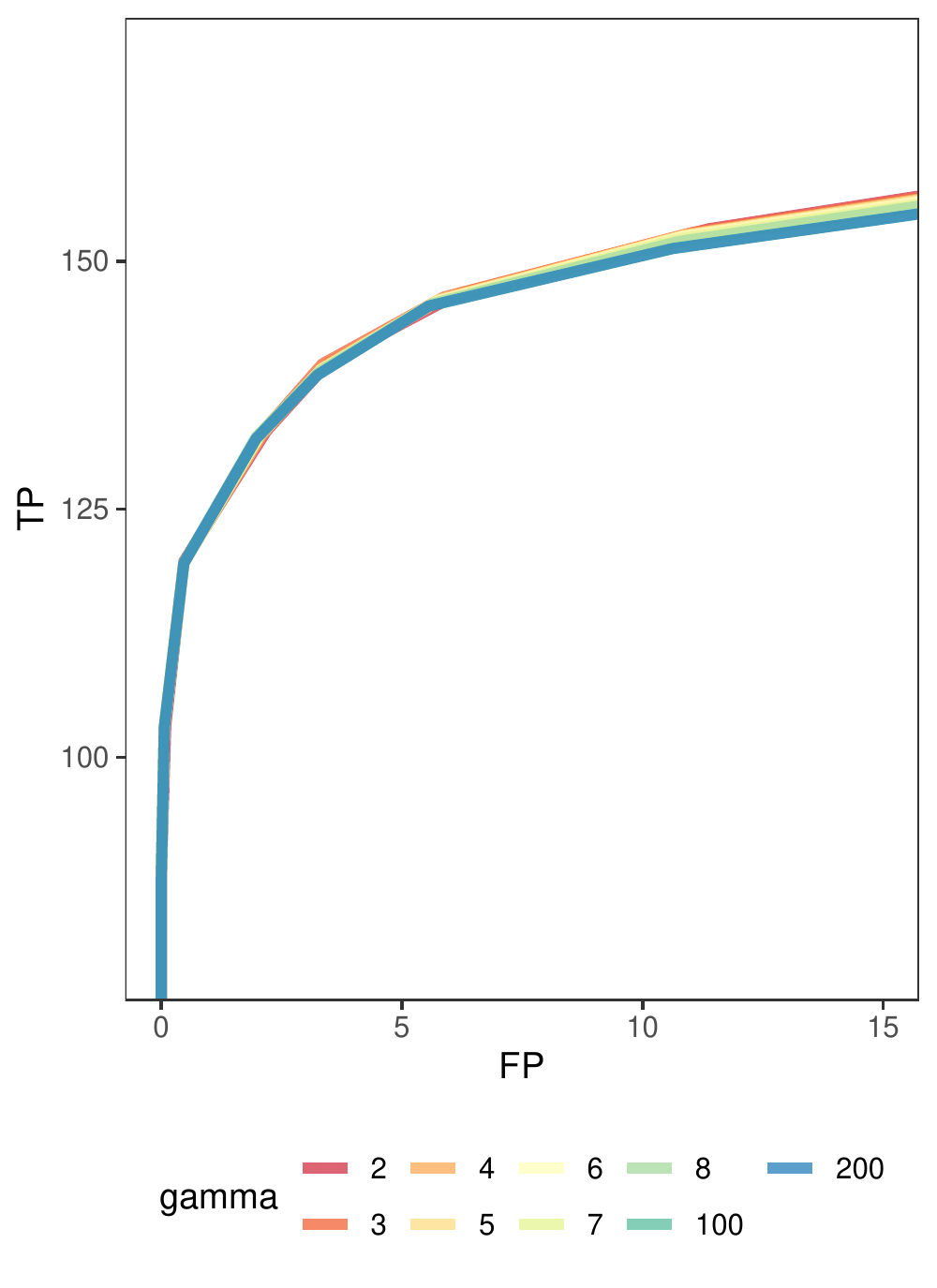}
    \includegraphics[width=0.35\linewidth]{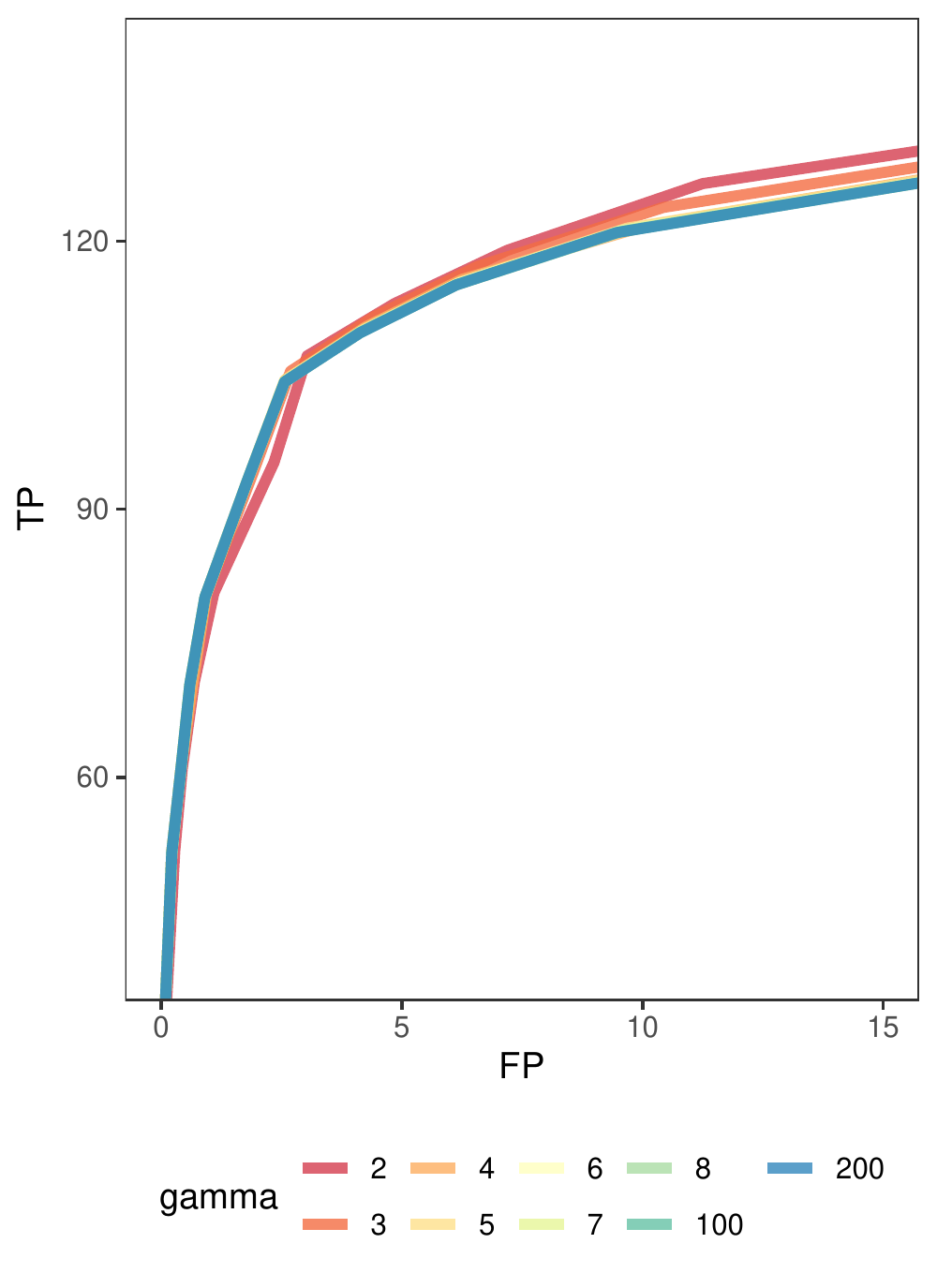}
    \caption{pROC curves of Algorithm 1 with different choices of $\gamma$ performed on ER graphs (left) and power-law graphs (right).
    }
    \label{fig:gamma_sensitivity}
\end{figure}
In this section, we aim to provide evidence that  Assumption 5 is satisfied in many  
common large networks when data is standardized.
The fact that in many common scenarios 
the SEM corresponding to the standardized data has almost all coefficients less than 1
is demonstrated in a simulation 
study in Appendix B of \cite{arjun2018}. 
We further conjecture that the sum of long 
trek weights are also minimal, 
by showing the covariance matrix is well 
approximated using only short treks. 
For this purpose, we generate a random
ER or power-law graph
and draw edge weights from either a 
uniform distribution on $(-10,10)$ 
or a normal distribution with mean 0 and
standard deviation $3$.
We intentionally choose wide ranges for the 
coefficient to allow large fluctuation in the network. 
Then a SEM in the form of (2) is
constructed with this weighted adjacency matrix 
$B$ and random error variance $\Omega$, 
where $\Omega$ is a diagonal matrix 
with diagonal entries drawn from a uniform distribution on $(1,2)$. 
We denote $\Sigma=(I-B)^{-1}\Omega(I-B)^{-\top}$ 
and 
$\widetilde \Sigma$
as its  standardized version, 
where $\widetilde \Sigma_{ij}=\Sigma_{ij}/\sqrt{\Sigma_{ii}\Sigma_{jj}}$ for each $(i,j)$-entry. 
The standardized data can be seem as 
drawn from another SEM corresponding to the same 
graph $G$, but with different set of parameters  
$(\widetilde B,\widetilde \Omega)$, which satisfies
$\widetilde{\Sigma} = (I-\widetilde B)^{-1}\widetilde \Omega
(I-\widetilde B)^{-\top}$. 
We compute the maximal entry-wise difference between $\widetilde \Sigma$ and its short-trek approximation $\widetilde\Sigma_\gamma =
(\sum_{k=0}^\gamma \widetilde B)\widetilde \Omega(\sum_{k=0}^\gamma \widetilde B)^\top$. 
We define $d_\gamma := \max_{i,j}(|\widetilde\Sigma-\widetilde\Sigma_\gamma{}|_{i,j})$ 
and report the smallest $\gamma$ such that $ d_\gamma \leq 10^{-4}$ over 100 iterations. We use the quantity $d_\gamma$ as a surrogate to 
check Assumption 5 because we have shown in the proof of Lemma~5.7 that 
$\norm{\widetilde\Sigma-\widetilde\Sigma_\gamma}=O(\beta^\gamma)$ is
a sufficient condition of  Assumption 5. 

\begin{figure}[t]
    \centering
    \includegraphics[width=0.9\linewidth]{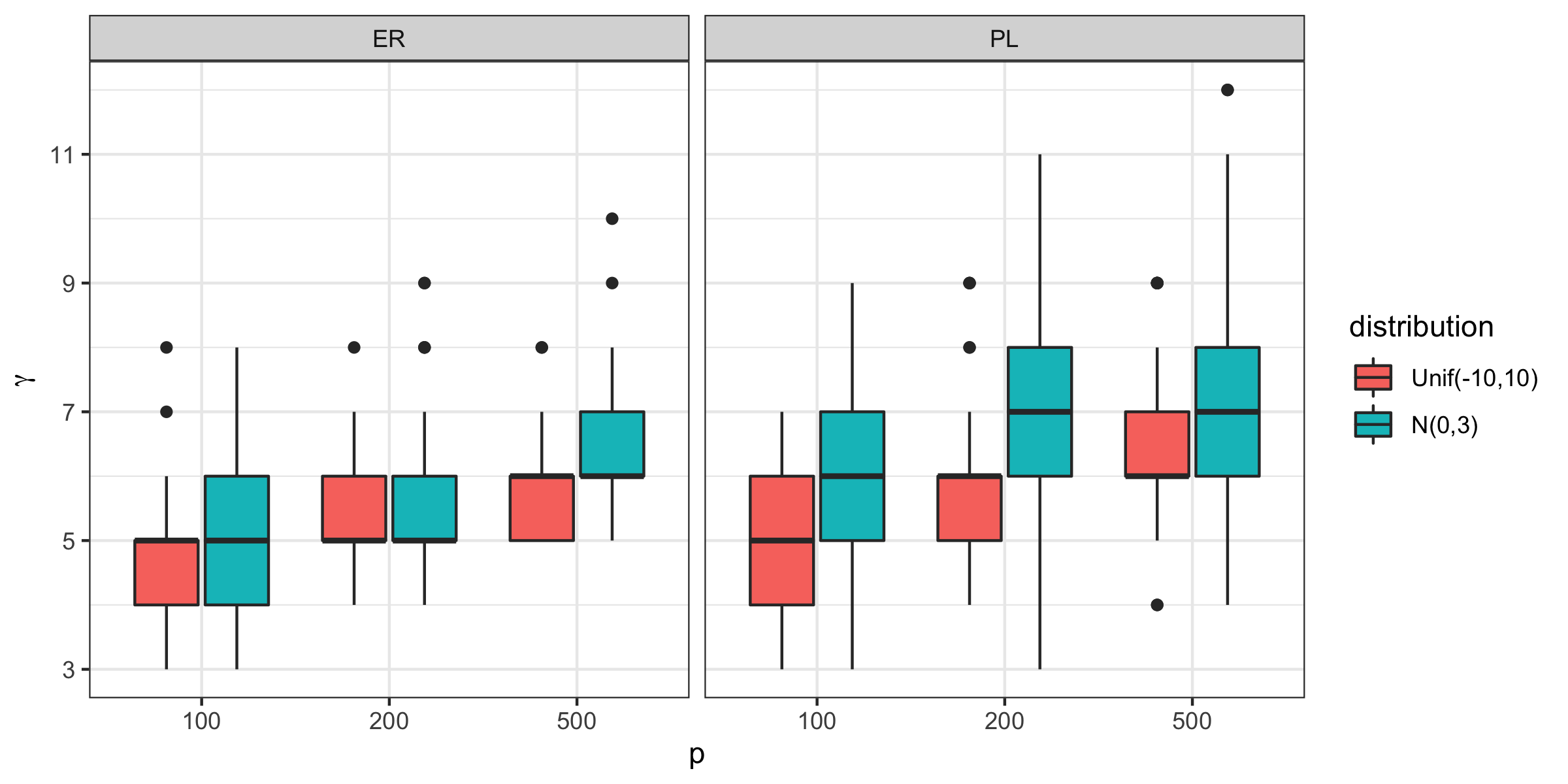}
    \caption{Values of $\min \{\gamma: d_\gamma\leq 10^{-4}\}$ for various settings of ER and power-law graphs, with edge weights drawn from either Uniform $(-10,10)$ or $N(0,3^2)$, and $n=100,200,500$.
    The minimal $\gamma$ values scale with $\log p$.}
    \label{fig:long_trek}
\end{figure}

Figure~\ref{fig:long_trek} demonstrates 
$d_\gamma$ is indeed very small in most  settings with $\gamma\approx \log p$.
The results suggest that Assumption 5 is indeed plausible for standardized data.

\section{Simulations with local moral graphs}\label{sec:localmbsim}
In this section we demonstrate that with large enough $\gamma$, the $\gamma$-local moral graphs usually coincide with moral graphs.  
Following the simulation settings in Section 6, in the numerical study below, we generate random DAGs with
$p\in\{100,200,500\}$ nodes and average node degree $2$. 
Similarly, we also use $\gamma=5,6,7$ for $p=100,200,500$ and randomly choose $q=0.2p$ nodes as latent nodes, and compute the skeleton of the MAG over the observed ones. We do not introduce selection variables, simply because undirected edges do not contribute to the difference between local and non-local Markov blankets. 

We compute the moral graph and $\gamma$-local moral graph for each MAG over 200 simulation iterations, and report the proportion of cases when local moral graph is different from the moral graph.
The results are reported in Table~\ref{tab:localMB}. 
We see for the choice of $\gamma$ used in our simulations, almost all local moral graphs are identical to the moral graphs. This is especially likely to be true for power-law graphs, since they tends to have smaller diameter. 
 
 \begin{table}[t]
 \centering
\begin{tabular}{cccc}
 & Erd\H{o}s-Renyi & Power Law & Watts-Strogatz \\
$p=100$, $\gamma=5$ & 0.99 & 1.00 & 0.96 \\
$p=200$, $\gamma=6$ & 0.99 & 1.00 & 0.97 \\
$p=500$, $\gamma=7$ & 0.99 & 1.00 & 0.97
\end{tabular}
\caption{Proportion of random graphs (out of 200 iterations) with $\gamma$-local moral graph equal to moral graph. }
\label{tab:localMB}
\end{table}
 
\section{Search Pools}
The graph in Figure~\ref{fig:lfcimoretests} is
an example in which lFCI may needs to perform more conditional independence tests than FCI. 
\begin{figure}[btp]
	\centering
	\begin{tikzpicture}[
	> = stealth, 
	shorten > = 1pt, 
	auto,
	node distance = 1cm, 
	semithick 
	]
	\node[c1] (x) at(0,0.6) {x};
	\node[c1] (y) at(4,0.6) {y};
	\node[c1] (1) at(1,0) {1};
	\node[c1] (2) at(2,0) {2};
	\node[c1] (3) at(3,0) {3};

	\draw[->] (x) -- (y);
	\draw[->] (x) -- (1);
	\draw[->] (1) -- (2);
	\draw[->] (2) -- (3);
	\draw[->] (3) -- (y);
	\end{tikzpicture}
	\caption{
	No edge is removed at level 0. At level 1, 
	if the edge $(x,y)$ is checked after 
	removing $(x,2)$ and $(2,y)$, then FCI
	performs less CI tests than lFCI (with $\gamma=5$), because the node $2$ is local to $x$ and $y$ but not in their neighborhoods. 
	}
	\label{fig:lfcimoretests} 
\end{figure}
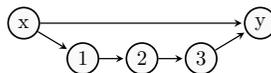
\newpage
\bibliographystyle{apalike}
\bibliography{bibsupp}